\renewcommand{\paragraph}{\roman{paragraph}}
\renewcommand\title[1]{\gdef\@title{\reset@font\Large\bfseries #1}}
\renewcommand\section{\@startsection {section}{1}{\z@}%
                                   {-3.5ex \@plus -1ex \@minus -.2ex}%
                                   {2.3ex \@plus.2ex}%
                                   {\normalfont\large\bfseries}}
\renewcommand\subsection{\@startsection{subsection}{2}{\z@}%
                                     {-3ex\@plus -1ex \@minus -.2ex}%
                                     {1.5ex \@plus .2ex}%
                                     {\normalfont\normalsize\bfseries}}
\renewcommand\subsubsection{\@startsection{subsubsection}{3}{\z@}%
                                     {-2.5ex\@plus -1ex \@minus -.2ex}%
                                     {1.5ex \@plus .2ex}%
                                     {\normalfont\normalsize\bfseries}}
\def\@runningauthor{}\newcommand{\runningauthor}[1]{\def\runningauthor{#1}}
\def\@runningtitle{}\newcommand{\runningtitle}[1]{\def\runningtitle{#1}}
\renewcommand{\ps@plain}{%
\renewcommand{\@evenhead}{\footnotesize\scshape \hfill\runningauthor\hfill}
\renewcommand{\@oddhead}{\footnotesize\scshape \hfill\runningtitle\hfill}}
\newcommand{\F}{\mathbb{F}}
\newcommand\undermat[2]{%
  \makebox[0pt][l]{$\smash{\underbrace{\phantom{%
    \begin{matrix}#2\end{matrix}}}_{\text{$#1$}}}$}#2}
\g@addto@macro\bfseries{\boldmath}
\theoremstyle{plain}
\newtheorem{theorem}{Theorem}[section]
\newtheorem{lem}[theorem]{Lemma}
\newtheorem{cor}[theorem]{Corollary}
\newtheorem{prop}[theorem]{Proposition}
\theoremstyle{definition}
\newtheorem{example}[theorem]{Example}
\theoremstyle{remark}
\newtheorem{remark}[theorem]{Remark}
\runningauthor{}
\date{}
\begin{document}

\title{Characterization and construction of optimal binary linear codes with one-dimensional hull\thanks{The research of Shitao Li and Minjia Shi is supported by the National Natural Science Foundation of China (12071001). The research of Jon-Lark Kim is supported by the National Research Foundation of Korea (NRF) Grant funded by the Korea government (NRF-2019R1A2C1088676).}}
\author{Shitao Li\thanks{Shitao Li is with the School of Mathematical Sciences, Anhui University, Hefei, 230601, China, email: lishitao0216@163.com.}, Minjia Shi\thanks{Minjia Shi is with the Key Laboratory of Intelligent Computing and Signal Processing, Ministry of Education, and also the School of Mathematical Sciences, Anhui University, Hefei, 230601, China; State Key Laboratory of Information Security, Institute of Information Engineering, Chinese Academy of Sciences, Beijing, 100093, China, email: smjwcl.good@163.com}, Jon-Lark Kim\thanks{Jon-Lark Kim is with the Department of Mathematics, Sogang University, Seoul, South Korea, email: jlkim@sogang.ac.kr}}

    \maketitle

\begin{abstract}
The hull of a linear code over finite fields is the intersection of the code and its dual, and linear codes with small hulls have applications in computational complexity and information protection. Linear codes with the smallest hull are LCD codes, which have been widely studied. Recently, several papers were devoted to related LCD codes over finite fields with size greater than 3 to linear codes with one-dimensional or higher dimensional hull. Therefore, an interesting and non-trivial problem is to study binary linear codes with one-dimensional hull with connection to binary LCD codes. The objective of this paper is to study some properties of binary linear codes with one-dimensional hull, and establish their relation with binary LCD codes. Some interesting inequalities are thus obtained. Using such a characterization, we study the largest minimum distance $d_{one}(n,k)$ among all binary linear $[n,k]$ codes with one-dimensional hull. We determine the largest minimum distances $d_{one}(n,n-k)$ for $ k\leq 5$ and $d_{one}(n,k)$ for $k\leq 4$ or $14\leq n\leq 24$. We partially determine the exact value of $d_{one}(n,k)$ for $k=5$ or $25\leq n\leq 30$.
\end{abstract}
{\bf Keywords:} Hull, binary LCD code, minimum distance, building-up construction.\\

\noindent{\bf Mathematics Subject Classification} 94B05 15B05 12E10

\section{Introduction}

The hull of a linear $[n,k]$ code $C$ is defined as ${\rm Hull}(C)=C\cap C^{\perp},$ which was introduced in 1990 by Assmus and Key \cite{A-hull-DM} to classify finite projective planes.
Suppose that the dimension of ${\rm Hull}(C)$ is $\ell$. If $\ell=0$, that is, $C\cap C^{\perp}=\{\textbf 0\}$, then the code $C$ is a linear complementary dual (LCD) code.
If $\ell=k$, that is, $C\subseteq C^{\perp}$, then the code $C$ is a self-orthogonal code. It has been shown that the hull determines the complexity of the algorithms for checking permutation equivalence of two linear codes and for computing the automorphism group of a linear code \cite{J-p-group,Sen-p-e-codes,Sen-S-auto-group}. It turns out that most of the algorithms do not work if the hull is large. Therefore, studying linear codes with small hulls is helpful for these computations. Further, they also have been employed to construct entanglement-assisted quantum error-correction codes \cite{Dcc-EAqecc,zhu-2}.

The smallest dimension of the hull of a linear code is 0, i.e., an LCD code, which was introduced by Massey \cite{LCD-Massey} in order to provide an optimum linear coding
solution for the two-user binary adder channel. Sendrier \cite{LCD-is-good} showed that LCD codes meet the asymptotic Gilbert-Varshamov bound. In 2016, Carlet and Guilley \cite{lcd-appl} investigated an application of binary LCD codes against Side-Channel Attacks (SCA) and Fault Injection Attack (FIA). The study of LCD codes has thus become a hot topic and the reader is referred to \cite{binaryLCD,Shi-1,234-lcD} for recent papers. An interesting result is that Carlet {\em et al.} \cite{LCD-equivalent} showed that any code over $\F_q$ is equivalent to some Euclidean LCD code for $q>3$. This motivates us to study LCD codes, especially binary LCD codes.
Let $d_{ LCD}(n,k)$ denote the largest minimum distance among all binary LCD $[n,k]$ codes. Araya, Harada, and Saito {\em et al.} have made a lot of contributions on the characterization and classification of binary LCD codes. Specifically, the exact value of $d_{ LCD}(n, k)$ for $n\leq 24$ was determined in \cite{bound-12,HS-BLCD-1-16,AH-BLCD-17-24}. The exact value of $d_{LCD}(n,k)$ for $25 \leq n\leq 40$ was partially determined in \cite{2-LCD-30,B-LCD-40,2-LCD-40,LCD-code-Li}. The exact values of $d_{LCD}(n,k)$ and $d_{LCD}(n,n-k)$ for $k\leq 5$ was studied in \cite{LCD-lp,bound-12,HS-BLCD-1-16,AH-BLCD-17-24,AHS-BLCD,ter-11-19}.

The second smallest dimension of the hull of a linear code is 1, i.e., a linear code with one-dimensional hull. Let $d_{one}(n,k)$ denote the largest minimum distance among all binary linear $[n,k]$ codes with one-dimensional hull. Li and Zeng \cite{Li-one-hull} constructed some binary linear codes with one-dimensional hull for $n=8,9,10$ by employing quadratic number fields, partial difference sets, and difference sets. Kim \cite{Kim_preprint} determined the exact value of $d_{one}(n,k)$ for $1 \le k \le n\leq 13$ by a building-up construction. Mankean and Jitman \cite{binary-hull-2} determined the exact value of $d_{one}(n,2)$.
For more related work, readers can refer to \cite{Li-DM,one-qian-ccds,one-qian,Li-DCC}.

 Carlet {\em et al.} \cite{LCD-equivalent} proved that any linear code over $\mathbb F_q$ $(q > 3)$ is equivalent to an Euclidean LCD code.
Consequently, a linear $[n, k, d]$ Euclidean LCD code over $\mathbb F_q$
with $q > 3$ exists if there is a linear $[n, k, d]$ code over $\mathbb F_q$. Recently, Chen \cite{ChenIT} proved that an LCD code over $\F_{2^s}$ ($s\geq 2$) is equivalent to a linear code with one-dimension hull under a weak condition. Kim \cite{Kim_preprint} proved that binary LCD $[n, k]$ codes can produce binary linear $[n+2, k+1]$ codes with one-dimensional hull and did not prove that the converse can be true. On the other hand, our study on binary linear codes with one-dimensional hull is also worth of studying because such codes sometimes have better minimum distances than binary LCD codes with the same length and dimension. Combining all these facts, we ask the natural question of how binary LCD codes are related to binary linear codes with one-dimensional hull. We solve this problem in this paper.

In this paper, we study some properties of binary linear codes with one-dimensional hull, and establish the connection between such codes and binary LCD codes. Some interesting inequalities for $d_{LCD}(n,k)$ and $d_{one}(n,k)$ are obtained. Using the building-up construction in \cite{Kim_preprint} and these inequalities, we extend Kim's results to lengths up to $30$.
Further, we determine the exact values of $d_{one}(n,k)$ and $d_{one}(n,n-k)$ for $k\leq 5$ except for some special types.

This paper is organized as follows. In Section 2, we give some preliminaries. In Section 3, we establish the connection between binary LCD codes and binary linear codes with one-dimensional hull. In Section 4, we study some properties of binary linear codes with one-dimensional hull. In Section 5, we introduction a building-up construction that helps us determined the values of $d_{one}(n,k)$ for $n\leq 30$. In Section 6, we characterize the values of $d_{one}(n,k)$ and $d_{one}(n,n-k)$ for $k \leq 5$. In Section 7, we conclude the paper.

\section{Preliminaries}
\subsection{Binary linear codes and some bounds}
Let $\F_2$ denote the finite field with 2 elements.
For any ${\bf x}\in \F_2^n$, the {\em support} of ${\bf x}=(x_1,x_2,\ldots,x_n)$ is defined as $supp({\bf x})=\{i~|~x_i=1, 1\leq i\leq n\}.$
The {\em (Hamming) weight} ${\rm wt}({\bf x})$ of {\bf x} is the number of nonzero coordinates of {\bf x}, so ${\rm wt}({\bf x})=|supp({\bf x})|$. The {\em distance} between two vectors ${\bf x}$ and ${\bf y}$ is $d({\bf x}, {\bf y})={\rm wt}({\bf x}- {\bf y})$.
The {\em minimum distance} of $C$ is defined by $\min \{ d({\bf x}, {\bf y})~|~  {\bf x},{\bf y} \in C~{\rm and}~ {\bf x}\neq {\bf y}\}$. A binary {\em linear $[n,k]$ code} is a $k$-dimensional subspace of $\F_2^n$. A vector of a binary linear $[n,k]$ code is called a {\em codeword}.
A binary {\em linear $[n,k,d]$ code} $C$ is a binary linear $[n,k]$ code with minimum distance $d$. A {\em generator matrix} for a binary linear $[n, k]$ code $C$ is any
$k\times n$ matrix $G$ whose rows form a basis for $C$.
For any set of $k$ independent columns of a generator matrix $G$, the corresponding
set of coordinates forms an {\em information set} for $C$.

The dual code $C^{\perp}$ of a binary linear $[n,k]$ code $C$ is defined as
$$C^{\perp}=\{\textbf y\in \F_2^n~|~\textbf x\cdot \textbf y=0, {\rm for\ all}\ \textbf x\in C \},$$
where $\textbf x\cdot \textbf y=\sum_{i=1}^n x_iy_i$ for $\textbf x = (x_1,x_2, \ldots, x_n)$ and $\textbf y = (y_1,y_2, \ldots, y_n)\in \F_2^n$. A {\em parity-check matrix} for a linear code $C$ is a generator matrix for the dual code $C^\perp$.
The {\em hull} of a binary linear code $C$ is defined as
${\rm Hull}(C)= C \cap C^\perp.$

It is well-known that the {\em Griesmer bound} \cite[Chap. 2, Section 7]{Huffman} on a binary linear $[n,k,d]$ code is given by $n\geq \sum_{i=0}^{k-1}\left\lceil \frac{d}{2^i}\right\rceil,$ where $\lceil a\rceil$ is the least integer greater than or equal to $a$. A binary $[n,k,d]$ code $C$ is said to a {\em Griesmer code} if $C$ meets the Griesmer bound, i.e., $n=\sum_{i=0}^{k-1}\left\lceil \frac{d}{2^i}\right\rceil.$
The {\em sphere-packing bound} on a binary linear $[n,k,d]$ code is given by
$$2^k\leq \frac{2^n}{\sum_{i=0}^{\lfloor \frac{d-1}{2}\rfloor}\left(\begin{array}{c}
                                                                 n \\
                                                                 i
                                                               \end{array}
\right)},$$
where $\lceil a\rceil$ is the greatest integer less than or equal to $a$.
A vector ${\bf x} = (x_1, x_2, \ldots , x_n)\in \F^n_2$ is {\em even-like} if $\sum_{i=1}^nx_i=0$ and is {\em odd-like} otherwise. A binary code is said to be {\em even-like} if it has only even-like codewords, and is said to be {\em odd-like} if it is not even-like \cite{Huffman}. A binary vector is even-like if and only if it has even weight; so the concept of even-like vectors is indeed a generalization of even weight binary vectors \cite[p. 12]{Huffman}.

\subsection{Characterization of binary linear codes with small hulls}
Carlet {\em et al.} \cite{binaryLCD} presented a new characterization of binary LCD codes, and solved a conjecture proposed by Galvez {\em et al.} \cite{bound-12} on the minimum distance of binary LCD codes. We introduce the new characterization of binary LCD codes as follows.

\begin{theorem}{\rm\cite[Theorem 3]{binaryLCD}}\label{odd-like-lCD}
Let $C$ be an odd-like binary linear $[n,k]$ code. Then $C$ is LCD if and only if there exists a basis $ {\bf c}_1, {\bf c}_2,\ldots, {\bf c}_k$ of $C$ such that for any $i, j\in \{1, 2, \ldots , k\}$, ${\bf c}_i \cdot {\bf c}_j$ equals $1$ if $i = j$ and equals $0$ if $i \neq j$.
\end{theorem}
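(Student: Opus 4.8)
The plan is to translate the statement into the language of the symmetric bilinear form $B(\mathbf{x},\mathbf{y})=\mathbf{x}\cdot\mathbf{y}$ restricted to $C$. First I would record the elementary observation that, for $\mathbf{x}\in C$, one has $\mathbf{x}\in{\rm Hull}(C)$ precisely when $\mathbf{x}\cdot\mathbf{c}=0$ for every $\mathbf{c}\in C$; hence ${\rm Hull}(C)$ is exactly the radical of $B|_C$, and $C$ is LCD if and only if $B|_C$ is nondegenerate. The direction $(\Leftarrow)$ is then immediate and does not even use the odd-like hypothesis: if a basis $\mathbf{c}_1,\dots,\mathbf{c}_k$ with $\mathbf{c}_i\cdot\mathbf{c}_j=\delta_{ij}$ exists, the generator matrix $G$ formed by these rows satisfies $GG^{T}=I_k$, which is invertible, so $B|_C$ is nondegenerate and $C$ is LCD.

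For the direction $(\Rightarrow)$, assume $C$ is LCD, so $B|_C$ is a nondegenerate symmetric bilinear form over $\F_2$ of rank $k$; the task is to produce an orthonormal basis. The key role of the odd-like hypothesis is that $C$ contains a codeword of odd weight, and since $\mathbf{x}\cdot\mathbf{x}={\rm wt}(\mathbf{x})\bmod 2$ over $\F_2$, this forces $B|_C$ to be \emph{non-alternating}: there is some $\mathbf{c}\in C$ with $\mathbf{c}\cdot\mathbf{c}=1$. I would then prove, by induction on $k$, the general fact that every non-alternating nondegenerate symmetric bilinear form over $\F_2$ admits an orthonormal basis. Picking $\mathbf{c}_1$ with $\mathbf{c}_1\cdot\mathbf{c}_1=1$ gives a nondegenerate line, so $C=\langle\mathbf{c}_1\rangle\oplus\mathbf{c}_1^{\perp}$ (orthogonal complement taken inside $C$), with $B$ still nondegenerate on $\mathbf{c}_1^{\perp}$; if this restriction is again non-alternating, the induction closes immediately.

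The main obstacle is exactly the remaining case, where $\mathbf{c}_1^{\perp}$ is \emph{alternating}. Then $k-1$ is even and $\mathbf{c}_1^{\perp}$ carries a symplectic basis, so there is a hyperbolic pair $\mathbf{x},\mathbf{y}$ with $\mathbf{x}\cdot\mathbf{y}=1$ and $\mathbf{x}\cdot\mathbf{x}=\mathbf{y}\cdot\mathbf{y}=0$, both orthogonal to $\mathbf{c}_1$. The crucial computation is that the three-dimensional space $\langle\mathbf{c}_1,\mathbf{x},\mathbf{y}\rangle$, whose form is $\langle 1\rangle\perp H$ with $H$ the hyperbolic plane, is itself orthonormalizable: the vectors
$$\mathbf{f}_1=\mathbf{c}_1+\mathbf{x},\qquad \mathbf{f}_2=\mathbf{c}_1+\mathbf{y},\qquad \mathbf{f}_3=\mathbf{c}_1+\mathbf{x}+\mathbf{y}$$
form an orthonormal basis of $\langle\mathbf{c}_1,\mathbf{x},\mathbf{y}\rangle$, as a direct check of the six inner products shows (this is where the characteristic-$2$ identity $2=0$ is used). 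Because any previously constructed orthonormal vector is orthogonal to $\mathbf{c}_1$, $\mathbf{x}$, and $\mathbf{y}$, this local replacement absorbs one hyperbolic pair into the orthonormal set while leaving the rest of the symplectic basis untouched; iterating it, using one of $\mathbf{f}_1,\mathbf{f}_2,\mathbf{f}_3$ as the new self-product-$1$ seed, converts the remaining hyperbolic pairs into orthonormal vectors two at a time until the whole of $C$ is exhausted.

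Assembling these pieces yields an orthonormal basis $\mathbf{c}_1,\dots,\mathbf{c}_k$ of $C$, i.e.\ $\mathbf{c}_i\cdot\mathbf{c}_j=\delta_{ij}$, completing $(\Rightarrow)$. I expect the bookkeeping in the absorption step to be the only delicate point: one must verify that introducing the triple $\mathbf{f}_1,\mathbf{f}_2,\mathbf{f}_3$ does not disturb orthogonality with the vectors chosen earlier, which holds because those vectors lie in the orthogonal complement of $\langle\mathbf{c}_1,\mathbf{x},\mathbf{y}\rangle$. It is worth emphasizing that the odd-like hypothesis is genuinely needed: an even-like LCD code would carry an alternating form, which can never be diagonalized to the identity over $\F_2$, so no such orthonormal basis could exist.
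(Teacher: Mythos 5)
The paper offers no proof of this statement for you to be compared against: it is imported verbatim as Theorem 3 of \cite{binaryLCD}, so your argument must be judged on its own, and it is correct. Your reduction of the LCD property to nondegeneracy of $B|_C$ (equivalently, invertibility of $GG^T$, which is Lemma \ref{lem-2.3}) is right, as is the easy direction via $GG^T=I_k$. For the hard direction, the odd-like hypothesis supplies a vector with $\mathbf{c}\cdot\mathbf{c}=1$ because $\mathbf{x}\cdot\mathbf{x}\equiv{\rm wt}(\mathbf{x})\pmod 2$, and you then prove the classical fact that a nondegenerate, non-alternating symmetric bilinear form over $\F_2$ admits an orthonormal basis. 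The one genuinely delicate step, the absorption of a hyperbolic pair, is verified correctly: if $\mathbf{c}_1\cdot\mathbf{c}_1=1$, $\mathbf{x}\cdot\mathbf{y}=1$, $\mathbf{x}\cdot\mathbf{x}=\mathbf{y}\cdot\mathbf{y}=0$ and $\mathbf{c}_1$ is orthogonal to $\mathbf{x}$ and $\mathbf{y}$, then $\mathbf{c}_1+\mathbf{x}$, $\mathbf{c}_1+\mathbf{y}$, $\mathbf{c}_1+\mathbf{x}+\mathbf{y}$ is indeed an orthonormal basis of $\langle\mathbf{c}_1,\mathbf{x},\mathbf{y}\rangle$ (all six inner products check, and the three vectors sum to $\mathbf{c}_1$, so they span), and the seed-passing iteration preserves orthogonality because the two banked vectors at each stage are orthogonal both to the new seed and to all remaining symplectic pairs, hence to everything produced later. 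One phrasing should be repaired: ``any previously constructed orthonormal vector is orthogonal to $\mathbf{c}_1$, $\mathbf{x}$, and $\mathbf{y}$'' ought to say ``to the current seed and the current pair,'' since after the first absorption the seed is no longer $\mathbf{c}_1$; the justification you give already establishes this corrected claim. This bilinear-form route (splitting off a unit vector, taking a symplectic basis of an alternating complement, and absorbing hyperbolic planes) is in substance the standard argument by which the cited reference establishes the theorem, so your proposal reconstructs rather than circumvents it, and it would serve as a self-contained proof if the paper needed one.
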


\begin{theorem}{\rm\cite[Lemma 7]{binaryLCD}}\label{even-like-lCD}
Let $C$ be an even-like binary linear $[n,k]$ code. Then $C$ is LCD if and only if $k$ is even and there exists a basis ${\bf c}_1, {\bf c}'_{1},\ldots, {\bf c}_{\frac{k}{2}}, {\bf c}'_{\frac{k}{2}}$ of $C$ such that for
any $i, j\in \{1, 2, \ldots , \frac{k}{2}\}$, the following conditions hold\\
$(i)$ $ {\bf c}_i\cdot  {\bf c}_i= {\bf c}'_i\cdot  {\bf c}'_i=0$;\\
$(ii)$ $ {\bf c}_i\cdot  {\bf c}'_j=0$, for $i\neq j$;\\
$(iii)$ $ {\bf c}_i\cdot  {\bf c}'_i=1$.\\
$(iv)$ $c_{i,1}=c'_{i,1}$, where ${\bf c}_i=(c_{i,1},\ldots,c_{i,n})$ and ${\bf c}'_i=(c'_{i,1},\ldots,c'_{i,n})$.
\end{theorem}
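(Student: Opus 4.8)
The plan is to translate the statement into the language of bilinear forms. Recall that a binary linear code $C$ with generator matrix $G$ is LCD if and only if ${\rm Hull}(C)=C\cap C^{\perp}=\{\mathbf 0\}$, equivalently $GG^{T}$ is nonsingular over $\F_2$, equivalently the inner product $(\mathbf x,\mathbf y)\mapsto\mathbf x\cdot\mathbf y$ restricted to $C$ is nondegenerate (its radical is precisely ${\rm Hull}(C)$). The first observation is that since $C$ is even-like, every codeword has even weight, so $\mathbf c\cdot\mathbf c=0$ for all $\mathbf c\in C$; hence the restricted form is \emph{alternating}. The whole argument then reduces to the structure theory of a (possibly degenerate) alternating form over $\F_2$, together with a bookkeeping step for the first coordinate in (iv).

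For the forward direction, assume $C$ is LCD, so the alternating form on $C$ is nondegenerate. A symplectic Gram--Schmidt process over $\F_2$ applies: choose any nonzero $\mathbf c_1\in C$, use nondegeneracy to find $\mathbf c'_1$ with $\mathbf c_1\cdot\mathbf c'_1=1$ (both automatically isotropic), set $H_1=\langle\mathbf c_1,\mathbf c'_1\rangle$, split $C=H_1\perp K_1$ where $K_1$ is the orthogonal complement of $H_1$ inside $C$, and recurse on $K_1$, which again carries a nondegenerate alternating form. This forces $k$ to be even and produces a basis $\mathbf c_1,\mathbf c'_1,\ldots,\mathbf c_{k/2},\mathbf c'_{k/2}$ for which the hyperbolic planes $H_i=\langle\mathbf c_i,\mathbf c'_i\rangle$ are mutually orthogonal; this is exactly conditions (i)--(iii), namely each $H_i$ is hyperbolic and distinct planes pair to $0$.

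The anticipated obstacle is condition (iv), which looks like an awkward extra constraint tied to a single coordinate. The key point that makes it harmless is that it decouples across the hyperbolic planes. Write $\phi(\mathbf x)=x_1$ for the first-coordinate functional and, inside a fixed plane $H_i$, list the three nonzero vectors $\mathbf c_i,\ \mathbf c'_i,\ \mathbf c_i+\mathbf c'_i$. Because the form is alternating all three are isotropic, and any two of them pair to $1$; hence any two distinct nonzero vectors of $H_i$ form a hyperbolic pair satisfying (i) and (iii). A short case check on $(\phi(\mathbf c_i),\phi(\mathbf c'_i))\in\F_2^2$ shows that at least one of the three pairs has both members with equal first coordinate. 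Replacing $\{\mathbf c_i,\mathbf c'_i\}$ by that pair changes vectors only within $H_i$, so orthogonality to every other plane (condition (ii)) is preserved while (i) and (iii) continue to hold and (iv) becomes true. Carrying this out independently in each $H_i$ yields the desired basis.

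For the converse, a basis satisfying (i)--(iii) makes $GG^{T}$ block diagonal with $\tfrac{k}{2}$ blocks $\bigl(\begin{smallmatrix}0&1\\1&0\end{smallmatrix}\bigr)$, whose determinant is $1$; thus $GG^{T}$ is nonsingular and $C$ is LCD, and the alternating form having full even rank confirms that $k$ is even. I expect the only genuinely substantive step to be the symplectic reduction over $\F_2$ in the forward direction; condition (iv), despite its appearance, reduces to the elementary per-plane relabeling described above.
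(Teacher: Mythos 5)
Your argument has the right skeleton, and it is worth noting at the outset that this paper does not prove the statement at all: it is quoted as an iff from \cite{binaryLCD}, so the only proof to compare against is the one in that source, which proceeds exactly as you do --- the hull is the radical of the inner product restricted to $C$, even-likeness makes that restricted form alternating, LCD becomes nondegeneracy of an alternating form, and symplectic Gram--Schmidt over $\F_2$ forces $k$ even and yields mutually orthogonal hyperbolic planes. Your repair of condition (iv) is also correct: in each plane the three nonzero vectors are pairwise hyperbolic pairs, their first coordinates are $a$, $b$, $a+b$, and a one-line case check shows some pair has equal first coordinates; replacing within the plane disturbs nothing else.

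There is, however, one point you must make explicit, because as written your converse contains a claim that is false under the literal statement quoted here. You assert that (i)--(iii) make $GG^T$ block diagonal. That requires reading (ii) as full orthogonality between distinct planes, i.e.\ also $\mathbf{c}_i\cdot\mathbf{c}_j=\mathbf{c}'_i\cdot\mathbf{c}'_j=0$ for $i\neq j$ --- which is how you implicitly read it in your forward direction (``distinct planes pair to $0$''), and which is the reading under which the iff is true. But the version quoted in this paper constrains only $\mathbf{c}_i\cdot\mathbf{c}'_j$ for $i \neq j$, and with that literal reading the converse fails outright: in $\F_2^8$ take $\mathbf{c}_1=(1,1,0,0,0,0,0,0)$, $\mathbf{c}'_1=(1,0,1,0,0,0,0,0)$, $\mathbf{c}_2=(0,1,0,1,0,0,0,0)$, $\mathbf{c}'_2=(0,0,1,1,1,1,0,0)$. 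These are linearly independent, even-like, and satisfy the quoted (i)--(iv), yet $\mathbf{c}_1\cdot\mathbf{c}_2=\mathbf{c}'_1\cdot\mathbf{c}'_2=1$, so the Gram matrix
$$\begin{pmatrix} 0&1&1&0\\ 1&0&0&1\\ 1&0&0&1\\ 0&1&1&0 \end{pmatrix}$$
has rank $2$, and the $[8,4]$ code they span has a two-dimensional hull, hence is not LCD. So your proof is a correct proof of the intended (and true) statement, but the stronger form of (ii) that your converse uses is not implied by conditions (i) and (iii) and must be stated as part of the hypothesis.
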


\begin{lem}{\rm \cite[Proposition 1]{Li-one-hull}}\label{lem-2.3}
Let $C$ be a binary linear $[n,k]$ code with a generator matrix $G$. Then $C$ has $\ell$-dimensional hull if and only if $\ell=k-{\rm rank}(GG^T).$
\end{lem}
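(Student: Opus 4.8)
The plan is to realize the hull as the image under $G$ of the null space of a single $k \times k$ matrix and then invoke the rank-nullity theorem. First I would use the generator matrix to parametrize the code: since ${\rm rank}(G)=k$, the rows of $G$ are linearly independent, so the map $\phi\colon \F_2^k \to C$ defined by $\phi(\mathbf{u})=\mathbf{u}G$ is injective and has image exactly $C$. Hence $\phi$ is a vector-space isomorphism from $\F_2^k$ onto $C$, and in particular it preserves dimensions of subspaces and their preimages.

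Next I would translate the condition ``$\mathbf{u}G \in {\rm Hull}(C)$'' into a linear system. Every codeword $\mathbf{c}=\mathbf{u}G$ already lies in $C$, so $\mathbf{c}\in {\rm Hull}(C)=C\cap C^\perp$ if and only if $\mathbf{c}\in C^\perp$, i.e.\ $\mathbf{c}$ is orthogonal to every codeword of $C$. Because the rows of $G$ span $C$, orthogonality to all of $C$ is equivalent to orthogonality to each row of $G$, which is the single condition $G\mathbf{c}^T=\mathbf{0}$. Substituting $\mathbf{c}=\mathbf{u}G$ yields $GG^T\mathbf{u}^T=\mathbf{0}$. Thus $\mathbf{u}G\in {\rm Hull}(C)$ if and only if $\mathbf{u}$ lies in the right null space $N=\{\mathbf{u}\in\F_2^k : GG^T\mathbf{u}^T=\mathbf{0}\}$ of $GG^T$. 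This reduction of membership in $C^\perp$ to the homogeneous system $GG^T\mathbf{u}^T=\mathbf{0}$ is the one step I would state with care, since it is where the defining property of the dual is converted into the matrix $GG^T$; everything else is formal.

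Finally, I would combine the two steps. The isomorphism $\phi$ carries $N$ bijectively onto ${\rm Hull}(C)$, so $\dim {\rm Hull}(C)=\dim N$. Viewing $GG^T$ as a linear endomorphism of $\F_2^k$, the rank-nullity theorem gives $\dim N = k-{\rm rank}(GG^T)$. Therefore $\ell=\dim {\rm Hull}(C)=k-{\rm rank}(GG^T)$, which is exactly the claimed equivalence (the ``only if'' and ``if'' directions both follow at once since this is an equality of dimensions). I do not anticipate a genuine obstacle here: the argument is pure linear algebra over $\F_2$, and the only bookkeeping worth watching is the row/column convention in $G\mathbf{c}^T$ and the transpose that produces $GG^T$ after the substitution $\mathbf{c}=\mathbf{u}G$.
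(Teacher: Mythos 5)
Your proof is correct, and it coincides with the standard argument behind this result: the paper itself does not prove the lemma (it is quoted from \cite[Proposition 1]{Li-one-hull}), and the proof given there is exactly your route --- parametrize the code by the isomorphism $\mathbf{u}\mapsto \mathbf{u}G$, observe that $\mathbf{u}G\in C\cap C^{\perp}$ if and only if $GG^{T}\mathbf{u}^{T}=\mathbf{0}$, and conclude $\dim\,{\rm Hull}(C)=k-{\rm rank}(GG^{T})$ by rank--nullity. Your transpose bookkeeping $G(\mathbf{u}G)^{T}=GG^{T}\mathbf{u}^{T}$ is handled correctly, so there is no gap.
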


\begin{theorem}\label{thm-one}
Let $C$ be a binary linear $[n,k]$ code. Then $C$ is an odd-like (resp. even-like) binary linear code with one-dimensional hull if and only if there exists a basis $ {\bf c}_1, {\bf c}_2,\ldots, {\bf c}_k$ of $C$ such that the code generated by ${\bf c}_1,{\bf c}_2,\ldots,{\bf c}_{k-1}$ is an odd-like (resp. even-like) binary LCD $[n,k-1]$ code and ${\bf c}_{k}\cdot {\bf c}_{i}=0$ for $1\leq i\leq k$.
\end{theorem}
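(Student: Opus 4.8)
The plan is to reduce everything to the rank of the Gram matrix $GG^{T}$ via Lemma~\ref{lem-2.3}, exploiting the fact that a vector lies in ${\rm Hull}(C)$ precisely when it is orthogonal to every codeword of $C$ (including itself). First I would record two elementary facts over $\F_2$: a vector $\mathbf{x}$ is self-orthogonal, $\mathbf{x}\cdot\mathbf{x}=0$, if and only if it has even weight; and $C$ has one-dimensional hull if and only if ${\rm rank}(GG^{T})=k-1$ by Lemma~\ref{lem-2.3}, while $C'=\langle \mathbf{c}_1,\dots,\mathbf{c}_{k-1}\rangle$ is LCD if and only if its Gram matrix $G'G'^{T}$ is nonsingular (the case $\ell=0$ of Lemma~\ref{lem-2.3}).

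For the forward direction, suppose $C$ has one-dimensional hull and let $\mathbf{c}_k$ be the unique nonzero vector spanning it. Since $\mathbf{c}_k\in C^{\perp}$, we have $\mathbf{c}_k\cdot\mathbf{x}=0$ for every $\mathbf{x}\in C$; in particular $\mathbf{c}_k\cdot\mathbf{c}_i=0$ for $1\le i\le k$ no matter how we complete $\mathbf{c}_k$ to a basis, and $\mathbf{c}_k$ has even weight. I would then complete $\mathbf{c}_k$ to a basis $\mathbf{c}_1,\dots,\mathbf{c}_{k-1},\mathbf{c}_k$ of $C$. With respect to this ordering the Gram matrix has the block form
$$
GG^{T}=\begin{pmatrix} G'G'^{T} & \mathbf{0}\\ \mathbf{0}^{T} & 0\end{pmatrix},
$$
so ${\rm rank}(GG^{T})={\rm rank}(G'G'^{T})$. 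Since the hull is one-dimensional this rank equals $k-1$, forcing $G'G'^{T}$ to be a nonsingular $(k-1)\times(k-1)$ matrix; hence $C'$ is LCD for \emph{any} such completion.

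The only remaining point in this direction is to match the parity type, which is the part that requires a genuine choice rather than bookkeeping. If $C$ is even-like then $C'\subseteq C$ is automatically even-like. If $C$ is odd-like, its even-weight subcode $C_e$ has dimension $k-1$ and contains the even-weight vector $\mathbf{c}_k$; I would pick any odd-weight codeword $\mathbf{v}\in C\setminus C_e$, note that $\{\mathbf{c}_k,\mathbf{v}\}$ is independent, and extend it to a basis so that $\mathbf{v}$ belongs to $C'$, which then contains an odd-weight word and is odd-like. Thus the parity of $C'$ can always be made to agree with that of $C$.

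For the converse, given a basis with $C'=\langle\mathbf{c}_1,\dots,\mathbf{c}_{k-1}\rangle$ LCD and $\mathbf{c}_k\cdot\mathbf{c}_i=0$ for all $i$, the conditions say exactly that $\mathbf{c}_k\in C\cap C^{\perp}$, so ${\rm Hull}(C)\ne\{\mathbf{0}\}$, while the same block decomposition gives ${\rm rank}(GG^{T})={\rm rank}(G'G'^{T})=k-1$; Lemma~\ref{lem-2.3} then yields a hull of dimension exactly $1$. Finally I would transfer the parity: an odd-weight word of $C'$ is an odd-weight word of $C$, and if $C'$ is even-like then every codeword of $C$ has the form $\mathbf{u}+\varepsilon\mathbf{c}_k$ with $\mathbf{u}\in C'$ and $\varepsilon\in\{0,1\}$, whose weight is congruent modulo $2$ to ${\rm wt}(\mathbf{u})+\varepsilon\,{\rm wt}(\mathbf{c}_k)\equiv 0$ since both $\mathbf{u}$ and $\mathbf{c}_k$ are even-like. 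The main obstacle is this parity-matching step in the forward direction; once the unique hull generator is identified, the LCD property of the complement is free from the Gram-matrix block structure, and the rest is routine.
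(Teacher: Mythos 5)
Your proof is correct, and in one direction it takes a genuinely different (and more self-contained) route than the paper. For the implication from ``$C$ has one-dimensional hull'' to the existence of the basis, the paper does not use the Gram matrix at all: it extends a generator $\mathbf{c}_k$ of the hull to a basis and invokes \cite[Lemma 22]{LCD-equivalent} as a black box to conclude that $\mathbf{c}_1,\ldots,\mathbf{c}_{k-1}$ generate an LCD code, and it leaves the parity claim as ``easy to check''. You instead derive this implication from Lemma~\ref{lem-2.3} alone: because $\mathbf{c}_k\in C^\perp$ and $\mathbf{c}_k\cdot\mathbf{c}_k=0$, the Gram matrix $GG^T$ is block diagonal with a $1\times 1$ zero block, so ${\rm rank}(G'G'^{T})={\rm rank}(GG^{T})=k-1$ and $C'$ is LCD, for \emph{any} completion of $\mathbf{c}_k$ to a basis. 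This makes the whole theorem follow from a single rank computation, with no external citation, which is a real gain in self-containedness. In the reverse implication your block-diagonal argument is essentially the paper's own ($\det(GG^T)=0$ together with the nonsingular $(k-1)\times(k-1)$ submatrix $G'G'^{T}$), and your explicit mod-$2$ weight argument supplies the parity transfer that the paper omits. One simplification is available: your selection of an odd-weight vector $\mathbf{v}$ in the forward direction is unnecessary. Since $\mathbf{c}_k$ has even weight, every codeword of $C$ equals $\mathbf{u}$ or $\mathbf{u}+\mathbf{c}_k$ with $\mathbf{u}\in C'$ and has the same weight parity as $\mathbf{u}$; hence any completion $C'$ automatically has the same parity type as $C$, and what you call the ``main obstacle'' is in fact automatic by the same weight congruence you already use in the converse.
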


\begin{proof}
Assume that there exists a basis $ {\bf c}_1, {\bf c}_2,\ldots, {\bf c}_k$ of $C$ such that the code generated by $ {\bf c}_1, {\bf c}_2,\ldots, {\bf c}_{k-1}$ is an odd-like (resp. even-like) binary LCD $[n,k-1]$ code and ${\bf c}_{k}\cdot {\bf c}_{i}=0$ for $1\leq i\leq k$. Let $G$ be a matrix whose rows are ${\bf c}_1, {\bf c}_2,\ldots, {\bf c}_k$. Then $\det(GG^T)=0$, which implies ${\rm rank}(GG^T)\leq k-1$. Since $GG^T$ contains a $(k-1)\times (k-1)$ submatrix of rank $k-1$, ${\rm rank}(GG^T)=k-1$.
By Lemma \ref{lem-2.3}, we obtain that $C$ is an odd-like (resp. even-like) binary linear code with one-dimensional hull.

Conversely, if $C$ is a binary linear code with one-dimensional hull, then there exists a basis ${\bf c}_1,{\bf c}_2,\ldots,{\bf c}_k$ of $C$ such that ${\rm Hull}(C)=\{{\bf 0},{\bf c}_k\}$. From \cite[Lemma 22]{LCD-equivalent}, the code $C'$ generated by ${\bf c}_1,{\bf c}_2,\ldots,{\bf c}_{k-1}$ is a binary LCD $[n,k-1]$ code. Moreover, it is easy to check that $C$ is odd-like (resp. even-like) if and only if $C'$ is odd-like (resp. even-like).
\end{proof}

In the following, we give a necessary condition for a binary linear code with one-dimensional hull to be even-like.

\begin{lem}\label{lem-k-odd}
If there exists an even-like binary linear $[n,k]$ code with one-dimensional hull, then $k$ is odd.
\end{lem}

\begin{proof}
Let $C$ be an even-like binary linear $[n,k]$ code with one-dimensional hull and a basis ${\bf c}_1,{\bf c}_2,\ldots,{\bf c}_k$ such that ${\rm Hull}(C)=\{{\bf0},{\bf c}_k\}$. Let $G'$ be a matrix whose rows are ${\bf c}_1,{\bf c}_2,\ldots,{\bf c}_{k-1}$.
By \cite[Lemma 22]{LCD-equivalent}, the generator matrix $G'$ generates an even-like binary LCD $[n,k-1]$ code. It follows from Theorem \ref{even-like-lCD} that $k-1$ is even. Hence $k$ is odd.
\end{proof}

\subsection{The shortened codes and the punctured codes}

Let $C$ be a binary linear $[n, k, d]$ code, and let $T$ be a set of $t$ coordinate positions in $C$. We puncture $C$ by deleting all the coordinates in $T$ in each codeword of $C$. The resulting code is still linear and has length $n - t$. We denote the punctured code by $C^T$.
Consider the set $C(T)$ of codewords which are $0$ on $T$; this set is a subcode of $C$. Puncturing $C(T)$ on $T$ gives a binary code of length $n -t$ called the code shortened on $T$ and denoted $C_T$.

\begin{lem}\label{lem-shorten-puncture}{\rm\cite[Theorem 1.5.7]{Huffman}}
Let $C$ be a binary linear $[n,k,d]$ code. Let $T$ be a set of $t$ coordinates. Then:
\begin{itemize}
  \item [(1)] $(C^\perp)_T=(C^T)^\perp$ and $(C^\perp)^T=(C_T)^\perp$, and
  \item [(2)] if $t<d$, then $C^T$ and $(C^\perp)_T$ have dimensions $k$ and $n- t- k$, respectively.
\end{itemize}
\end{lem}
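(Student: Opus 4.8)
The plan is to prove the two duality identities in (1) first, and then read off the dimension counts in (2). The key organizing observation is that puncturing and shortening on $T$ are dual operations, so the two identities in (1) are really equivalent: once I establish the single identity $(C^T)^\perp=(C^\perp)_T$ for an arbitrary code, I can apply it with $C^\perp$ in place of $C$ to obtain $((C^\perp)^T)^\perp=((C^\perp)^\perp)_T=C_T$, and dualizing this gives $(C^\perp)^T=(C_T)^\perp$. Thus only one identity requires a hands-on argument.

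To prove $(C^T)^\perp=(C^\perp)_T$ I would argue by mutual inclusion, working with the complement $\bar T=\{1,\dots,n\}\setminus T$, on whose coordinates both the punctured and shortened codes live. The central device is the ``extend by zero'' map sending a vector $\textbf y\in\F_2^{\bar T}$ to the vector $\textbf z\in\F_2^n$ with $\textbf z|_{\bar T}=\textbf y$ and $\textbf z|_T=\textbf 0$; for such a $\textbf z$ one has $\textbf z\cdot\textbf x=\textbf y\cdot(\textbf x|_{\bar T})$ for every $\textbf x\in\F_2^n$, since the coordinates in $T$ contribute nothing to the inner product. For the inclusion $(C^\perp)_T\subseteq(C^T)^\perp$, take $\textbf y=\textbf z|_{\bar T}$ with $\textbf z\in C^\perp$ and $\textbf z|_T=\textbf 0$; then for any $\textbf x\in C$ we get $\textbf y\cdot(\textbf x|_{\bar T})=\textbf z\cdot\textbf x=0$, so $\textbf y$ is orthogonal to every codeword of $C^T$. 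For the reverse inclusion, any $\textbf y\in(C^T)^\perp$ extends by zero to a $\textbf z$ that annihilates all of $C$, hence lies in $C^\perp$ and vanishes on $T$, so $\textbf y\in(C^\perp)_T$.

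For (2), I would first show that puncturing on $t<d$ coordinates is injective on $C$: if $\textbf x\in C$ satisfies $\textbf x|_{\bar T}=\textbf 0$, then $\textbf x$ is supported inside $T$, so ${\rm wt}(\textbf x)\le t<d$, forcing $\textbf x=\textbf 0$. Hence the restriction map $C\to C^T$ is a linear bijection onto its image and $\dim C^T=k$. The dimension of $(C^\perp)_T$ then follows immediately from part (1): since $(C^\perp)_T=(C^T)^\perp$ and $C^T$ is a code of length $n-t$ and dimension $k$, its dual has dimension $(n-t)-k=n-t-k$.

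The argument is short, and the only real care needed is bookkeeping: keeping straight which ambient space ($\F_2^n$ versus $\F_2^{\bar T}$) each code lives in, and checking that the extend-by-zero identity $\textbf z\cdot\textbf x=\textbf y\cdot(\textbf x|_{\bar T})$ is invoked in the correct direction in each inclusion. The mild subtlety I would flag is that the dimension formula for $(C^\perp)_T$ genuinely relies on the hypothesis $t<d$ through part (1): without injectivity of puncturing, $\dim C^T$ could drop below $k$ and the dual dimension would rise correspondingly, so the two claims in (2) are not independent.
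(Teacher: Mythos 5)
Your proof is correct. Note that the paper itself gives no proof of this lemma --- it is quoted verbatim from \cite[Theorem 1.5.7]{Huffman} --- so there is nothing internal to compare against; your argument (mutual inclusion via the extend-by-zero map for the identity $(C^T)^\perp=(C^\perp)_T$, the second identity by substituting $C^\perp$ and dualizing, and injectivity of puncturing when $t<d$ for the dimension count) is precisely the standard textbook proof of that cited result, with all the inclusions and the use of the hypothesis $t<d$ handled correctly.
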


\begin{lem}\label{thm-SO+Hull}
Let $C$ be a binary linear $[n,k]$ code. Let $s$ and $t$ be two integers such that $s\geq t$.
Then $C$ has $s$-dimensional hull if and only if there are $C_1$ and $C_2$ such that
$$C=C_1\oplus C_2,~C_1\subseteq C_2^\perp,$$
 where $C_1$ is a binary self-orthogonal $[n,t]$ code and $C_2$ is a binary linear $[n,k-t]$ code with $(s-t)$-dimensional hull.
\end{lem}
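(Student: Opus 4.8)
The plan is to reduce both implications to a single structural identity: I will show that whenever $C=C_1\oplus C_2$ with $C_1$ self-orthogonal and $C_1\subseteq C_2^\perp$, one has
$${\rm Hull}(C)=C_1\oplus {\rm Hull}(C_2).$$
Once this identity is available, both directions follow by a dimension count, since $\dim C_1=t$ forces $\dim {\rm Hull}(C_2)=\dim {\rm Hull}(C)-t$.

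First I would record the basic observation that, under the decomposition hypotheses, $C_1\subseteq {\rm Hull}(C)$. Indeed, $C_1\subseteq C_2^\perp$ is equivalent to $C_1\perp C_2$, and $C_1$ self-orthogonal gives $C_1\perp C_1$; hence $C_1\perp(C_1+C_2)=C$, so $C_1\subseteq C^\perp$, and as $C_1\subseteq C$ we get $C_1\subseteq C\cap C^\perp={\rm Hull}(C)$. The same computation shows ${\rm Hull}(C_2)\subseteq {\rm Hull}(C)$: any $x\in C_2\cap C_2^\perp$ lies in $C$, is orthogonal to $C_2$ by definition, and is orthogonal to $C_1$ because $x\in C_2\subseteq C_1^\perp$, so $x\in C^\perp$. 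Since $C_1\cap {\rm Hull}(C_2)\subseteq C_1\cap C_2=\{\mathbf{0}\}$, the sum $C_1+{\rm Hull}(C_2)$ is direct and contained in ${\rm Hull}(C)$. For the reverse inclusion I would take $x\in {\rm Hull}(C)$ and write $x=x_1+x_2$ with $x_1\in C_1$, $x_2\in C_2$; from $x\in C^\perp$ we get $x\perp C_2$, and from $x_1\in C_1\subseteq C_2^\perp$ we get $x_1\perp C_2$, so $x_2=x-x_1$ satisfies $x_2\perp C_2$, i.e. $x_2\in C_2^\perp$, whence $x_2\in {\rm Hull}(C_2)$ and $x\in C_1\oplus {\rm Hull}(C_2)$. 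This proves the identity.

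The backward direction is then immediate: given the stated decomposition, the identity yields $\dim {\rm Hull}(C)=t+(s-t)=s$, so $C$ has $s$-dimensional hull. For the forward direction, suppose $C$ has $s$-dimensional hull and set $H={\rm Hull}(C)$. Since $H\subseteq C$ and $H\subseteq C^\perp$, the code $H$ is self-orthogonal, so every subspace of $H$ is self-orthogonal. Using $s\geq t$, I would choose any $t$-dimensional subspace $C_1\subseteq H$ and any complement $C_2$ of $C_1$ inside $C$, so that $C=C_1\oplus C_2$ with $\dim C_2=k-t$. Because $C_1\subseteq H\subseteq C^\perp$ and $C_2\subseteq C$, automatically $C_1\subseteq C_2^\perp$, and $C_1$ is self-orthogonal; thus the decomposition hypotheses hold and the identity applies, giving $\dim {\rm Hull}(C_2)=\dim H-\dim C_1=s-t$. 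Hence $C_2$ is a binary $[n,k-t]$ code with $(s-t)$-dimensional hull and $C_1$ is a binary self-orthogonal $[n,t]$ code, exactly as required.

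The only point demanding care, and the step I expect to be the crux, is the reverse inclusion ${\rm Hull}(C)\subseteq C_1\oplus {\rm Hull}(C_2)$: one must check that the $C_2$-component $x_2$ of a hull element inherits self-orthogonality within $C_2$, which is precisely where the hypothesis $C_1\subseteq C_2^\perp$ (rather than merely the self-orthogonality of $C_1$) is used. The condition $s\geq t$ enters only in the forward direction, to guarantee that a $t$-dimensional self-orthogonal subcode can be carved out of the $s$-dimensional self-orthogonal hull; it is worth noting that the argument shows \emph{any} complement $C_2$ of $C_1$ in $C$ works, the hull dimension of $C_2$ being forced by the identity independently of the choice.
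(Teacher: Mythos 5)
Your proof is correct and follows essentially the same route as the paper's: both arguments establish the key identity ${\rm Hull}(C)=C_1\oplus {\rm Hull}(C_2)$ under the decomposition hypotheses, settle the backward direction by the dimension count $t+(s-t)=s$, and handle the forward direction by choosing $C_1$ as a $t$-dimensional subspace of the (self-orthogonal) hull with $C_2$ a complement of $C_1$ in $C$, then invoking the identity again. The only cosmetic difference is that you prove the inclusion ${\rm Hull}(C)\subseteq C_1\oplus{\rm Hull}(C_2)$ directly by decomposing a hull element, whereas the paper argues the contrapositive, exhibiting for each $c\in C_1\oplus(C_2\setminus{\rm Hull}(C_2))$ a witness $c_3\in C_2$ with $c\cdot c_3\neq 0$.
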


\begin{proof}
Assume that there is a binary self-orthogonal $[n,t]$ code $C_1$ and a binary linear $[n,k-t]$ code $C_2$ with  $(s-t)$-dimensional hull such that
$$C=C_1\oplus C_2,~C_1\subseteq C_2^\perp.$$
Then $C^\perp=C_1^\perp\cap C_2^\perp$ and $C_2\subseteq C_1^\perp$. Hence
\begin{align*}
  {\rm Hull}(C_1) & =C_1\cap C_1^\perp\subseteq C_1^\perp\cap C_2^\perp=C^\perp,  \\
  {\rm Hull}(C_2) & =C_2\cap C_2^\perp\subseteq C_1^\perp\cap C_2^\perp=C^\perp.
\end{align*}
Since $C^\perp$ is linear, ${\rm Hull}(C_1)\oplus {\rm Hull}(C_2) \subseteq C^\perp$. Obviously, ${\rm Hull}(C_1)\oplus {\rm Hull}(C_2) \subseteq C_1\oplus C_2=C$. Hence
 $$C_1\oplus {\rm Hull}(C_2)={\rm Hull}(C_1)\oplus {\rm Hull}(C_2)\subseteq C\cap C^\perp= {\rm Hull}(C).$$

 On the other hand, it can be checked that ${\rm Hull}(C)\subseteq C$ and $C\setminus (C_1\oplus {\rm Hull}(C_2))=(C_1\oplus C_2)\setminus(C_1\oplus {\rm Hull}(C_2))=C_1\oplus (C_2\setminus {\rm Hull}(C_2))$.
 For any ${\bf c}\in C_1\oplus (C_2\setminus {\rm Hull}(C_2))$, there exist ${\bf c}_1\in C_1$ and ${\bf c}_2\in C_2\setminus {\rm Hull}(C_2)$ such that
 ${\bf c}={\bf c}_1+{\bf c}_2$. Since ${\bf c}_2\in C_2\setminus {\rm Hull}(C_2)$, there exists ${\bf c}_3\in C_2$ such that ${\bf c}_2\cdot {\bf c}_3\neq 0$. Since $C_1\subseteq C_2^\perp$, ${\bf c}_1\cdot {\bf c}_3=0$. Then ${\bf c}\cdot {\bf c}_3={\bf c}_2\cdot {\bf c}_3\neq 0$.
 Hence ${\bf c}\notin {\rm Hull}(C)$ for any ${\bf c}\in C\setminus (C_1\oplus {\rm Hull}(C_2))= C_1\oplus (C_2\setminus {\rm Hull}(C_2))$. It follows that
 $${\rm Hull}(C)\subseteq C\setminus(C\setminus (C_1\oplus {\rm Hull}(C_2)))=C_1\oplus {\rm Hull}(C_2).$$
It turns out that ${\rm Hull}(C)=C_1\oplus {\rm Hull}(C_2)$ and
$$\dim({\rm Hull}(C))=\dim(C_1)+\dim({\rm Hull}(C_2))=t+s-t=s.$$
Hence, $C$ has $s$-dimensional hull.

Conversely, assume that $C$ has $s$-dimensional hull. Let $\{\alpha_1,\alpha_2,\ldots,\alpha_k\}$ be a basis of $C$ such that
$\{\alpha_1,\alpha_2,\ldots,\alpha_s\}$ is a basis of ${\rm Hull}(C)$.
Then since $t \le s$, the code $C_1$ generated by $\alpha_{1},\alpha_{2},\ldots,\alpha_{t}$ is self-orthogonal.
Let $C_2$ be a linear code generated by $\alpha_{t+1},\alpha_{t+2},\ldots,\alpha_{k}$.
Thus $C_1\subseteq C_2^\perp$ and $C=C_1\oplus C_2$.
It turns out that $C_2$ is a binary linear $[n,k-t]$ code with $(s-t)$-dimensional hull.
Otherwise, based on the discussion above, we have
$$\dim({\rm Hull}(C))=\dim(C_1)+\dim({\rm Hull}(C_2))\neq t+(s-t)=s,$$
which is a contradiction. This completes the proof.
\end{proof}

Bouyuklieva \cite{B-LCD-40} established a relation between $C$ and a shortened code of $C$. We will further subdivide this result.

\begin{prop}\label{prop-short-puncture}
Let $C$ be a binary linear $[n,k]$ code with $s$-dimensional hull. For $1\leq i\leq n$, let $C_{\{t\}}$ and $C^{\{t\}}$ be the shortened code and the punctured code of $C$ on $t$-th coordinate, respectively. Then we have the following result.
 \begin{itemize}
   \item [(1)] Assume that $s\geq 1$ and $t\in T$ for some information set $T$ of ${\rm Hull}(C)$. Then
   $$\dim\left({\rm Hull}(C^{\{t\}})\right)=\dim\left({\rm Hull}(C_{\{t\}})\right)=s-1.$$
   \item [(2)] Assume that $t\notin T$ for any information set $T$ of ${\rm Hull}(C)$. If $s\geq 1$, then
   $$s-1\leq \dim\left({\rm Hull}(C^{\{t\}})\right),\dim\left({\rm Hull}(C_{\{t\}})\right)\leq s+1.$$
   If $s=0$, then
   $$\dim\left({\rm Hull}(C^{\{t\}})\right)\leq 1~{\rm and}~\dim\left({\rm Hull}(C_{\{t\}})\right)\leq 1.$$
 \end{itemize}
\end{prop}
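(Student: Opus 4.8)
My plan is to reduce everything to a single estimate for shortened codes, via two preliminary observations. \emph{First}, I would restate the information-set hypothesis intrinsically: a coordinate $t$ lies in some information set of ${\rm Hull}(C)$ exactly when the $t$-th column of a generator matrix of ${\rm Hull}(C)$ is nonzero, that is, when some hull codeword is nonzero at $t$. Hence part (1) is precisely the case where ${\rm Hull}(C)$ does not vanish identically at $t$ (which forces $s\geq 1$), and part (2) is the complementary case where every hull codeword is $0$ at $t$ (subsuming $s=0$, whose only information set is empty). \emph{Second}, I would use a duality to avoid treating puncturing separately. Taking $T=\{t\}$ in Lemma \ref{lem-shorten-puncture}(1) and using $(C^\perp)^\perp=C$, both ${\rm Hull}(C^{\{t\}})$ and ${\rm Hull}((C^\perp)_{\{t\}})$ equal $C^{\{t\}}\cap (C^\perp)_{\{t\}}$, so ${\rm Hull}(C^{\{t\}})={\rm Hull}((C^\perp)_{\{t\}})$. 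Since ${\rm Hull}(C^\perp)={\rm Hull}(C)$, the code $C^\perp$ has the same $s$-dimensional hull and the same vanishing behaviour at $t$; thus each claim about the punctured code $C^{\{t\}}$ reduces to the corresponding claim about the shortened code of $C^\perp$, and it suffices to prove the shortened-code statements for an arbitrary binary code with $s$-dimensional hull.

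For the shortened code I would apply Lemma \ref{lem-2.3}. Choose a basis ${\bf c}_1,\ldots,{\bf c}_k$ of $C$ with ${\bf c}_1,\ldots,{\bf c}_s$ a basis of ${\rm Hull}(C)$. As these first $s$ vectors are orthogonal to all of $C$, the Gram matrix $GG^T$ has a single possibly-nonzero block $M=({\bf c}_i\cdot {\bf c}_j)_{s<i,j\leq k}$, and ${\rm rank}(GG^T)={\rm rank}(M)=k-s$ forces the $(k-s)\times(k-s)$ matrix $M$ to be invertible. Elementary row operations act on $GG^T$ by congruence, preserving its rank and its zero hull-block, so I may normalize the basis freely. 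In part (1) I would arrange that ${\bf c}_1$ is the unique basis vector nonzero at $t$ (possible since the hull is nonzero at $t$); then the subcode of $C$ vanishing at $t$ is spanned by ${\bf c}_2,\ldots,{\bf c}_k$, so $C_{\{t\}}$ has dimension $k-1$, and since all these vectors are $0$ at $t$ the punctured inner products coincide with the ambient ones, giving Gram matrix $0_{s-1}\oplus M$ of rank $k-s$. By Lemma \ref{lem-2.3}, $\dim{\rm Hull}(C_{\{t\}})=(k-1)-(k-s)=s-1$.

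In part (2) every hull vector already vanishes at $t$. If $t$ is a zero coordinate of $C$, shortening simply deletes it, leaving $GG^T$ and hence the hull dimension equal to $s$. Otherwise I would arrange that ${\bf c}_k$ is the unique basis vector nonzero at $t$; then $C_{\{t\}}$ is generated by the punctures of ${\bf c}_1,\ldots,{\bf c}_{k-1}$, has dimension $k-1$, and has Gram matrix $0_s\oplus M'$, where $M'$ is $M$ with its last row and column removed. Deleting one row from the invertible matrix $M$ gives a $(k-s-1)\times(k-s)$ matrix of rank $k-s-1$, and deleting the last column then lowers the rank by at most one, so ${\rm rank}(M')\in\{k-s-1,\,k-s-2\}$. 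Therefore $\dim{\rm Hull}(C_{\{t\}})=(k-1)-{\rm rank}(M')\in\{s,\,s+1\}$; for $s=0$ this reads $\{0,1\}$. In every case the value lies in the asserted range, and the punctured-code bounds follow from the duality of the first step.

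The step I expect to be the crux is the rank estimate for the principal submatrix $M'$: the whole dichotomy $\{s,s+1\}$ (and $\leq 1$ when $s=0$) hinges on the elementary but essential fact that deleting one row and one column from an invertible matrix over $\F_2$ changes the rank by exactly one or two. The remaining care is purely bookkeeping—tracking $\dim C_{\{t\}}$ and checking that the punctured inner products equal the ambient ones because the relevant generators vanish at $t$—while the reduction to $C^\perp$ keeps the puncturing case from doubling the work.
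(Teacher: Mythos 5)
Your proof is correct, and while its overall skeleton (normalize a basis, handle the shortened code, transfer to the punctured code by duality) parallels the paper's, the key technical steps differ. For part (1) the two arguments are close: the paper writes $G=(I_k|A)$, picks a hull codeword with $1$ in position $t$, and invokes the decomposition Lemma \ref{thm-SO+Hull} to conclude that the other $k-1$ generators span an $[n,k-1]$ code with $(s-1)$-dimensional hull, which after deleting the now-zero $t$-th column is exactly $C_{\{t\}}$; you reach the same conclusion by normalizing the basis so a hull vector is the unique generator nonzero at $t$ and reading the rank of the block Gram matrix $0_{s-1}\oplus M$ via Lemma \ref{lem-2.3}. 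The genuine divergence is part (2): the paper does not prove the shortened-code bound at all but cites Bouyuklieva (\cite[Proposition 5]{B-LCD-40}) and then dualizes, whereas you derive it from scratch via the principal-submatrix rank fact (deleting the last row and column of the invertible block $M$ over $\F_2$ drops its rank by exactly one or two). This makes your proof self-contained and in fact yields a \emph{sharper} statement under hypothesis (2): $\dim\left({\rm Hull}(C_{\{t\}})\right)\in\{s,s+1\}$, so the value $s-1$ can never occur at a coordinate lying outside every information set of the hull — nicely complementary to part (1), which shows $s-1$ occurs precisely at information-set coordinates. Two further points in your favor: your opening reformulation (that $t$ lies in some information set of ${\rm Hull}(C)$ iff some hull codeword is nonzero at $t$) is correct by the exchange lemma and makes the case split transparent; and you apply the duality ${\rm Hull}(C^{\{t\}})={\rm Hull}\left((C^\perp)_{\{t\}}\right)$ of Lemma \ref{lem-shorten-puncture} uniformly in both parts, whereas the paper writes it out only in part (2) and leaves the punctured-code claim of part (1) implicit, so your treatment also closes that small expository gap.
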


\begin{proof}
(1)
 Let $C$ be a binary linear $[n,k,d]$ code with $s$-dimensional hull and generator matrix $G$. Without loss of generality, we may assume that $t=1$ and
$$G=(I_k|A)=({\bf e}_{k,i}|{\bf a}_i)_{1\leq i\leq k},$$
where ${\bf e}_{k,i}$ and ${\bf a}_i$ are the $i$-th row of $I_k$ (the identity matrix) and $A$, respectively.

Since $t\in T$ for some information set of ${\rm Hull}(C)$, there exists ${\bf r}_j\in {\rm Hull}(C)$ such that $r_{j,1}=1$, where ${\bf r}_j=(r_{j,1},\ldots,r_{j,n})$.
Then we know that $\{{\bf r}_j\}\cup \{({\bf e}_{k,i}|{\bf a}_i)\}_{2\leq i\leq k}$ is a basis of $C$. By Lemma \ref{thm-SO+Hull}, the code $C'$ with the generator matrix
$({\bf e}_{k,i}|{\bf a}_i)_{2\leq i\leq k}$ is a linear $[n,k-1]$ code with $(s-1)$-dimensional hull.
By deleting the zero column of $({\bf e}_{k,i}|{\bf a}_i)_{2\leq i\leq k}$, we obtain the following matrix
$$G_{\{1\}}=({\bf e}_{k-1,i}|{\bf a}_{i+1})_{1\leq i\leq k-1},$$
 where ${\bf e}_{k-1,i}$ is the $i$-th row of $I_{k-1}$ and ${\bf a}_{i}$ is the $i$-th row of $A$ for $1\leq i\leq k-1$. This is a generator matrix of the shortened code $C_{\{1\}}$ of $C$ on the first coordinate.

Since $C'$ is a linear code with $(s-1)$-dimensional hull, $C_{\{1\}}$ be a linear $[n-1,k-1,d^*\geq d]$ code with $(s-1)$-dimensional hull.
 This completes the proof.\vspace{0.2cm}

(2) It follows from \cite[Proposition 5]{B-LCD-40} that
\begin{center}
$s-1\leq \dim\left({\rm Hull}(C_{\{t\}})\right)\leq s+1$ for $s\geq 1$ and $\dim\left({\rm Hull}(C_{\{t\}})\right)\leq 1$ for $s=0$.
\end{center}
It turns out that
\begin{center}
$s-1\leq \dim\left({\rm Hull}((C^\perp)_{\{t\}})\right)\leq s+1$ for $s\geq 1$ and $\dim\left({\rm Hull}((C^\perp)_{\{t\}})\right)\leq 1$ for $s=0$.
\end{center}
By Lemma \ref{lem-shorten-puncture}, we have
\begin{center}
$s-1\leq \dim\left({\rm Hull}(C^{\{t\}})\right)=\dim\left({\rm Hull}((C^{\{t\}})^\perp)\right)=
\dim\left({\rm Hull}((C^\perp)_{\{t\}})\right)\leq s+1 ~{\rm if}~s\geq 1,$ \\
$\dim\left({\rm Hull}(C^{\{t\}})\right)\leq 1$ if $s=0$.
\end{center}
This completes the proof.
\end{proof}

A similar result for the inverse of the punctured codes is given as follows.
\begin{prop}\label{prop-extend}
Let $C$ be a binary linear $[n,k]$ code with generator matrix $G$ and $\dim\left({\rm Hull}(C)\right)=s$. Let $C'$ be a binary linear $[n+1,k]$ code with the generator matrix $({\bf v}^T,G)$, where ${\bf v}\in \F_2^k$. Then $s-1\leq \dim\left({\rm Hull}(C')\right)\leq s+1$ for $s\geq 1$ and $\dim\left({\rm Hull}(C')\right)\leq 1$ for $s=0$.
\end{prop}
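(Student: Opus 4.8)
The plan is to reduce everything to the rank characterization of the hull in Lemma \ref{lem-2.3} and to exploit the fact that prepending a single column perturbs the Gram matrix $GG^T$ by a matrix of rank at most one. First I would observe that since $G$ has rank $k$ and occurs as a submatrix of $G'=({\bf v}^T,G)$, the matrix $G'$ also has full row rank $k$, so $C'$ is genuinely a binary linear $[n+1,k]$ code and Lemma \ref{lem-2.3} applies to both $C$ and $C'$.

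The central computation is the Gram matrix of $C'$. A direct entrywise check over $\F_2$ gives
$$G'(G')^T={\bf v}^T{\bf v}+GG^T,$$
because the $(i,j)$ entry of $G'(G')^T$ equals $v_iv_j+({\bf r}_i\cdot {\bf r}_j)$, where $v_i$ is the $i$-th coordinate of ${\bf v}$ and ${\bf r}_i$ is the $i$-th row of $G$. Writing $M={\bf v}^T{\bf v}$ for the outer product, $M$ is a $k\times k$ matrix with ${\rm rank}(M)\leq 1$.

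Next I would invoke the elementary rank inequality $|{\rm rank}(A+M)-{\rm rank}(A)|\leq {\rm rank}(M)$, valid for any two matrices of the same size; over $\F_2$ it follows from subadditivity of rank together with the identity $A=(A+M)+M$ (here $-M=M$). Applying it with $A=GG^T$ gives
$${\rm rank}(GG^T)-1\leq {\rm rank}(G'(G')^T)\leq {\rm rank}(GG^T)+1.$$
By Lemma \ref{lem-2.3} we have ${\rm rank}(GG^T)=k-s$ and ${\rm rank}(G'(G')^T)=k-\dim({\rm Hull}(C'))$; substituting these and negating the bounds yields $s-1\leq \dim({\rm Hull}(C'))\leq s+1$, which is the assertion for $s\geq 1$. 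When $s=0$ the lower bound $s-1=-1$ is vacuous since dimensions are nonnegative, so only the upper bound $\dim({\rm Hull}(C'))\leq 1$ remains, exactly as stated.

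I do not anticipate a serious obstacle, since this rank-one perturbation mechanism is the same one underlying Proposition \ref{prop-short-puncture}; the only points requiring care are verifying the Gram-matrix identity over $\F_2$ (in particular that $-M=M$, which is what makes the rank inequality symmetric in both directions) and confirming that $G'$ retains full row rank so that Lemma \ref{lem-2.3} legitimately applies to $C'$. One could alternatively derive the result by regarding $C$ as the punctured code $(C')^{\{1\}}$ and quoting Proposition \ref{prop-short-puncture}, but the direct Gram-matrix computation is shorter and self-contained.
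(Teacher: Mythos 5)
Your proof is correct, but it takes a genuinely different route from the paper. The paper observes that $C$ is the punctured code of $C'$ on the first coordinate and then argues by contradiction from Proposition \ref{prop-short-puncture}: if $\dim\left({\rm Hull}(C')\right)$ were larger than $s+1$ (or, when $s\geq 1$, smaller than $s-1$), the puncturing bounds would force $\dim\left({\rm Hull}(C)\right)\neq s$. Your argument instead works directly with the Gram matrix: the identity $G'(G')^T={\bf v}^T{\bf v}+GG^T$, the bound ${\rm rank}({\bf v}^T{\bf v})\leq 1$, and the perturbation inequality $\left|{\rm rank}(A+M)-{\rm rank}(A)\right|\leq {\rm rank}(M)$ yield $|s-\dim\left({\rm Hull}(C')\right)|\leq 1$ via Lemma \ref{lem-2.3}, with the $s=0$ case falling out because a dimension cannot be negative. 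What your approach buys is self-containedness and symmetry: it bypasses Proposition \ref{prop-short-puncture} entirely (whose part (2) in turn rests on results cited from Bouyuklieva's paper), it treats both inequalities in one stroke, and it makes transparent that the phenomenon is nothing more than a rank-one perturbation of $GG^T$. What the paper's route buys is brevity and reuse of machinery already established two propositions earlier, at the cost of a chain of dependencies and a proof by contradiction that obscures the underlying linear algebra. Both your full-rank check on $G'$ (needed so Lemma \ref{lem-2.3} applies) and your remark that $-M=M$ over $\F_2$ are exactly the right points of care, and neither creates a gap.
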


\begin{proof}
Let $s \ge 0$.
It is easy to see that $C$ is the punctured code of $C'$ on the first coordinate. Let $s'=\dim\left({\rm Hull}(C')\right)$. We claim that $s'\leq s+1$. Otherwise $\dim\left({\rm Hull}(C)\right)> s$ by Proposition \ref{prop-short-puncture}, which is a contradiction.

If $s\geq 1$, then we claim that $s'\geq s-1$. Otherwise $\dim\left({\rm Hull}(C)\right)< s$ by Proposition \ref{prop-short-puncture}, which is a contradiction. This completes the proof.
\end{proof}

\section{Linear codes with one-dimensional hull from LCD codes}

We recall that $d_{LCD}(n,k)$ is the largest minimum distance among all binary LCD $[n,k]$ codes and $d_{one}(n,k)$ is the largest minimum distance among all binary $[n,k]$ codes with one-dimensional hull for a given pair $(n, k)$.
By Proposition \ref{prop-short-puncture}, we give two upper bounds for $d_{one}(n,k)$. These two upper bounds are very helpful to determine the exact values of $d_{one}(n,k)$ since there are many known results about $d_{LCD}(n,k)$.

\begin{lem}\label{lemma-1}
Suppose that $1 \leq k \leq n-1$ and $d_{one}(n,k)\geq 2$. Then we have
 \begin{itemize}
   \item [{\rm(1)}] $d_{one}(n,k)\leq d_{LCD}(n-1,k-1)$.
   \item [{\rm(2)}] $d_{one}(n,k)\leq d_{LCD}(n-1,k)+1$.
 \end{itemize}
\end{lem}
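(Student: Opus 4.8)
The plan is to start from an optimal one-dimensional-hull code and reduce it either to an LCD code of one smaller dimension (for (1)) or to an LCD code of the same dimension and one smaller length (for (2)), then read off the claimed inequalities from the parameters of the resulting LCD code. Concretely, let $C$ be a binary linear $[n,k,d]$ code with one-dimensional hull attaining $d=d_{one}(n,k)\geq 2$. Since ${\rm Hull}(C)$ is a subcode of $C$, its unique nonzero codeword ${\bf c}$ satisfies ${\rm wt}({\bf c})\geq d\geq 2$, so $supp({\bf c})$ is nonempty. Fix any coordinate $t\in supp({\bf c})$. Because ${\rm Hull}(C)$ is one-dimensional, $\{t\}$ is an information set of ${\rm Hull}(C)$, so Proposition \ref{prop-short-puncture}(1) applies with $s=1$ and guarantees that both the shortened code $C_{\{t\}}$ and the punctured code $C^{\{t\}}$ have $0$-dimensional hull, i.e. are LCD.

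For (1), I would use the shortened code $C_{\{t\}}$. Since ${\bf c}$ has a $1$ in position $t$, shortening removes exactly one dimension, so $C_{\{t\}}$ is a binary $[n-1,k-1]$ code; as shortening never decreases the minimum distance, its minimum distance is at least $d$. Because $C_{\{t\}}$ is LCD, this gives $d_{LCD}(n-1,k-1)\geq d=d_{one}(n,k)$, which is exactly (1).

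For (2), I would use the punctured code $C^{\{t\}}$. Here the hypothesis $d\geq 2$ is used a second time: deleting a single coordinate satisfies $1<d$, so by Lemma \ref{lem-shorten-puncture}(2) the dimension is preserved and $C^{\{t\}}$ is a binary $[n-1,k]$ code, while puncturing one coordinate lowers the minimum distance by at most one, making it at least $d-1$. Since $C^{\{t\}}$ is LCD, we obtain $d_{LCD}(n-1,k)\geq d-1=d_{one}(n,k)-1$, which rearranges to (2).

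The conceptual heart of the argument is the single observation that $d\geq 2$ forces the hull generator to have nonzero support, so that a coordinate lying in an information set of the hull exists; this is precisely what lets the stronger part (1) of Proposition \ref{prop-short-puncture} apply and collapse the hull dimension exactly to zero, rather than only the weaker two-sided estimate of part (2). I expect the only delicate points to be the bookkeeping of dimensions---checking that shortening at a nonzero coordinate drops the dimension by one while puncturing with $t<d$ keeps it fixed---together with confirming that the standard bounds for how shortening and puncturing affect the minimum distance point in the right direction. The constraints $1\leq k\leq n-1$ guarantee that these parameters stay meaningful, namely $k-1\geq 0$ and $n-1\geq k$.
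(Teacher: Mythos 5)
Your proposal is correct and follows essentially the same route as the paper: the paper also takes an optimal one-dimensional-hull code, invokes Proposition \ref{prop-short-puncture}(1) (shortening/puncturing at a coordinate in an information set of the hull) together with Lemma \ref{lem-shorten-puncture} to produce LCD $[n-1,k-1,\geq d_{one}(n,k)]$ and $[n-1,k,\geq d_{one}(n,k)-1]$ codes, and reads off the two inequalities. Your write-up merely makes explicit the details the paper leaves implicit, namely the choice of $t$ in the support of the hull generator and the dimension bookkeeping for shortening versus puncturing.
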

\begin{proof}
Let $C$ be a binary linear $[n,k,d_{one}(n,k)]$ code with one-dimensional hull.
By Proposition \ref{prop-short-puncture} and Lemma \ref{lem-shorten-puncture}, there are binary LCD $[n-1,k-1,\geq d_{one}(n,k)]$ and $[n-1,k,\geq d_{one}(n,k)-1]$ codes.
Hence $d_{one}(n,k)\leq d_{LCD}(n-1,k-1)$ and $d_{one}(n,k)\leq d_{LCD}(n-1,k)+1$. This completes the proof.
\end{proof}

Next, an interesting relationship between $d_{LCD}(n+1,k)$ and $d_{one}(n,k)$ is given as follows.

\begin{lem}\label{lemma-0-1}
Suppose that $2 \leq k \leq n-1.$ Then we have $d_{one}(n,k)\leq d_{LCD}(n+1,k)$.
\end{lem}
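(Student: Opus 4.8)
The plan is to take an optimal one-dimensional-hull code and \emph{extend} it by a single new coordinate into an LCD code of at least the same minimum distance; the inequality then follows at once from the definitions. Concretely, let $C$ be a binary linear $[n,k,d]$ code with one-dimensional hull achieving $d=d_{one}(n,k)$, and let $G$ be a generator matrix of $C$. For a vector $\mathbf{v}\in\F_2^k$ to be chosen below, set $G'=(\mathbf{v}^T\,|\,G)$ and let $C'$ be the code it generates. Since appending a column cannot lower the rank, $G'$ still has rank $k$, so $C'$ is an $[n+1,k]$ code and $C$ is exactly the puncturing of $C'$ on the new coordinate (as in Proposition~\ref{prop-extend}). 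I would first record the Gram-matrix identity $G'(G')^T=GG^T+\mathbf{v}^T\mathbf{v}$, so that by Lemma~\ref{lem-2.3} the code $C'$ is LCD precisely when the symmetric $k\times k$ matrix $M+\mathbf{v}^T\mathbf{v}$ is nonsingular, where $M=GG^T$.

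The heart of the argument is a rank-one update over $\F_2$. Because $C$ has one-dimensional hull, Lemma~\ref{lem-2.3} gives $\mathrm{rank}(M)=k-1$, so the kernel of the symmetric matrix $M$ is spanned by a single nonzero vector $\mathbf{u}$. I would then prove the following claim: \emph{if $\mathbf{v}$ satisfies $\mathbf{u}\cdot\mathbf{v}=1$, then $M+\mathbf{v}^T\mathbf{v}$ is invertible.} Indeed, suppose $(M+\mathbf{v}^T\mathbf{v})\mathbf{x}=\mathbf{0}$, i.e.\ $M\mathbf{x}=(\mathbf{v}\cdot\mathbf{x})\,\mathbf{v}^T$. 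Pairing on the left with $\mathbf{u}$ and using $\mathbf{u}M=\mathbf{0}$ (as $M$ is symmetric and $M\mathbf{u}^T=\mathbf{0}$) forces $(\mathbf{v}\cdot\mathbf{x})(\mathbf{u}\cdot\mathbf{v})=0$, hence $\mathbf{v}\cdot\mathbf{x}=0$; then $M\mathbf{x}=\mathbf{0}$, so $\mathbf{x}\in\langle\mathbf{u}^T\rangle$, say $\mathbf{x}=c\,\mathbf{u}^T$, and $\mathbf{v}\cdot\mathbf{x}=c(\mathbf{u}\cdot\mathbf{v})=c=0$ gives $\mathbf{x}=\mathbf{0}$. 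Since $\mathbf{u}\neq\mathbf{0}$, a vector $\mathbf{v}$ with $\mathbf{u}\cdot\mathbf{v}=1$ exists, so this choice makes $C'$ an LCD code. Finally, because the last $n$ coordinates of every codeword of $C'$ reproduce the corresponding codeword of $C$, deleting the new coordinate cannot increase weight, whence $d_{\min}(C')\ge d_{\min}(C)=d$. Thus $d_{LCD}(n+1,k)\ge d_{\min}(C')\ge d_{one}(n,k)$, as required.

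The step I expect to be the main obstacle is precisely pinning the hull dimension of $C'$ to $0$. Proposition~\ref{prop-extend} only guarantees that appending a generic column moves the hull dimension by at most one in either direction (here $0\le\dim\mathrm{Hull}(C')\le 2$), so it does not by itself produce an LCD extension. The real content is to exploit the freedom in choosing the new column $\mathbf{v}$ and to show that some choice forces the rank-one update of the Gram matrix to be nonsingular; the computation above reduces this to the single linear condition $\mathbf{u}\cdot\mathbf{v}=1$ against the kernel vector $\mathbf{u}$ of $GG^T$, which is always satisfiable. Everything else—the dimension count for $C'$ and the monotonicity of the minimum distance under adding a coordinate—is routine.
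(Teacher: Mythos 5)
Your proof is correct. It follows the same overall strategy as the paper --- append a single well-chosen coordinate to an optimal one-dimensional-hull code so that the resulting $[n+1,k]$ code is LCD with minimum distance at least $d_{one}(n,k)$ --- but the key nonsingularity step is handled differently. The paper first passes to a hull-adapted basis ${\bf c}_1,\ldots,{\bf c}_k$ with ${\rm Hull}(C)=\{{\bf 0},{\bf c}_k\}$, appends the column $(0,\ldots,0,1)^T$, and quotes \cite[Lemma 22]{LCD-equivalent} to conclude that the Gram matrix of ${\bf c}_1,\ldots,{\bf c}_{k-1}$ is nonsingular, so the new Gram matrix is block diagonal with a $1$ in the corner, hence nonsingular. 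You instead argue basis-free: the kernel of $M=GG^T$ is one-dimensional, spanned by some ${\bf u}$, and your rank-one-update computation shows that any column ${\bf v}$ with ${\bf u}\cdot{\bf v}=1$ makes $M+{\bf v}^T{\bf v}$ invertible; that computation is valid over $\F_2$, and your dimension and minimum-distance checks are routine and correct. What your route buys is self-containedness (only Lemma \ref{lem-2.3} is needed, not the external lemma from \cite{LCD-equivalent}) and the extra information that every ${\bf v}$ outside the hyperplane ${\bf u}^\perp$ yields an LCD extension; the paper's route is shorter given the cited lemma, and is in fact the special case of your construction in which the basis is hull-adapted, so that ${\bf u}={\bf e}_k$ and the chosen column is ${\bf v}={\bf e}_k$.
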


\begin{proof}
Let ${\bf c}_1,{\bf c}_2,\ldots,{\bf c}_k$ be a basis of a binary linear $[n,k,d]$ code $C$ with one-dimensional hull such that ${\rm Hull}(C)=\{{\bf0},{\bf c}_k\}$. Let $G_1$ be a matrix whose rows are ${\bf c}_1,{\bf c}_2,\ldots,{\bf c}_{k-1}$.
According to \cite[Lemma 22]{LCD-equivalent}, $G_1G_1^T$ is nonsingular.
Let $C'$ be a binary linear $[n+1,k]$ code with the generator matrix $G'$ whose rows are $(0,{\bf c}_1),(0,{\bf c}_2),\ldots,(0,{\bf c}_{k-1}),(1,{\bf c}_k)$.
Then $C'$ has the minimum distance at least $d$, and we have
 $$G'G'^T=\left(
                            \begin{array}{cc}
                              G_1G_1^T & \begin{array}{c}
                                            0 \\
                                            \vdots \\
                                            0
                                          \end{array}
                               \\

                                0  \cdots  0& 1 \\
                            \end{array}
                          \right).
$$
Hence $G'G'^T$ is nonsingular, which implies that $C'$ is a binary LCD $[n+1,k]$ code. Thus, the result holds.
\end{proof}

\begin{prop}\label{prop-even-0-1}
If $C$ is an even-like binary LCD $[n,k,d]$ code with $d\geq 2$ and $d^\perp\geq 2$, then the shortened code of $C$ on any coordinate has one-dimensional hull.
\end{prop}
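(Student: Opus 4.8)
The plan is to pin down $\dim(\mathrm{Hull}(C_{\{t\}}))$ from two sides: an upper bound of $1$ coming directly from the fact that $C$ is LCD, and a lower bound of $1$ coming from a parity obstruction. Fix an arbitrary coordinate $t$. Since $C$ is LCD it has $0$-dimensional hull, and the trivial hull has only the empty information set, so the hypothesis ``$t\notin T$ for any information set $T$ of $\mathrm{Hull}(C)$'' in Proposition \ref{prop-short-puncture}(2) holds vacuously. Applying that proposition with $s=0$ immediately gives $\dim(\mathrm{Hull}(C_{\{t\}}))\leq 1$. Thus the whole problem reduces to showing that $C_{\{t\}}$ is \emph{not} LCD.

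For the lower bound I would argue by a dimension-parity count. First, because $d^\perp\geq 2$ the dual contains no weight-one vector, so no coordinate of $C$ is identically zero; in particular the evaluation-at-$t$ map $C\to\F_2$ is surjective, its kernel $C(\{t\})$ has dimension $k-1$, and puncturing this (fixed-zero) coordinate does not change the dimension, so $C_{\{t\}}$ is an $[n-1,k-1]$ code. Next, since $C$ is even-like and LCD, Theorem \ref{even-like-lCD} forces $k$ to be even, hence $\dim(C_{\{t\}})=k-1$ is odd. Finally I would observe that $C_{\{t\}}$ is itself even-like: every codeword of $C_{\{t\}}$ arises from a codeword of $C$ that is zero on coordinate $t$, deleting a zero entry preserves weight, and all weights in $C$ are even.

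Now the contradiction is immediate: by Theorem \ref{even-like-lCD} an even-like binary LCD code must have even dimension (this is exactly the mechanism behind Lemma \ref{lem-k-odd}). Since $C_{\{t\}}$ is even-like of odd dimension $k-1$, it cannot be LCD, so $\dim(\mathrm{Hull}(C_{\{t\}}))\geq 1$. Combining this with the upper bound $\dim(\mathrm{Hull}(C_{\{t\}}))\leq 1$ yields $\dim(\mathrm{Hull}(C_{\{t\}}))=1$, as claimed.

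The step I expect to carry the real weight is the even-dimension obstruction for even-like LCD codes; everything else is bookkeeping about how shortening acts on the dimension and on the even-like property. It is worth remarking that the hypothesis $d\geq 2$ is in fact automatic here, since an even-like code has no weight-one codeword, so the two genuinely active hypotheses are $d^\perp\geq 2$ (guaranteeing that the dimension drops to $k-1$ rather than staying at $k$) and the even-like LCD structure (supplying the parity).
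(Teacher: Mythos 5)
Your proof is correct, but it diverges from the paper's at the one step that carries all the weight: ruling out that $C_{\{t\}}$ is LCD. The paper does this by citing two external results of Bouyuklieva \cite{B-LCD-40}: Proposition 2 there (the punctured code of an even-like binary LCD code on any coordinate is again LCD) and Lemma 2 there (for any coordinate, exactly one of $C^{\{t\}}$ and $C_{\{t\}}$ is LCD); since the punctured code is LCD, the shortened code is not, and then, exactly as you do, Proposition \ref{prop-short-puncture}(2) with $s=0$ caps the hull dimension at $1$. You instead rule out LCD-ness intrinsically: $d^\perp\geq 2$ forces $\dim C_{\{t\}}=k-1$, Theorem \ref{even-like-lCD} forces $k$ to be even, shortening on a coordinate where the codewords vanish preserves even-likeness, and then Theorem \ref{even-like-lCD} again (the same parity obstruction that drives Lemma \ref{lem-k-odd}) shows an even-like code of odd dimension cannot be LCD. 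Your route is more self-contained---it uses only statements already quoted in the paper rather than two further citations to \cite{B-LCD-40}---and it isolates precisely which hypotheses do the work: you correctly observe that $d\geq 2$ is automatic for a nonzero even-like code, and that $d^\perp\geq 2$ is what guarantees the dimension drops to $k-1$ rather than staying at $k$. What the paper's route buys is side information your argument does not produce: it establishes in passing that the punctured code $C^{\{t\}}$ is LCD, a fact the paper reuses in the proofs of Propositions \ref{prop-odd-0-1} and \ref{1-puncturn}. Both proofs share the same upper-bound step, including the slightly degenerate but legitimate application of Proposition \ref{prop-short-puncture}(2) to a code whose hull is trivial.
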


\begin{proof}
It follows from \cite[Proposition 2]{B-LCD-40} that the punctured code of $C$ on any
coordinate is again LCD. According to \cite[Lemma 2]{B-LCD-40}, exactly one of the codes $C^{\{t\}}$ and $C_{\{t\}}$ is LCD on any coordinate. By (2) of Proposition \ref{prop-short-puncture}, the shortened code of $C$ on any coordinate is either an LCD code or a linear code with one-dimensional hull. Hence the shortened code of $C$ on any coordinate has one-dimensional hull.
\end{proof}

\begin{cor}\label{cor-1}
If there is an even-like binary LCD $[n,k,d]$ code with $d\geq 2$, then there is an even-like binary linear $[n-1,k-1,\geq d]$ code with one-dimensional hull.
\end{cor}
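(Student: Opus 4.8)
The plan is to obtain the required code as a shortening of $C$ on a single well-chosen coordinate, which lowers the length from $n$ to $n-1$ and the dimension from $k$ to $k-1$, and then to use Proposition \ref{prop-even-0-1} to control the hull. First I would record a dimension parity remark: by Theorem \ref{even-like-lCD} the dimension of an even-like binary LCD code is even, so $k\geq 2$ and $k-1\geq 1$, which guarantees that the shortened code is nontrivial; note also that $k-1$ is then odd, consistent with Lemma \ref{lem-k-odd} for even-like codes with one-dimensional hull.

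Next I would handle the generic case $d^\perp\geq 2$. Here Proposition \ref{prop-even-0-1} applies directly and states that the shortened code $C_{\{t\}}$ on any coordinate $t$ already has one-dimensional hull, so only three bookkeeping facts remain: (a) $C_{\{t\}}$ has dimension exactly $k-1$, because $d^\perp\geq 2$ forces every coordinate of $C$ to be non-zero and hence exactly one dimension is lost on passing to the subcode of codewords vanishing at $t$; (b) $C_{\{t\}}$ has minimum distance at least $d$, since its nonzero codewords are codewords of $C$ that already vanish at $t$, so deleting coordinate $t$ does not change their weights; and (c) $C_{\{t\}}$ is even-like, for the same weight-preservation reason. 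This disposes of the case $d^\perp\geq 2$.

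The main obstacle is the remaining case $d^\perp=1$, which is exactly the situation Proposition \ref{prop-even-0-1} excludes. The key observation is that a weight-$1$ dual codeword corresponds precisely to a coordinate on which every codeword of $C$ vanishes, so $d^\perp=1$ means $C$ has some number $z\geq 1$ of identically-zero coordinates. My plan is to puncture all $z$ zero coordinates, obtaining an even-like binary LCD $[n-z,k,d]$ code $\tilde C$; since $\tilde C$ now has no zero coordinate, $\tilde d^{\,\perp}\geq 2$, so the previous paragraph yields an even-like binary $[n-z-1,k-1,\geq d]$ code $D$ with one-dimensional hull. I would then re-append $z$ zero coordinates to $D$: adjoining zero columns to a generator matrix leaves $GG^{T}$ unchanged, so by Lemma \ref{lem-2.3} the hull dimension remains $1$, while weights and parities are unaffected. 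The outcome is an even-like $[(n-z-1)+z,\,k-1,\,\geq d]=[n-1,k-1,\geq d]$ code with one-dimensional hull, completing the argument.
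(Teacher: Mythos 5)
Your proof is correct and follows essentially the same route as the paper: the same case split on $d^\perp$, the direct application of Proposition \ref{prop-even-0-1} when $d^\perp\geq 2$, and the same device of deleting all zero columns and re-appending them afterwards when $d^\perp=1$. The only difference is that you spell out the bookkeeping (dimension, distance, even-likeness, and the invariance of $GG^T$ under adding zero columns) that the paper leaves implicit.
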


\begin{proof}
If $d^\perp\geq 2$, then the result follows from Proposition \ref{prop-even-0-1}.
If $d^\perp=1$, then we obtain an even-like binary LCD $[n-i,k,d]$ code with the dual distance at least $2$ by deleting all zero columns of $C$. By Proposition \ref{prop-even-0-1}, there is an even-like binary $[n-i-1,k-1,d]$ code with one-dimensional hull. By adding zero-column, an even-like binary linear $[n-1,k-1,\geq d]$ code with one-dimensional hull is constructed.
\end{proof}

\begin{prop}\label{prop-odd-0-1}
Let $C$ be an odd-like binary LCD $[n,k,d]$ code with $d\geq 2$ and $d^\perp\geq 2$. If ${\bf1}\in C$, then the punctured code of $C$ on any coordinate has one-dimensional hull. If ${\bf1}\notin C$, then there exist $1\leq i,j\leq n$ such that the shortened code $C_{\{i\}}$ of $C$ on the $i$-th coordinate and the punctured code $C^{\{j\}}$ of $C$ on the $j$-th coordinate have one-dimensional hull.
\end{prop}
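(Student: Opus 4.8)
The plan is to reduce the statement to a coordinate-by-coordinate question and then settle that question with a single linear-algebra identity over $\F_2$. Fix a generator matrix $G$ of $C$, write $g_t\in\F_2^k$ for its $t$-th column, and set $M:=GG^T$, which is nonsingular since $C$ is LCD. Because $d\geq 2$, puncturing one coordinate preserves dimension, so $G^{\{t\}}$ generates $C^{\{t\}}$ with Gram matrix $G^{\{t\}}(G^{\{t\}})^T=GG^T+g_tg_t^T=M+g_tg_t^T$ (over $\F_2$). By Lemma \ref{lem-2.3} the hull of $C^{\{t\}}$ has dimension $k-\mathrm{rank}(M+g_tg_t^T)$; as this is a rank-one perturbation of the nonsingular $M$, the dimension is $0$ or $1$, in agreement with Proposition \ref{prop-short-puncture}(2) (taking $s=0$). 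Hence $C^{\{t\}}$ has one-dimensional hull exactly when $M+g_tg_t^T$ is singular, which by the matrix-determinant lemma over $\F_2$ happens precisely when $g_t^TM^{-1}g_t=1$. Dually, by Lemma \ref{lem-shorten-puncture} we have $(C_{\{t\}})^\perp=(C^\perp)^{\{t\}}$, and since $d^\perp\geq 2$ the same analysis applied to a generator matrix $H$ of $C^\perp$ shows that $C_{\{t\}}$ has one-dimensional hull exactly when $h_t^TN^{-1}h_t=1$, where $N=HH^T$ and $h_t$ is the $t$-th column of $H$.

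The key step is to identify these two scalars intrinsically. Put $P=G^TM^{-1}G$; one checks that $P$ is the projection of $\F_2^n$ onto $C$ along $C^\perp$ (this uses $C\oplus C^\perp=\F_2^n$, valid because $C$ is LCD), that $P=P^T=P^2$, and that $g_t^TM^{-1}g_t=P_{tt}$. Over $\F_2$ the idempotent relation gives $P_{tt}=\sum_jP_{tj}P_{jt}=\sum_jP_{tj}=(P\mathbf 1^T)_t$, so
$$g_t^TM^{-1}g_t=\big(\mathrm{proj}_C(\mathbf 1)\big)_t.$$
In words: $C^{\{t\}}$ has one-dimensional hull if and only if the $t$-th coordinate of $\mathrm{proj}_C(\mathbf 1)$ equals $1$, and the dual computation gives that $C_{\{t\}}$ has one-dimensional hull if and only if the $t$-th coordinate of $\mathrm{proj}_{C^\perp}(\mathbf 1)$ equals $1$. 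Writing $\mathbf 1=a+b$ with $a=\mathrm{proj}_C(\mathbf 1)\in C$ and $b=\mathrm{proj}_{C^\perp}(\mathbf 1)\in C^\perp$, the relation $a_t+b_t=1$ recovers the fact that exactly one of $C^{\{t\}},C_{\{t\}}$ is LCD at each coordinate.

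With this identity both cases are immediate. If $\mathbf 1\in C$ then $a=\mathbf 1$, so $a_t=1$ for every $t$ and $C^{\{t\}}$ has one-dimensional hull on every coordinate. If $\mathbf 1\notin C$ then $b\neq\mathbf 0$, so some coordinate $i$ has $b_i=1$ and $C_{\{i\}}$ has one-dimensional hull; moreover, since $C$ is odd-like we have $\mathbf 1\notin C^\perp$, hence $a\neq\mathbf 0$ and some coordinate $j$ has $a_j=1$, giving $C^{\{j\}}$ one-dimensional hull.

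I expect the main obstacle to be the rigorous justification of the identity over $\F_2$: verifying that $P=G^TM^{-1}G$ really is the projection along $C^\perp$, and that the idempotent collapse $P_{tt}=\sum_jP_{tj}$ is applied correctly using the characteristic-two cancellation $P_{tj}^2=P_{tj}$. The remaining bookkeeping, namely that $d\geq 2$ and $d^\perp\geq 2$ keep $C^{\{t\}}$ and $(C^\perp)^{\{t\}}$ at full dimension so that ``not LCD'' is equivalent to ``one-dimensional hull'', is routine, and the trace relation $\sum_t(\mathrm{proj}_C\mathbf 1)_t=\mathrm{tr}(P)\equiv k\pmod 2$ furnishes a convenient consistency check.
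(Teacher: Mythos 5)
Your proof is correct, and it takes a genuinely different route from the paper's. The paper handles the case ${\bf 1}\in C$ by dualizing (then $C^\perp$ is even-like) and invoking Proposition \ref{prop-even-0-1} together with $C^{\{t\}}=\left((C^\perp)_{\{t\}}\right)^\perp$, and it handles the case ${\bf 1}\notin C$ by citing structural results of Bouyuklieva \cite{B-LCD-40} (its Proposition 3 and Lemma 2) combined with Proposition \ref{prop-short-puncture}(2); so its argument is short but leans on external results. You instead derive everything from first principles: the Gram-matrix rank-one update $G^{\{t\}}(G^{\{t\}})^T=M+g_tg_t^T$, the matrix-determinant lemma (a polynomial identity with integer coefficients, hence valid over $\F_2$), and the observation that $P=G^TM^{-1}G$ is the symmetric idempotent projecting $\F_2^n$ onto $C$ along $C^\perp$, whence $P_{tt}=\sum_j P_{tj}^2=\sum_j P_{tj}=\left(\mathrm{proj}_C({\bf 1})\right)_t$. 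This buys a sharper statement than the proposition itself: an exact coordinate-wise criterion, namely $C^{\{t\}}$ has one-dimensional hull if and only if $t\in \mathrm{supp}\left(\mathrm{proj}_C({\bf 1})\right)$, and $C_{\{t\}}$ has one-dimensional hull if and only if $t\in \mathrm{supp}\left(\mathrm{proj}_{C^\perp}({\bf 1})\right)$, from which both cases of the proposition, and also the cited fact that exactly one of $C^{\{t\}}$, $C_{\{t\}}$ is LCD at each coordinate, follow at once (it even counts the admissible coordinates, ${\rm wt}(\mathrm{proj}_C({\bf 1}))$ of them for puncturing). The supporting details all check out: $d\geq 2$ and $d^\perp\geq 2$ keep $C^{\{t\}}$ and $(C^\perp)^{\{t\}}$ at full rank so Lemma \ref{lem-2.3} applies to the punctured generator matrices; $\mathrm{Hull}(C_{\{t\}})=\mathrm{Hull}\left((C^\perp)^{\{t\}}\right)$ transfers the analysis to the shortened codes; and odd-likeness of $C$ gives ${\bf 1}\notin C^\perp$, hence $\mathrm{proj}_C({\bf 1})\neq{\bf 0}$, which is exactly what the existence of the index $j$ requires. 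The only cost relative to the paper is length and the need to verify the characteristic-two identities, which you do.
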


\begin{proof}
If $C$ contains the all-ones vector, then $C^\perp$ is even-like. By Proposition \ref{prop-even-0-1}, the shortened code of $C^\perp$ on any coordinate has one-dimensional hull. By Lemma \ref{lem-shorten-puncture}, $C^{\{t\}}=\left((C^\perp)_{\{t\}}\right)^\perp$. Hence the punctured code of $C$ on any coordinate has one-dimensional hull.

If $C$ does not contain the all-ones vector, then it follows from
\cite[Proposition 3]{B-LCD-40} that there exist $1\leq i,j\leq n$ such that the punctured code $C_{\{i\}}$ of $C$ on the $i$-th coordinate and the shortened code $C^{\{j\}}$ of $C$ on the $j$-th coordinate are LCD codes. By \cite[Lemma 2]{B-LCD-40}, exactly one of the codes $C^{\{t\}}$ and $C_{\{t\}}$ is LCD on any coordinate $t$. By (2) of Proposition \ref{prop-short-puncture}, the shortened code $C_{\{i\}}$ of $C$ on the $i$-th coordinate and the punctured code $C^{\{j\}}$ of $C$ on the $j$-th coordinate have one-dimensional hull.
\end{proof}

\begin{cor}\label{cor-2}
If there is an odd-like binary LCD $[n,k,d]$ code with $d\geq 2$, then there is a binary linear $[n-1,k,\geq d-1]$ code with one-dimensional hull.
\end{cor}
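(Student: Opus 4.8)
The plan is to split the argument into two cases according to the dual distance $d^\perp$, and in the favorable case to read off a punctured code directly from Proposition \ref{prop-odd-0-1}, exactly mirroring the structure of the proof of Corollary \ref{cor-1}.

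First I would treat the case $d^\perp \geq 2$, where Proposition \ref{prop-odd-0-1} applies verbatim. Whether or not ${\bf 1}\in C$, that proposition supplies a coordinate $j$ for which the punctured code $C^{\{j\}}$ has one-dimensional hull (any coordinate works when ${\bf 1}\in C$, and the distinguished coordinate $j$ works when ${\bf 1}\notin C$). It then only remains to identify the parameters of $C^{\{j\}}$: since $d\geq 2>1$, Lemma \ref{lem-shorten-puncture}(2) with $t=1$ keeps the dimension equal to $k$, and puncturing on a single coordinate lowers the minimum distance by at most one, so $C^{\{j\}}$ is a binary linear $[n-1,k,\geq d-1]$ code with one-dimensional hull. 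This disposes of the case $d^\perp\geq 2$.

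Next I would reduce the case $d^\perp=1$ to the previous one. Here $C^\perp$ contains a weight-one vector, which forces $C$ to possess a zero column; letting $i\geq 1$ be the number of zero columns, I delete all of them to obtain a code $C'$ of length $n-i$ and dimension $k$. The key observation is that a zero column contributes nothing to any inner product of rows, so the Gram matrix is unchanged and $C'$ is again LCD; deleting zero coordinates leaves every weight fixed, so $C'$ stays odd-like with minimum distance $d$; and $C'$ now has no zero column, hence dual distance at least $2$. Applying the first case to $C'$ yields a binary linear $[n-i-1,k,\geq d-1]$ code with one-dimensional hull, to which I then re-adjoin $i$ zero columns. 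Appending zero columns again leaves $GG^T$ unchanged, so by Lemma \ref{lem-2.3} the hull remains one-dimensional, while the minimum distance is unaffected, producing the desired binary linear $[n-1,k,\geq d-1]$ code with one-dimensional hull.

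The only genuinely delicate point is the bookkeeping in the $d^\perp=1$ case, namely verifying that deleting and then re-inserting zero columns simultaneously preserves the LCD property (and, after the reduction, the one-dimensional hull), the dimension $k$, the odd-like type, and the minimum distance. Each of these follows once one records that the Gram matrix $GG^T$ is invariant under insertion or deletion of zero columns, so I expect this to be routine rather than an obstacle; the substantive content is entirely inherited from Proposition \ref{prop-odd-0-1}.
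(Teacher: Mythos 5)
Your proposal is correct and takes essentially the same route as the paper: its one-line proof of Corollary \ref{cor-2} simply invokes the argument of Corollary \ref{cor-1} with Proposition \ref{prop-odd-0-1} in place of Proposition \ref{prop-even-0-1}, which is exactly your two-case split on $d^\perp$ (puncture via Proposition \ref{prop-odd-0-1} when $d^\perp\geq 2$; delete the zero columns, apply that case, and re-adjoin zero columns when $d^\perp=1$). You have merely written out explicitly the bookkeeping (dimension via Lemma \ref{lem-shorten-puncture}, hull invariance of $GG^T$ under zero columns via Lemma \ref{lem-2.3}) that the paper leaves implicit.
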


\begin{proof}
The proof is similar to that of Corollary \ref{cor-1}, the main difference is that we use Proposition \ref{prop-odd-0-1} instead of Proposition \ref{prop-even-0-1}.
\end{proof}

\begin{cor}
If $k$ is odd and $d_{LCD}(n+1,k)\geq 2$, then
\begin{center}
 $d_{one}(n,k)=d_{LCD}(n+1,k)$ or
$d_{LCD}(n+1,k)-1$.
\end{center}
\end{cor}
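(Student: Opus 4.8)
The plan is to sandwich $d_{one}(n,k)$ between $d_{LCD}(n+1,k)-1$ and $d_{LCD}(n+1,k)$ using the two results already established in this section, with the parity hypothesis on $k$ supplying the ingredient needed for the lower bound. The upper bound is immediate: Lemma \ref{lemma-0-1} gives $d_{one}(n,k)\leq d_{LCD}(n+1,k)$ outright, so no further work is required there.

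For the lower bound I would first make the key observation that, because $k$ is odd, \emph{every} binary LCD $[n+1,k]$ code is odd-like. Indeed, Theorem \ref{even-like-lCD} shows that an even-like binary LCD code must have even dimension, so no even-like binary LCD code of odd dimension $k$ can exist. In particular, a binary LCD $[n+1,k,d_{LCD}(n+1,k)]$ code $C$ attaining the largest minimum distance is necessarily odd-like. Since by hypothesis $d_{LCD}(n+1,k)\geq 2$, I can then apply Corollary \ref{cor-2} (with $n$ replaced by $n+1$) to $C$, which produces a binary linear $[n,k,\geq d_{LCD}(n+1,k)-1]$ code with one-dimensional hull. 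This yields $d_{one}(n,k)\geq d_{LCD}(n+1,k)-1$.

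Combining the two inequalities gives
$$d_{LCD}(n+1,k)-1\leq d_{one}(n,k)\leq d_{LCD}(n+1,k),$$
so $d_{one}(n,k)$ equals $d_{LCD}(n+1,k)$ or $d_{LCD}(n+1,k)-1$, as claimed.

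The only genuinely nontrivial step is the parity argument forcing the extremal LCD code to be odd-like: this is exactly what unlocks Corollary \ref{cor-2}, whose hypothesis demands an \emph{odd-like} LCD code, so without the assumption that $k$ is odd the lower bound would not follow from the machinery at hand. I expect this to be the main (indeed essentially the sole) obstacle; everything else is a direct citation of Lemma \ref{lemma-0-1} and Corollary \ref{cor-2}. One should also note the boundary case $k=1$, which the hypothesis ``$k$ odd'' permits but which falls outside the range $2\le k\le n-1$ assumed in Lemma \ref{lemma-0-1}; here $d_{one}(n,1)$ and $d_{LCD}(n+1,1)$ are determined directly by the weights of single generators, and a short separate check confirms the stated relation.
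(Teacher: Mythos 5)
Your proof is correct and follows essentially the same route as the paper: the upper bound from Lemma \ref{lemma-0-1}, and the lower bound from applying Corollary \ref{cor-2} to an optimal LCD $[n+1,k]$ code, which is necessarily odd-like because $k$ is odd (the paper asserts this parity fact without citing Theorem \ref{even-like-lCD}, as you do). Your additional remark on the boundary case $k=1$ is a small extra check beyond the paper's proof but does not change the argument.
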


\begin{proof}
Let $C$ be a binary LCD $[n+1,k,d_{LCD}(n+1,k)]$ code. Since $k$ is odd, $C$ is odd-like. By Corollary \ref{cor-2}, there is a binary linear $[n , k,\geq d_{LCD}(n+1,k)-1]$ code with one-dimensional hull. Combining with Lemma \ref{lemma-0-1}, we obtain the desired result.
\end{proof}

\begin{prop}\label{prop-extend-0-1}
If there is a binary LCD $[n,k,d]$ code for odd $k$, then there is an even-like binary linear $[n+1,k,d~{\rm or}~d+1]$ code with one-dimensional hull.
\end{prop}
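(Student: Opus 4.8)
The plan is to construct the desired code by prepending a single all-ones column to a well-chosen generator matrix of the given LCD code, and then to read off the hull dimension directly from Lemma~\ref{lem-2.3}. First I would observe that since $k$ is odd, Theorem~\ref{even-like-lCD} forbids $C$ from being even-like, so $C$ is necessarily an \emph{odd-like} binary LCD $[n,k,d]$ code. Hence Theorem~\ref{odd-like-lCD} supplies a basis ${\bf c}_1,\ldots,{\bf c}_k$ with ${\bf c}_i\cdot{\bf c}_j=\delta_{ij}$; in particular each satisfies ${\bf c}_i\cdot{\bf c}_i={\rm wt}({\bf c}_i)\bmod 2=1$, so every ${\bf c}_i$ has odd weight. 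Letting $G$ be the generator matrix with these rows (so $GG^T=I_k$), I would form $G'=({\bf 1}^T,G)$ by prepending the all-ones column, giving a binary $[n+1,k]$ code $C'$ (the rank is unchanged). Each new row $(1,{\bf c}_i)$ then has even weight, and since weight parity is additive over $\F_2$, the whole code $C'$ is even-like.

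For the hull I would compute $G'G'^T={\bf 1}{\bf 1}^T+GG^T=J_k+I_k$ over $\F_2$, where $J_k$ is the all-ones matrix; this is the matrix with zero diagonal and ones off-diagonal. The key step is to show ${\rm rank}(J_k+I_k)=k-1$ for odd $k$. Solving $(J_k+I_k)x={\bf 0}$ amounts to $J_kx=x$, i.e. $x=(\sum_i x_i){\bf 1}$, which forces $x\in\{{\bf 0},{\bf 1}\}$; the value $x={\bf 1}$ is consistent precisely because $k$ is odd (then $\sum_i({\bf 1})_i=k\equiv 1$). Thus the kernel is $\{{\bf 0},{\bf 1}\}$, which is one-dimensional, so ${\rm rank}(G'G'^T)=k-1$. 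By Lemma~\ref{lem-2.3}, $\dim({\rm Hull}(C'))=k-(k-1)=1$, and therefore $C'$ is an even-like binary linear $[n+1,k]$ code with one-dimensional hull.

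Finally, for the minimum distance, I would note that every codeword of $C'$ has the form $(a,{\bf c})$ with ${\bf c}\in C$ and $a\in\F_2$, so its weight is ${\rm wt}({\bf c})$ or ${\rm wt}({\bf c})+1$; consequently $d\le d(C')\le d+1$, the upper bound coming from extending a minimum-weight codeword of $C$. This places $d(C')$ in $\{d,d+1\}$, as claimed. The only genuinely non-routine part is the rank computation ${\rm rank}(J_k+I_k)=k-1$, and this is exactly where the parity of $k$ enters: for even $k$ the all-ones vector fails to lie in the kernel, the matrix is nonsingular, and the same construction yields an even-like LCD code rather than a one-dimensional hull — consistent with Lemma~\ref{lem-k-odd}, which forbids even-like one-dimensional-hull codes of even dimension.
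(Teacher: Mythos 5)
Your proof is correct, and the core of it is the same construction as the paper's: since $k$ is odd, Theorem~\ref{even-like-lCD} forces $C$ to be odd-like, Theorem~\ref{odd-like-lCD} supplies the orthonormal basis, and prepending an all-ones column gives $G'G'^T=J_k+I_k$ (the paper writes $J_k-I_k$, the same matrix over $\F_2$). Where you genuinely diverge is the final step. The paper only observes that $\det(J_k-I_k)=0$ for odd $k$, so $C'$ is not LCD, and then invokes Proposition~\ref{prop-extend} --- which in turn rests on the shortening/puncturing machinery of Proposition~\ref{prop-short-puncture} --- to cap the hull dimension at $1$. You instead compute the kernel of $J_k+I_k$ directly: $(J_k+I_k)x=\mathbf{0}$ forces $x=(\sum_i x_i)\mathbf{1}$, and $x=\mathbf{1}$ is admissible exactly because $k$ is odd, so the kernel is $\{\mathbf{0},\mathbf{1}\}$, $\mathrm{rank}(G'G'^T)=k-1$, and Lemma~\ref{lem-2.3} yields hull dimension exactly $1$ in one stroke. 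This buys a more self-contained argument (no dependence on the extension/puncturing propositions), pins down the hull dimension as exactly $1$ rather than squeezing it between two bounds, and makes transparent precisely where the parity of $k$ enters. You also verify even-likeness explicitly (each $\mathbf{c}_i$ has odd weight because $\mathbf{c}_i\cdot\mathbf{c}_i=1$, so each row $(1,\mathbf{c}_i)$ has even weight), a point the paper leaves implicit; the distance bounds $d\leq d(C')\leq d+1$ are handled the same way in both.
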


\begin{proof}
Let $C$ be a binary LCD $[n,k,d]$ code. Since $k$ is odd, $C$ is odd-like by Theorem~\ref{even-like-lCD}.
By Theorem \ref{odd-like-lCD}, there exists a basis
${\bf c}_1, {\bf c}_2, \ldots, {\bf c}_k$ of $C$ such that for any $i, j\in \{1, 2, \ldots , k\}$, ${\bf c}_i \cdot {\bf c}_j$ equals 1 if $i = j$ and equals 0 if $i \neq j$. Let $C'$ be a binary linear $[n+1,k]$ code with the generator matrix $G'$ whose rows are the codewords $(1,{\bf c}_1),(1,{\bf c}_2),\ldots,(1,{\bf c}_k)$.
Then we have
$$G'G'^T=J_k-I_k,$$
where $J_k$ is the all-ones matrix and $I_k$ is the $k\times k$ identity matrix. It is not difficult to calculate that $\det(G'G'^T)=0$ since $k$ is odd. Implying that $C'$ is not LCD. By Proposition \ref{prop-extend}, $C'$ is an even-like binary linear $[n+1,k]$ code with one-dimensional hull.
\end{proof}

\begin{cor}\label{cor-3.9}
Suppose that $1\leq k\leq n-1$, $k$ is odd and $d_{one}(n,k)\geq 2$. Then
\begin{center}
$d_{one}(n,k)= d_{LCD}(n-1,k)$ or $d_{LCD}(n-1,k)+1$.
\end{center}
In particular, if $d_{LCD}(n-1,k)$ is odd, then $d_{one}(n,k)= d_{LCD}(n-1,k)+1$.
\end{cor}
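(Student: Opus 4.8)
The plan is to sandwich $d_{one}(n,k)$ between $d_{LCD}(n-1,k)$ and $d_{LCD}(n-1,k)+1$ using the two constructions already established, and then to exploit the parity of even-like codes for the sharpening in the ``in particular'' clause.

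First I would establish the upper bound. Since $1\le k\le n-1$ and $d_{one}(n,k)\ge 2$, Lemma \ref{lemma-1}(2) gives directly $d_{one}(n,k)\le d_{LCD}(n-1,k)+1$. As a by-product this also guarantees $d_{LCD}(n-1,k)\ge d_{one}(n,k)-1\ge 1$, so a binary LCD $[n-1,k]$ code actually exists; I will need this fact to start the lower-bound construction.

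Next I would prove the lower bound $d_{one}(n,k)\ge d_{LCD}(n-1,k)$. Take a binary LCD $[n-1,k,d_{LCD}(n-1,k)]$ code. Since $k$ is odd, Proposition \ref{prop-extend-0-1} (applied with length $n-1$ in place of $n$) produces an even-like binary linear $[n,k]$ code with one-dimensional hull whose minimum distance is $d_{LCD}(n-1,k)$ or $d_{LCD}(n-1,k)+1$. In particular its minimum distance is at least $d_{LCD}(n-1,k)$, so $d_{one}(n,k)\ge d_{LCD}(n-1,k)$. Combining the two bounds yields $d_{one}(n,k)\in\{d_{LCD}(n-1,k),\,d_{LCD}(n-1,k)+1\}$, which is the main assertion. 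For the refinement, suppose $d_{LCD}(n-1,k)$ is odd. The code produced above is even-like, hence every nonzero codeword has even weight and its minimum distance must be even; since that distance equals $d_{LCD}(n-1,k)$ or $d_{LCD}(n-1,k)+1$ and the former is odd, it must equal $d_{LCD}(n-1,k)+1$. Thus $d_{one}(n,k)\ge d_{LCD}(n-1,k)+1$, and together with the upper bound this forces equality.

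I expect no serious obstacle here: the argument is essentially the matching of one upper bound against one lower bound, both supplied by earlier results. The one point that warrants care is that the lower-bound construction genuinely outputs an \emph{even-like} code, since this even-likeness is precisely what drives the parity argument in the refinement; so I would make sure Proposition \ref{prop-extend-0-1} is invoked in the odd-$k$ regime where that conclusion is guaranteed.
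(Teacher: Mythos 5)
Your proof is correct and takes essentially the same route as the paper's: the paper also combines Lemma \ref{lemma-1}(2) for the upper bound with Proposition \ref{prop-extend-0-1} (applied to an optimal LCD $[n-1,k]$ code, $k$ odd) for the lower bound. The ``in particular'' clause in the paper rests on exactly the parity observation you spell out—the code produced by Proposition \ref{prop-extend-0-1} is even-like, so when $d_{LCD}(n-1,k)$ is odd its minimum distance must be $d_{LCD}(n-1,k)+1$—which the paper leaves implicit and you have made explicit.
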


\begin{proof}
Let $C$ be a binary LCD $[n-1,k,d_{LCD}(n-1,k)]$ code. By Proposition \ref{prop-extend-0-1}, there is a binary linear $[n,k,d_{LCD}(n-1,k)~{\rm or}~d_{LCD}(n-1,k)+1]$ code with one-dimensional hull.
Combining with Lemma \ref{lemma-1}, we have
$$d_{LCD}(n-1,k)\leq d_{one}(n,k)\leq d_{LCD}(n-1,k)+1.$$
In particular, if $d_{LCD}(n-1,k)$ is odd. then it follows from Proposition \ref{prop-extend-0-1} that there is a binary linear $[n,k,d_{LCD}(n-1,k)+1]$ code with one-dimensional hull. Hence $d_{one}(n,k)= d_{LCD}(n-1,k)+1$.
\end{proof}

\section{Some properties of binary linear codes with one-dimensional hull}
Let $d(n,k)$ be the largest minimum distance among all binary linear $[n,k]$ codes.
It is well-known that $d(n,k)\leq d(n,k-1)$. Carlet {\em et al.} \cite{binaryLCD} proved that $d_{LCD}(n,k)\leq d_{LCD}(n,k-1)$ for any $k \ge 2$ using a new characterization of binary LCD codes, which solved the conjecture on the minimum distance of binary LCD codes proposed by Galvez {\em et al.} \cite{bound-12}. This conclusion is no longer valid for $d_{one}(n,k)$. Therefore, this is a result different from linear codes and LCD codes.

\begin{theorem}
Suppose that $2\leq k\leq n-1$. If $k$ is even or $n$ is odd, then $$d_{one}(n,k)\leq d_{one}(n,k-1).$$
\end{theorem}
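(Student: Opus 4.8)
The plan is to take a one-dimensional hull $[n,k,d]$ code $C$ attaining $d=d_{one}(n,k)$ and manufacture from it a one-dimensional hull $[n,k-1,\ge d]$ code, which instantly gives $d_{one}(n,k-1)\ge d$. By Theorem~\ref{thm-one} I may fix a basis $\mathbf{c}_1,\dots,\mathbf{c}_{k-1},\mathbf{c}_k$ with $\mathbf{c}_k$ spanning $\mathrm{Hull}(C)$ and $C'=\langle\mathbf{c}_1,\dots,\mathbf{c}_{k-1}\rangle$ an LCD $[n,k-1,\ge d]$ code of the \emph{same} length. The natural candidate is a $(k-1)$-dimensional subcode $C''=W\oplus\langle\mathbf{c}_k\rangle$ with $W\subseteq C'$ of dimension $k-2$; since $\mathbf{c}_k$ is orthogonal to everything, the Gram matrix of $C''$ is $\mathrm{diag}(\mathrm{Gram}(W),0)$, so Lemma~\ref{lem-2.3} gives $\dim\mathrm{Hull}(C'')=1+\dim\mathrm{rad}(W)$. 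Thus $C''$ has one-dimensional hull \emph{iff} the inner product restricted to $W$ is nondegenerate, and as $C''\subseteq C$ its minimum distance is automatically $\ge d$.

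First I would dispose of the case where $C$ is odd-like, which by Lemma~\ref{lem-k-odd} is the only possibility when $k$ is even. Then $C'$ is odd-like LCD, and Theorem~\ref{odd-like-lCD} provides an orthonormal basis $\mathbf{c}_i\cdot\mathbf{c}_j=\delta_{ij}$. Taking $W=\langle\mathbf{c}_1,\dots,\mathbf{c}_{k-2}\rangle$, whose Gram matrix is $I_{k-2}$, yields a nondegenerate $(k-2)$-dimensional subspace, so $C''=\langle\mathbf{c}_1,\dots,\mathbf{c}_{k-2},\mathbf{c}_k\rangle$ is a one-dimensional hull $[n,k-1,\ge d]$ code. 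This settles the hypothesis ``$k$ even'' entirely, and also covers the subcase of ``$n$ odd'' in which an optimal code happens to be odd-like.

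What remains is the case in which every optimal $C$ is even-like; by Lemma~\ref{lem-k-odd} this forces $k$ odd, so the standing hypothesis leaves only $n$ odd. Here the subcode route is genuinely blocked: $C'$ carries a nondegenerate \emph{alternating} form of even dimension $k-1$, and every $(k-2)$-dimensional (odd-dimensional) subspace of such a space has nonzero radical, so no subcode of $C$ has one-dimensional hull. Since $k-1$ is even, Lemma~\ref{lem-k-odd} also forces the target code to be \emph{odd-like}, so I must build it by a different device. The tool I would use is a length-raising construction generalizing Proposition~\ref{prop-extend-0-1}: from an odd-like LCD $[n-1,k-1]$ code $D$ with orthonormal basis, prepend a single column $\mathbf{v}$ of \emph{odd} weight; the new $[n,k-1]$ code has Gram matrix $I_{k-1}+\mathbf{v}\mathbf{v}^{T}$ with determinant $1+\mathrm{wt}(\mathbf{v})=0$, hence one-dimensional hull by Lemma~\ref{lem-2.3}, while the appended coordinate only adds weight, so its minimum distance is $\ge d(D)$. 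In view of the bound $d_{one}(n,k)\le d_{LCD}(n-1,k-1)$ from Lemma~\ref{lemma-1}(1), it then suffices to produce an odd-like LCD $[n-1,k-1]$ code of distance $\ge d_{one}(n,k)$.

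The hard part will be exactly this last step, and it is where the oddness of $n$ must be spent. Shortening $C$ at a coordinate $t$ of $\mathrm{Hull}(C)$ (Proposition~\ref{prop-short-puncture}(1)) produces an LCD $[n-1,k-1,\ge d]$ code, but shortening an even-like code keeps it even-like, and the prepend-column construction fails for such a code: for an alternating Gram matrix $N$ one has $\mathbf{v}^{T}N^{-1}\mathbf{v}=0$, so $\det(N+\mathbf{v}\mathbf{v}^{T})=1$ and the extension stays LCD. The decisive feature of $n$ odd is that $\mathbf{1}\notin C$, so $C^{\perp}$, and likewise the punctured code $C^{\{t\}}$ (an odd-like LCD $[n-1,k]$ code), carry odd-like structure that is absent when $n$ is even—precisely the structure that, for $n$ even, lets the even-weight code $E_n$ be a high-distance even-like one-dimensional hull code and thereby breaks monotonicity in the excluded case. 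I would therefore extract the required odd-like LCD $[n-1,k-1,\ge d_{one}(n,k)]$ code from these odd-like auxiliary codes (for instance by adjusting the even-like code $C(\{t\})=\{\mathbf{c}\in C:\ c_t=0\}$, which is LCD with a zero coordinate, along a suitably chosen odd-weight vector of $C(\{t\})^{\perp}$ so as to lower the Gram rank by exactly one without losing weight), and establishing the existence of such a code is the crux of the argument.
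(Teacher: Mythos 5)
Your treatment of the odd-like case (which covers all of ``$k$ even'') coincides with the paper's own argument: take the orthonormal basis of the LCD part supplied by Theorems \ref{odd-like-lCD} and \ref{thm-one}, drop one vector, and adjoin the hull generator, getting a subcode of $C$ with Gram matrix of rank exactly $k-2$. Your diagnosis of the remaining case ($n$ odd, $C$ even-like, hence $k$ odd) is also correct: the form on $C$ is alternating, so every $(k-1)$-dimensional subcode has even hull dimension and the subcode route is closed. But that case is precisely where your proposal stops being a proof: you reduce it to the existence of an odd-like LCD $[n-1,k-1,\geq d_{one}(n,k)]$ code and then declare that existence to be ``the crux of the argument'' without establishing it. Shortening $C$ at a hull coordinate (Proposition \ref{prop-short-puncture}) only yields an \emph{even-like} LCD $[n-1,k-1,\geq d]$ code, and the repair you float --- adding a suitably chosen odd-weight vector ${\bf x}\in C(\{t\})^{\perp}$ to generators so as to ``lower the Gram rank by exactly one'' --- is provably impossible: such an adjustment changes the Gram matrix $N$ by a rank-one update ${\bf u}{\bf u}^{T}$, where ${\bf u}$ records which generators receive ${\bf x}$, and for nondegenerate alternating $N$ one has $\det\left(N+{\bf u}{\bf u}^{T}\right)=\det(N)\left(1+{\bf u}^{T}N^{-1}{\bf u}\right)=\det(N)\neq 0$, since $N^{-1}$ is again alternating. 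This is the very same obstruction you yourself identified for the prepended-column construction, so any code produced this way stays LCD and never acquires a one-dimensional hull.

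The paper closes exactly this gap, and does so at length $n$ rather than by your detour through length $n-1$. Since $n$ is odd and ${\bf c}_k$ is even-like, ${\bf c}_k\neq{\bf 1}$, so one may choose a coordinate (say the first) outside $supp({\bf c}_k)$. Applying \cite[Theorem 8]{binaryLCD} to the hyperbolic basis ${\bf c}_1,{\bf c}_1',\ldots,{\bf c}_{\frac{k-1}{2}},{\bf c}'_{\frac{k-1}{2}}$ of the even-like LCD part at that coordinate produces an LCD $[n,k-2]$ code whose generators are modifications of the basis by ${\bf e}_1$ with one basis vector dropped (the sets $S_1$, $S_2$ in the paper, containing vectors such as ${\bf c}_1+{\bf c}'_1+{\bf e}_1$ and ${\bf c}_i+{\bf c}_1$); dropping a generator and then repairing the degeneracy is how the alternating structure gets broken, something no rank-one update can achieve. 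Because ${\bf e}_1\perp{\bf c}_k$, all new generators remain orthogonal to ${\bf c}_k$, so adjoining ${\bf c}_k$ gives an $[n,k-1]$ code with one-dimensional hull by Theorem \ref{thm-one}; and since every new generator vanishes at coordinate $1$, each codeword is either in $C$ or of the form ${\bf u}+{\bf e}_1$ with ${\bf u}\in C$ and $u_1=0$, so the minimum distance stays at least $d_{one}(n,k)$. Some device of this kind --- modifying several generators simultaneously at a coordinate avoiding the hull support --- is what your argument is missing.
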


\begin{proof}
Let $C$ be a binary linear $[n,k,d_{one}(n,k)]$ code with one-dimensional hull.

Assume that $k$ is even. Then $C$ is odd-like. By Theorem \ref{thm-one}, there exists a basis
${\bf c}_1, {\bf c}_2,\ldots, {\bf c}_k$ of $C$ such that the code generated by ${\bf c}_1,{\bf c}_2,\ldots,{\bf c}_{k-1}$ is an odd-like binary LCD $[n,k-1]$ code and ${\bf c}_{k}\cdot {\bf c}_{i}=0$ for $1\leq i\leq k$. Without loss of generality, we assume that ${\bf c}_1,{\bf c}_2,\ldots,{\bf c}_{k-1}$ satisfy the conditions in Theorem \ref{odd-like-lCD}. Let $C'$ be the code generated by ${\bf c}_1,{\bf c}_2,\ldots,{\bf c}_{k-2},{\bf c}_k$. By Theorems \ref{odd-like-lCD} and \ref{thm-one}, $C'$ is a binary linear $[n,k-1]$ code with one-dimensional hull and the minimum distance at least $d_{one}(n,k)$.

Assume that $n$ is odd. If $C$ is odd-like, then the result is similar to the case where $k$ is odd.
In the following, assume that $C$ is even-like. From Theorems \ref{thm-one} and \ref{even-like-lCD}, $k$ is odd and there exists a basis $ {\bf c}_1,{\bf c}'_1,\ldots,{\bf c}_{\frac{k-1}{2}},{\bf c}'_{\frac{k-1}{2}}, {\bf c}_k$ of $C$ such that the code generated by ${\bf c}_1,{\bf c}'_1,\ldots,{\bf c}_{\frac{k-1}{2}},{\bf c}'_{\frac{k-1}{2}}$ is an even-like binary LCD $[n,k-1]$ code, ${\bf c}_{k}\cdot {\bf c}_{i}={\bf c}_{k}\cdot {\bf c}'_{i}=0$ for $1\leq i\leq \frac{k-1}{2}$, and for any $i, j\in \{1, 2, \ldots , \frac{k-1}{2}\}$, the following conditions hold
(i) $ {\bf c}_i\cdot  {\bf c}_i= {\bf c}'_i\cdot  {\bf c}'_i=0$;
(ii) $ {\bf c}_i\cdot  {\bf c}'_j=0$, for $i\neq j$;
(iii) $ {\bf c}_i\cdot  {\bf c}'_i=1$;
(iv) ${\bf c}_{i,1}={\bf c}'_{i,1}$, where ${\bf c}_i=(c_{i,1},\ldots,c_{i,n})$ and ${\bf c}'_i=(c'_{i,1},\ldots,c'_{i,n})$.
Since $n$ is odd, ${\bf c}_k\neq (1,\ldots,1)$.
Without loss of generality, assume that $c_{k,1}=0$, where ${\bf c}_{k}=(c_{k,1},c_{k,2},\ldots,c_{k,n})$.

Suppose that $c_{i,1}=c'_{i,1}=0$ for $1\leq i\leq \frac{k-1}{2}$.
According to \cite[Theorem 8]{binaryLCD}, the code generated by $S_1=\{{\bf c}_1,{\bf c}'_1,\ldots, {\bf c}_{\frac{k-3}{2}},{\bf c}'_{\frac{k-3}{2}},{\bf c}_{\frac{k-1}{2}}+{\bf e}_1\}$ is a binary LCD $[n,k-2]$ code, where ${\bf e}_1=(1,0,\ldots,0)$. Let $C'$ be a binary linear code generated by $\{{\bf c}_k\}\cup S_1$. By Theorem \ref{thm-one}, $C'$ is a binary linear $[n,k-1]$ code with one-dimensional hull and the minimum distance at least $d_{one}(n,k)$.

Suppose that $c_{i,1}\neq 0$ for some $1\leq i\leq \frac{k-1}{2}$. Without loss of generality, assume that
$c_{i,1}=c'_{i,1}=1$ for $1\leq i\leq l$ and $c_{j,1}=c'_{j,1}=0$ for $l+1\leq j\leq \frac{k-1}{2}$, where $l$ is some positive integer.
According to \cite[Theorem 8]{binaryLCD}, the linear code generated by $S_2=\{{\bf c}_1+{\bf c}'_1+{\bf e}_1\}\cup\{{\bf c}_i+{\bf c}_1,{\bf c}'_i+{\bf c}_1~|~2\leq i\leq l\}\cup\{ {\bf c}_j,{\bf c}_j'~|~l+1\leq j\leq \frac{k-1}{2}\}$ is a binary LCD $[n,k-2]$ code, where ${\bf e}_1=(1,0,\ldots,0)$.
Let $C'$ be a binary linear code generated by $\{{\bf c}_k\}\cup S_2$. By Theorem \ref{thm-one}, $C'$ is a binary linear $[n,k-1]$ code with one-dimensional hull and the minimum distance at least $d_{one}(n,k)$.
This completes the proof.
\end{proof}

\begin{remark}
When $k$ is odd and $n$ is even, the above theorem may not be true. Therefore, this is a result different from linear codes and LCD codes. For example, $d_{one}(18,8)=5$, $d_{one}(18,9)=6$ (see Table 1).
\end{remark}

\begin{prop}\label{prop-4.3}
If there is an odd-like (resp. even-like) binary linear $[n,k,d]$ code with one-dimensional hull for an odd $k$, then there is an even-like (resp. odd-like) binary linear $[n+1,k,d~{\rm or}~d+1]$ code with one-dimensional hull.
\end{prop}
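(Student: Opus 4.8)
The plan is to build the required $[n+1,k]$ code as the code $C'$ whose generator matrix is obtained from a carefully chosen generator matrix $G$ of the given code $C$ by prepending one column, exactly as in Proposition~\ref{prop-extend}, and then to pin down $\dim({\rm Hull}(C'))$ through Lemma~\ref{lem-2.3}. First I would use Theorem~\ref{thm-one} to fix a basis ${\bf c}_1,\ldots,{\bf c}_k$ of $C$ for which $D=\langle{\bf c}_1,\ldots,{\bf c}_{k-1}\rangle$ is a binary LCD $[n,k-1]$ code of the same parity type as $C$ and ${\bf c}_k\cdot{\bf c}_i=0$ for all $i$. Letting $G$ be the generator matrix with these rows, we get
$$GG^T=\begin{pmatrix} M & 0\\ 0 & 0\end{pmatrix},\qquad M=({\bf c}_i\cdot{\bf c}_j)_{1\le i,j\le k-1},$$
where $M$ is nonsingular by Lemma~\ref{lem-2.3} applied to the LCD code $D$, and $k-1$ is even. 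For ${\bf v}\in\F_2^k$ let $C'$ be the $[n+1,k]$ code with generator matrix $G'=({\bf v}^T,G)$; then $G'G'^T=GG^T+{\bf v}^T{\bf v}$, where ${\bf v}^T{\bf v}=(v_iv_j)$. Since $G$ has rank $k$, puncturing the first coordinate is a bijection $C'\to C$ that decreases each weight by $0$ or $1$, so $C'$ automatically has minimum distance $d$ or $d+1$. By Lemma~\ref{lem-2.3} it therefore remains only to choose ${\bf v}$ so that $C'$ has the opposite parity type and ${\rm rank}(GG^T+{\bf v}^T{\bf v})=k-1$.

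For the odd-like $\to$ even-like direction I would further use Theorem~\ref{odd-like-lCD} to take ${\bf c}_1,\ldots,{\bf c}_{k-1}$ orthonormal, so that $M=I_{k-1}$, each ${\bf c}_i$ $(i\le k-1)$ has odd weight, and ${\bf c}_k$ has even weight. Prepending the vector of row parities ${\bf v}=(1,\ldots,1,0)$ then makes every row of $({\bf v}^T,G)$ even, so $C'$ is even-like, and
$$G'G'^T=\begin{pmatrix} I_{k-1}+J_{k-1} & 0\\ 0 & 0\end{pmatrix}.$$
By the matrix determinant lemma $\det(I_{k-1}+J_{k-1})=1+(k-1)=k\equiv1\pmod2$, so this matrix has rank $k-1$ and $\dim({\rm Hull}(C'))=1$. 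This is essentially the computation already appearing in the proof of Proposition~\ref{prop-extend-0-1}.

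For the even-like $\to$ odd-like direction the basis vectors of $D$ all have even weight, so $M$ is a nonsingular alternating matrix (again forcing $k-1$ even). I would prepend ${\bf v}=({\bf v}_1,0)$ with $0\ne{\bf v}_1\in\F_2^{k-1}$; since every ${\bf c}_i$ is even and ${\bf v}_1\ne0$, some row of $({\bf v}^T,G)$ becomes odd, so $C'$ is odd-like, and
$$G'G'^T=\begin{pmatrix} M+{\bf v}_1^T{\bf v}_1 & 0\\ 0 & 0\end{pmatrix}.$$
By the matrix determinant lemma $\det(M+{\bf v}_1^T{\bf v}_1)=\det(M)\bigl(1+{\bf v}_1M^{-1}{\bf v}_1^T\bigr)$, so this block is nonsingular exactly when ${\bf v}_1M^{-1}{\bf v}_1^T=0$. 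The main obstacle is to exhibit such a nonzero ${\bf v}_1$, and this is where the characteristic-two setting is decisive: because $M^{-1}$ is symmetric, the map ${\bf x}\mapsto{\bf x}M^{-1}{\bf x}^T=\sum_i(M^{-1})_{ii}x_i$ is $\F_2$-\emph{linear} (the cross terms cancel and $x_i^2=x_i$), so its kernel is a subspace of dimension at least $k-2$; as $k\ge3$ any nonzero vector of this kernel is a valid ${\bf v}_1$. With this choice ${\rm rank}(G'G'^T)=k-1$, whence $\dim({\rm Hull}(C'))=1$ by Lemma~\ref{lem-2.3}.

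In both cases $C'$ is an $[n+1,k,d\text{ or }d+1]$ code with one-dimensional hull of the parity type opposite to $C$, which is what is claimed. The only caveat is that the even-like $\to$ odd-like construction needs $k\ge3$ so that the kernel above contains a nonzero vector; this is guaranteed under the standing hypothesis $k\ge2$ together with $k$ odd.
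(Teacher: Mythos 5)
Your proof is correct and follows the same skeleton as the paper's: fix the basis of Theorem~\ref{thm-one}, prepend a single column whose entry in the row of the hull vector ${\bf c}_k$ is $0$, and certify the hull of the extended code through its Gram matrix. In the odd-like case your construction coincides with the paper's (both prepend $(1,\ldots,1,0)^T$ to the orthonormal basis of Theorem~\ref{odd-like-lCD}); the only difference is bookkeeping: you compute ${\rm rank}(G'G'^T)=k-1$ directly from $\det(I_{k-1}+J_{k-1})=k\equiv 1\pmod 2$ and invoke Lemma~\ref{lem-2.3}, whereas the paper quotes \cite[Proposition 1]{B-LCD-40} together with Theorem~\ref{thm-one}. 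The genuine divergence is in the even-like case: the paper again prepends the all-ones column over the special basis of Theorem~\ref{even-like-lCD} and argues ``similarly,'' while you prepend a nonzero ${\bf v}_1$ chosen in the kernel of ${\bf x}\mapsto {\bf x}M^{-1}{\bf x}^T$, which exists because over $\F_2$ this form is linear. Your route is basis-free (you never use the explicit conditions of Theorem~\ref{even-like-lCD}, only that $M$ is symmetric and nonsingular with zero diagonal) but nonconstructive; in fact it hides that \emph{every} nonzero ${\bf v}_1$ works, since the inverse of a nonsingular alternating matrix over $\F_2$ is again alternating, so your linear form is identically zero---and the paper's all-ones column is one such instance. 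Two small points: your caveat that the even-like direction needs $k\geq 3$ is genuine, but the ``standing hypothesis $k\geq 2$'' you appeal to does not exist in the paper; the statement itself degenerates at $k=1$, where an odd-like $[n+1,1]$ code with one-dimensional hull is impossible (a one-dimensional hull would force its generator to be self-orthogonal, hence of even weight), an edge case on which the paper's own proof is equally silent.
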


\begin{proof}
Let $C$ be an odd-like binary linear $[n,k,d]$ code with one-dimensional hull.
From Theorems \ref{odd-like-lCD} and \ref{thm-one}, there exists a basis ${\bf c}_1, {\bf c}_2,\ldots, {\bf c}_k$ of $C$ such that for any $i, j\in \{1, 2, \ldots , k-1\}$, ${\bf c}_i \cdot {\bf c}_j$ equals 1 if $i = j$ and equals 0 if $i \neq j$, ${\bf c}_{k}\cdot {\bf c}_{i}=0$ for $1\leq i\leq k$.
Let $C'$ be a binary linear code with the generator matrix $G'$ whose rows are $(1,{\bf c}_1),\ldots,(1,{\bf c}_{k-1}),(0,{\bf c}_k)$. Then $C'$ is an even-like binary code with the minimum distance at least $d$. According to \cite[Proposition 1]{B-LCD-40}, the linear code generated by $(1,{\bf c}_1),(1,{\bf c}_2),\\ \ldots,(1,{\bf c}_{k-1})$ is an LCD code. By Theorem \ref{thm-one}, $C'$ is an even-like binary linear $[n+1,k,d~{\rm or}~ d+1]$ code with one-dimensional hull.

Let $C$ be an even-like binary linear $[n,k,d]$ code with one-dimensional hull. By Theorems \ref{even-like-lCD} and \ref{thm-one}, there exists a basis
${\bf c}_1,{\bf c}'_1,\ldots,{\bf c}_{\frac{k-1}{2}},{\bf c}'_{\frac{k-1}{2}}, {\bf c}_k$ of $C$ such that ${\bf c}_1,{\bf c}'_1,\ldots,{\bf c}_{\frac{k-1}{2}},\\ {\bf c}'_{\frac{k-1}{2}}$ satisfy the conditions of Theorem \ref{even-like-lCD} and ${\bf c}_{k}\cdot {\bf c}_{i}={\bf c}_{k}\cdot {\bf c}'_{i}=0$ for $1\leq i\leq \frac{k-1}{2}$.
Similar to the discussion above, the code generated by $(1,{\bf c}_1),(1,{\bf c}_1'),\ldots,(1,{\bf c}_{\frac{k-1}{2}}),(1,{\bf c}'_{\frac{k-1}{2}}),(0,{\bf c}_k)$ is an odd-like binary linear $[n+1,k,d~{\rm or}~ d+1]$ code with one-dimensional hull.
\end{proof}

\begin{cor}
If $k$ is odd and $d_{one}(n-1,k)$ is odd, then $d_{one}(n,k)\geq d_{one}(n-1,k)+1$.
\end{cor}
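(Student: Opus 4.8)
The plan is to read this as a one-coordinate lengthening result and route it entirely through Proposition \ref{prop-4.3}, which extends a code with one-dimensional hull from length $n-1$ to length $n$ while \emph{switching} its even-like/odd-like type, at the cost of raising the minimum distance by at most $1$. The extra ingredient is the trivial remark that an even-like linear code has even minimum distance. So the whole argument hinges on a parity bookkeeping, not on any construction.

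Concretely, first I would set $d := d_{one}(n-1,k)$ and fix a binary linear $[n-1,k,d]$ code $C$ with one-dimensional hull attaining this distance. The key preliminary observation is that $C$ must be \emph{odd-like}: since $d$ is odd by hypothesis, $C$ contains a minimum-weight codeword of odd weight $d$, so $C$ is not even-like. This is exactly what lets me apply Proposition \ref{prop-4.3} in the orientation that produces an even-like code. Indeed, as $k$ is odd and $C$ is an odd-like $[n-1,k,d]$ code with one-dimensional hull, Proposition \ref{prop-4.3} yields an \emph{even-like} binary linear $[n,k,d']$ code $C'$ with one-dimensional hull, where $d'\in\{d,\,d+1\}$.

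Finally I would close the parity loop: because $C'$ is even-like, every codeword of $C'$ has even weight, so its minimum distance $d'$ is even. Since $d$ is odd and $d'\in\{d,d+1\}$, the only possibility is $d'=d+1$. Thus a binary linear $[n,k,d+1]$ code with one-dimensional hull exists, whence $d_{one}(n,k)\ge d+1=d_{one}(n-1,k)+1$, as claimed.

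I do not expect a genuine obstacle here; the result is essentially a corollary of Proposition \ref{prop-4.3}. The one point that must be stated carefully, and which carries all the force of the argument, is the pair of parity observations: that an odd value of $d_{one}(n-1,k)$ forces the extremal code to be odd-like (so that Proposition \ref{prop-4.3} hands back an even-like code), and that even-likeness then forbids $d'=d$ and pins $d'=d+1$. Everything else is a direct invocation of the earlier proposition.
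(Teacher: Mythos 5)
Your proposal is correct and takes essentially the same route as the paper: the paper's proof is simply "straightforward by Proposition \ref{prop-4.3}, so we omit it here," and your argument is exactly the natural filling-in of that omission (odd $d_{one}(n-1,k)$ forces the extremal code to be odd-like, Proposition \ref{prop-4.3} with $k$ odd then produces an even-like $[n,k,d$ or $d+1]$ code with one-dimensional hull, and even-likeness rules out the odd value $d$, pinning the distance to $d+1$). This is also the same parity trick the authors use explicitly in the proof of Corollary \ref{cor-3.9}.
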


\begin{proof}
The proof is straightforward by Proposition \ref{prop-4.3}, so we omit it here.
\end{proof}

The following propositions show some properties of the shortened and punctured codes
of binary linear codes with one-dimensional hull.

\begin{prop}\label{1-puncturn}
Let $C$ be an even-like binary linear $[n,k]$ code with ${\rm Hull}(C)=\{{\bf 0},{\bf c}_k\}$.
If $t\notin supp({\bf c}_k)$, then the punctured code $C^{\{t\}}$ of $C$ on the $t$-th coordinate is a binary linear code with one-dimensional hull.
If $t\in supp({\bf c}_k)$, then the punctured code $C^{\{t\}}$ and the shortened code $C_{\{t\}}$ of $C$ on the $t$-th coordinate are binary LCD codes.
\end{prop}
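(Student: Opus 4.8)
The plan is to analyze the hull of the punctured and shortened codes by tracking how the distinguished hull vector ${\bf c}_k$ behaves under the two operations, and to invoke Lemma \ref{lem-shorten-puncture} together with Proposition \ref{prop-short-puncture} wherever possible. The key structural fact I would rely on is that ${\rm Hull}(C)$ is one-dimensional and spanned by ${\bf c}_k$, so the dimension of the new hull is governed entirely by whether ${\bf c}_k$ (or a vector derived from it) survives in the new code and remains orthogonal to everything. Since $C$ is even-like, every codeword has even weight, which I expect to be the decisive parity fact in the $t \in supp({\bf c}_k)$ case.

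First, for the case $t \notin supp({\bf c}_k)$: here ${\bf c}_k$ has a $0$ in coordinate $t$, so puncturing on $t$ does not delete any information carried by ${\bf c}_k$. The natural approach is to observe that $t$ cannot lie in any information set of ${\rm Hull}(C)$ precisely because the single basis vector ${\bf c}_k$ of the hull is zero there; hence part (2) of Proposition \ref{prop-short-puncture} applies with $s=1$, giving $\dim\left({\rm Hull}(C^{\{t\}})\right) \in \{0,1,2\}$ a priori. I would then pin the value to exactly $1$ by exhibiting the punctured image of ${\bf c}_k$ as a nonzero hull vector of $C^{\{t\}}$: since ${\bf c}_k \cdot {\bf c}_i = 0$ for all $i$ and coordinate $t$ contributes nothing to any inner product (as $c_{k,t}=0$), the punctured vector remains self-orthogonal and orthogonal to the punctured code, so $\dim \geq 1$. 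Ruling out dimension $2$ is where I would be most careful; the cleanest route is to argue via the dual using Lemma \ref{lem-shorten-puncture}, namely $(C^{\{t\}})^\perp = (C^\perp)_{\{t\}}$, and to exploit that the LCD complement of $C$ (the $[n,k-1]$ code from Theorem \ref{thm-one}) punctures to an LCD code, forcing the extra hull dimension to be at most one.

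Next, for the case $t \in supp({\bf c}_k)$: now $c_{k,t}=1$, so puncturing deletes a $1$ from ${\bf c}_k$. The main idea is that this deletion destroys the self-orthogonality that made ${\bf c}_k$ a hull vector. Concretely, in $C$ we have ${\bf c}_k \cdot {\bf c}_k = \sum_i c_{k,i} = 0$ because $C$ is even-like and ${\bf c}_k$ has even weight; after puncturing on $t$ the weight drops by one to an odd number, so the punctured ${\bf c}_k$ is no longer self-orthogonal. This is the parity pivot of the whole proposition. I would combine this with Proposition \ref{prop-even-0-1} and the dichotomy from \cite[Lemma 2]{B-LCD-40} (exactly one of $C^{\{t\}}$, $C_{\{t\}}$ is LCD on any coordinate) applied to the LCD subcode structure, concluding that both $C^{\{t\}}$ and $C_{\{t\}}$ lose their hull and become LCD. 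For the shortened code $C_{\{t\}}$ the argument runs through part (1) of Proposition \ref{prop-short-puncture}: because $c_{k,t}=1$, the coordinate $t$ lies in an information set of ${\rm Hull}(C)$, so shortening (as well as puncturing) drops the hull dimension from $1$ to $0$, i.e. yields an LCD code.

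The hard part will be the $t \in supp({\bf c}_k)$ case, specifically verifying that \emph{both} the punctured and the shortened code become LCD simultaneously rather than one retaining a hull. Proposition \ref{prop-short-puncture}(1) immediately handles the shortened code, but the punctured code requires an independent argument. I expect the cleanest completion is to pass to the dual: $C^{\{t\}} = \left((C^\perp)_{\{t\}}\right)^\perp$ by Lemma \ref{lem-shorten-puncture}, and since $\dim\left({\rm Hull}(D)\right)=\dim\left({\rm Hull}(D^\perp)\right)$ for any $D$, it suffices to show $(C^\perp)_{\{t\}}$ is LCD. One must check that $t$ sits in an information set of ${\rm Hull}(C^\perp)={\rm Hull}(C)$, which again follows from $c_{k,t}=1$, so Proposition \ref{prop-short-puncture}(1) applied to $C^\perp$ drives that hull dimension to $0$ as well. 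Assembling these dual statements is the only delicate bookkeeping; the even-like hypothesis and the single generator ${\bf c}_k$ make every inner-product computation a parity check rather than a genuine calculation.
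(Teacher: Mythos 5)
Your proposal is correct and follows essentially the same route as the paper: for $t\notin supp({\bf c}_k)$ the paper likewise combines the surviving punctured hull vector ${\bf c}_k'$ with the fact that the even-like LCD complement $C'$ from Theorem \ref{thm-one} punctures to an LCD code, and the assembly you leave vague ("forcing the extra hull dimension to be at most one") is exactly Lemma \ref{thm-SO+Hull} applied to the decomposition $C^{\{t\}}=(C')^{\{t\}}\oplus\langle {\bf c}_k'\rangle$; for $t\in supp({\bf c}_k)$ the paper simply invokes part (1) of Proposition \ref{prop-short-puncture}. The one inefficiency is your claim that the punctured code in the second case "requires an independent argument": part (1) of Proposition \ref{prop-short-puncture} already asserts $\dim\left({\rm Hull}(C^{\{t\}})\right)=\dim\left({\rm Hull}(C_{\{t\}})\right)=s-1$ simultaneously, so your dual detour through $(C^\perp)_{\{t\}}$, while valid, is unnecessary.
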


\begin{proof}
Let $C$ be an even-like binary linear $[n,k]$ code with one-dimensional hull. From Theorems \ref{even-like-lCD} and \ref{thm-one}, there exists a basis
${\bf c}_1,{\bf c}'_1,\ldots,{\bf c}_{\frac{k-1}{2}},{\bf c}'_{\frac{k-1}{2}}, {\bf c}_k$ of $C$ such that ${\bf c}_1,{\bf c}'_1,\ldots,\\{\bf c}_{\frac{k-1}{2}},{\bf c}'_{\frac{k-1}{2}}$ satisfy the conditions of Theorem \ref{even-like-lCD} and ${\bf c}_{k}\cdot {\bf c}_{i}={\bf c}_{k}\cdot {\bf c}'_{i}=0$ for $1\leq i\leq \frac{k-1}{2}$.
Let $C'$ be the code generated by ${\bf c}_1,{\bf c}'_1,\ldots,{\bf c}_{\frac{k-1}{2}},{\bf c}'_{\frac{k-1}{2}}$.
Let ${\bf c}_k'=(c_{k,1},\ldots,c_{k,t-1},c_{k,t+1},\ldots,c_{k,n})$, where ${\bf c}_k=(c_{k,1},\ldots,c_{k,n})$.
Hence $C^{\{t\}}=(C')^{\{t\}}\oplus \langle {\bf c}_k'\rangle$.

Assume that $t\notin supp({\bf c}_k)$, i.e., $c_{k,t}=0$. It follows from \cite[Proposition 2]{B-LCD-40} that the punctured code $(C')^{\{t\}}$ of $C'$ on the $t$-th coordinate is again LCD. Since $t\notin supp({\bf c}_k)$, ${\bf c}_k' \cdot {\bf c}_k'=0$ and ${\bf c}_k'\in (C')^{\{t\}}$. Hence $C^{\{t\}}=(C')^{\{t\}}\oplus \langle {\bf c}_k'\rangle$ is a binary linear $[n,k]$ code with one-dimensional hull by Lemma \ref{thm-SO+Hull}.

If $t\in supp({\bf c}_k)$, then we can obtain the desired result by (1) of Proposition \ref{prop-short-puncture}.
\end{proof}

\begin{cor}
If $k$ is odd, $n$ is even and $d_{LCD}(n,k)$ is odd, then $d_{one}(n,k)\geq d_{LCD}(n,k)$.
\end{cor}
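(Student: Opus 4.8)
The plan is to start from an optimal binary LCD $[n,k,d]$ code $C$ with $d=d_{LCD}(n,k)$ and massage it into a one-dimensional hull code of the \emph{same} length and dimension without losing any minimum distance. Since $k$ is odd, Theorem~\ref{even-like-lCD} rules out $C$ being even-like, so $C$ is odd-like and Theorem~\ref{odd-like-lCD} supplies a basis ${\bf c}_1,\ldots,{\bf c}_k$ with ${\bf c}_i\cdot{\bf c}_j=1$ for $i=j$ and $0$ otherwise. I would then invoke the construction in Proposition~\ref{prop-extend-0-1}: prepend a $1$ to each basis vector to form the $[n+1,k]$ code $C'$ generated by $(1,{\bf c}_1),\ldots,(1,{\bf c}_k)$, which that proposition shows is even-like with one-dimensional hull and minimum distance $d$ or $d+1$. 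The first payoff is a parity observation: because $C'$ is even-like all of its codewords have even weight, and since $d$ is odd the minimum distance of $C'$ must be exactly $d+1$.

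Next I would identify the hull vector explicitly in order to know where to puncture. From $G'G'^T=J_k-I_k$ (already computed in the proof of Proposition~\ref{prop-extend-0-1}) together with $k$ odd, the $\F_2$-null space of $G'G'^T$ is $\{{\bf 0},{\bf 1}\}$, so the unique nonzero hull codeword is ${\bf c}'_k=\sum_{i=1}^k(1,{\bf c}_i)=(1,\sum_{i=1}^k{\bf c}_i)$. I would then puncture $C'$ back down to length $n$ at a coordinate $t$ lying among the original $n$ coordinates and outside $supp({\bf c}'_k)$; by Proposition~\ref{1-puncturn} the resulting code $C'^{\{t\}}$ again has one-dimensional hull. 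Because $C'$ has minimum distance $d+1\geq 2$ it has no codewords of weight one, so puncturing preserves the dimension $k$ and lowers the minimum distance by at most one, yielding a binary linear $[n,k,\geq d]$ code with one-dimensional hull and hence $d_{one}(n,k)\geq d=d_{LCD}(n,k)$.

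The step I expect to be the crux — and the only place the hypothesis ``$n$ even'' is genuinely used — is guaranteeing that a legitimate puncture coordinate $t\notin supp({\bf c}'_k)$ exists among the $n$ original coordinates; puncturing the prepended coordinate merely returns the LCD code $C$ and is useless. This reduces to checking that $\sum_{i=1}^k{\bf c}_i\neq{\bf 1}_n$. The argument I would give is a weight-parity count: $(\sum_i{\bf c}_i)\cdot(\sum_i{\bf c}_i)=\sum_{i,j}{\bf c}_i\cdot{\bf c}_j=k\equiv 1\pmod 2$, so $\sum_i{\bf c}_i$ has odd weight, while ${\bf 1}_n$ has weight $n$, which is even. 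The two vectors therefore differ, $\sum_i{\bf c}_i$ has a zero coordinate, and a valid puncture position exists. Everything else is routine bookkeeping on how a single puncture affects length, dimension, and minimum distance.
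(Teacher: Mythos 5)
Your proposal is correct and follows essentially the same route as the paper: extend the optimal LCD code to an even-like $[n+1,k,d+1]$ code with one-dimensional hull via Proposition~\ref{prop-extend-0-1}, then puncture at a coordinate outside the support of the hull generator using Proposition~\ref{1-puncturn}. The only difference is cosmetic: where you explicitly compute the hull generator $(1,\sum_i{\bf c}_i)$ and run a weight-parity count against ${\bf 1}_n$, the paper gets the same conclusion more quickly by observing that ${\bf 1}\notin C'$ because $C'$ is even-like and $n+1$ is odd.
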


\begin{proof}
If there exists a binary LCD $[n,k,d_{LCD}(n,k)]$ code, then it follows from Proposition \ref{prop-extend-0-1} that there is an even-like binary linear $[n+1,k,d_{LCD}(n,k)+1]$ code $C$ with one-dimensional hull. Since $n+1$ is odd, ${\bf 1}\notin C$.
By Proposition \ref{1-puncturn}, there is a binary linear $[n,k,d_{LCD}(n,k)]$ code with one-dimensional hull. So $d_{one}(n,k)\geq d_{LCD}(n,k)$.
\end{proof}

\begin{cor}
If $k$ is odd, $n$ is even and $d_{one}(n,k)$ is odd, then $d_{one}(n,k)\leq d_{LCD}(n,k)$.
\end{cor}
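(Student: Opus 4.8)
The plan is to start from a code meeting the bound and manufacture from it an LCD code of the same length and dimension whose minimum distance is at least $d_{one}(n,k)$. Write $d=d_{one}(n,k)$ and let $C$ be a binary linear $[n,k,d]$ code with one-dimensional hull. The first step is a structural observation that drives the whole argument: since $d$ is odd, a minimum-weight codeword of $C$ has odd weight, so $C$ cannot be even-like and must be odd-like. This is exactly where the hypothesis that $d_{one}(n,k)$ is odd is used, and it rules out the even-like case that Lemma \ref{lem-k-odd} would otherwise allow for odd $k$.

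Next I would apply Proposition \ref{prop-4.3} to the odd-like code $C$ to obtain an even-like binary linear $[n+1,k]$ code $C'$ with one-dimensional hull whose minimum distance is $d$ or $d+1$. The crucial refinement, which I expect to be the main obstacle, is to show that this distance is in fact $d+1$. I would read this off the construction underlying Proposition \ref{prop-4.3}: one extends the orthonormal-type basis supplied by Theorem \ref{odd-like-lCD} by prepending a single new coordinate, and for every codeword the entry in this new coordinate equals the parity ${\bf x}\cdot{\bf x}$ of the weight of its $C$-part. Hence an odd-weight codeword of $C$ gains one in weight while an even-weight codeword is unchanged; since $d$ is odd, every even-weight codeword of $C$ already has weight at least $d+1$, so every nonzero codeword of $C'$ has weight at least $d+1$. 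Thus $C'$ is an even-like $[n+1,k,\ge d+1]$ code with one-dimensional hull.

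Finally I would puncture. Writing ${\rm Hull}(C')=\{{\bf 0},{\bf h}\}$, choose a coordinate $t\in supp({\bf h})$. By Proposition \ref{1-puncturn} the punctured code $(C')^{\{t\}}$ is a binary LCD code; its length is $n$, and because the minimum distance $d+1\ge 2$ forbids weight-one codewords, puncturing preserves the dimension at $k$. Since puncturing lowers the minimum distance by at most one, $(C')^{\{t\}}$ is a binary LCD $[n,k,\ge d]$ code, whence $d_{LCD}(n,k)\ge d=d_{one}(n,k)$. The hypothesis that $n$ is even gives $n+1$ odd, so ${\bf 1}\notin C'$, aligning this construction with the setup of the preceding corollary; the whole argument is the mirror of that corollary with the roles of LCD codes and one-dimensional-hull codes interchanged.

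The only genuinely delicate point is the increase of the minimum distance by one under Proposition \ref{prop-4.3}; this rests entirely on the oddness of $d$, and if it failed the punctured code would only be guaranteed distance $\ge d-1$. Everything else is a direct invocation of Propositions \ref{prop-4.3} and \ref{1-puncturn} together with the elementary behaviour of dimension and distance under a single puncture.
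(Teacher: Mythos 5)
Your proof is correct and follows essentially the same route as the paper's: apply Proposition \ref{prop-4.3} to get an even-like $[n+1,k,d_{one}(n,k)+1]$ code with one-dimensional hull, then puncture inside the support of the hull generator via Proposition \ref{1-puncturn} to obtain an LCD $[n,k,\geq d_{one}(n,k)]$ code. You also make explicit two points the paper leaves implicit (that oddness of $d_{one}(n,k)$ forces the starting code to be odd-like, and that the extended distance is exactly $d+1$ --- which follows even more simply from the extended code being even-like), so your write-up is, if anything, more complete.
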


\begin{proof}
If there exists a binary linear $[n,k,d_{one}(n,k)]$ code with one-dimensional hull, then it follows from Proposition \ref{prop-4.3} that there is an even-like binary linear $[n+1,k,d_{one}(n,k)+1]$ code $C$ with one-dimensional hull. Since $n+1$ is odd, ${\bf 1}\notin C$.
By Proposition \ref{1-puncturn}, there is a binary LCD $[n,k,d_{one}(n,k)]$ code. Hence $d_{one}(n,k)\leq d_{LCD}(n,k)$.
\end{proof}

\begin{prop}
Let $C$ be an odd-like binary linear code with ${\rm Hull}(C)=\langle{\bf c}\rangle$ and even-like dual. If $t\notin supp({\bf c})$, then the shortened code $C_{\{t\}}$ of $C$ has one-dimensional hull.
\end{prop}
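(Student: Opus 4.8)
The plan is to reduce the statement to the already-established even-like \emph{puncturing} result by passing to the dual code, exploiting the fact that shortening and puncturing are interchanged under duality and that the hull is literally unchanged when one replaces a code by its dual.

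First I would record the two structural facts that drive everything. Since $(C^\perp)^\perp=C$ over $\F_2$, we have
$${\rm Hull}(C^\perp)=C^\perp\cap (C^\perp)^\perp=C^\perp\cap C={\rm Hull}(C)=\langle{\bf c}\rangle,$$
so $C^\perp$ is itself a binary linear code with one-dimensional hull, and the generator of that hull is the \emph{same} vector ${\bf c}$. By hypothesis $C^\perp$ is even-like, so $C^\perp$ meets exactly the hypotheses of Proposition~\ref{1-puncturn}. Moreover the support condition transfers verbatim: the assumption $t\notin supp({\bf c})$ is precisely the hypothesis needed to puncture $C^\perp$ using that proposition.

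Next I would apply Proposition~\ref{1-puncturn} to $C^\perp$. Because $t\notin supp({\bf c})$ and ${\bf c}$ generates ${\rm Hull}(C^\perp)$, the punctured code $(C^\perp)^{\{t\}}$ has one-dimensional hull. Then part (1) of Lemma~\ref{lem-shorten-puncture} gives $(C^\perp)^{\{t\}}=(C_{\{t\}})^\perp$, and since ${\rm Hull}(D)={\rm Hull}(D^\perp)$ for any binary code $D$, taking $D=C_{\{t\}}$ yields
$$\dim\left({\rm Hull}(C_{\{t\}})\right)=\dim\left({\rm Hull}((C_{\{t\}})^\perp)\right)=\dim\left({\rm Hull}((C^\perp)^{\{t\}})\right)=1,$$
which is the claim.

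The argument is short precisely because all of the combinatorial work is carried out in Proposition~\ref{1-puncturn}; there is no serious obstacle here, only bookkeeping under duality. The one delicate point to verify is that duality converts the shortening of $C$ into the puncturing of $C^\perp$ (not the reverse) and that the hull generator ${\bf c}$ is preserved, so that the support condition on $t$ remains intact; both are immediate from ${\rm Hull}(C)={\rm Hull}(C^\perp)$ together with Lemma~\ref{lem-shorten-puncture}(1). I would also note that the assumption that $C$ is odd-like is simply the dual counterpart of the even-like hypothesis on $C^\perp$ and plays no further role in the computation.
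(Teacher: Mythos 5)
Your proposal is correct and follows essentially the same route as the paper's own proof: pass to the dual, observe ${\rm Hull}(C^\perp)={\rm Hull}(C)=\langle{\bf c}\rangle$ so that Proposition~\ref{1-puncturn} applies to the even-like code $C^\perp$, and then transfer the conclusion back via $(C^\perp)^{\{t\}}=(C_{\{t\}})^\perp$ from Lemma~\ref{lem-shorten-puncture} together with the invariance of the hull under duality. No discrepancies to report.
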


\begin{proof}
Obviously, ${\rm Hull}(C^\perp) ={\rm Hull}(C)=\{{\bf0},{\bf c}\}$. If $t\notin supp({\bf c})$, it follows from Proposition \ref{1-puncturn} that the punctured code $(C^\perp)^{\{t\}}$ of $C^\perp$ has one-dimensional hull. By Lemma \ref{lem-shorten-puncture},
$${\rm Hull}(C_{\{t\}})={\rm Hull}((C_{\{t\}})^\perp)={\rm Hull}((C^\perp)^{\{t\}}),$$
which implies that the shortened code $C_{\{t\}}$ of $C$ has one-dimensional hull.
\end{proof}

\begin{prop}\label{prop-4.9}
Let $C$ be a binary linear $[n,k]$ code with generator matrix $G$. Let $C'$ be a binary linear $[n+2,k]$ code with the generator matrix $({\bf v}^T,{\bf v}^T,G)$, where ${\bf v}\in \F_2^k$. Then $C$ has one-dimensional hull if and only if $C'$ has one-dimensional hull.
\end{prop}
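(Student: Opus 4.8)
The plan is to reduce everything to the rank characterization of the hull dimension from Lemma~\ref{lem-2.3}, which states that a binary linear $[n,k]$ code with generator matrix $G$ has hull dimension $k-{\rm rank}(GG^T)$. Since the proposition asserts an equivalence about the hull being one-dimensional, it suffices to prove the stronger fact that $C$ and $C'$ have the \emph{same} hull dimension; the one-dimensional case then follows immediately by setting this common dimension equal to $1$. Before doing so I would check that $C'$ really is an $[n+2,k]$ code, i.e., that $G'=({\bf v}^T,{\bf v}^T,G)$ has rank $k$. This is immediate, since $G'$ contains $G$ as the submatrix formed by its last $n$ columns and ${\rm rank}(G)=k$.

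The key step is to evaluate $G'G'^T$ over $\F_2$. Writing $G'=({\bf v}^T,{\bf v}^T,G)$ and multiplying blockwise gives
$$G'G'^T={\bf v}^T{\bf v}+{\bf v}^T{\bf v}+GG^T.$$
Over $\F_2$ the two copies of the outer product ${\bf v}^T{\bf v}$ sum to zero, so $G'G'^T=GG^T$. Consequently ${\rm rank}(G'G'^T)={\rm rank}(GG^T)$, and by Lemma~\ref{lem-2.3},
$$\dim\left({\rm Hull}(C')\right)=k-{\rm rank}(G'G'^T)=k-{\rm rank}(GG^T)=\dim\left({\rm Hull}(C)\right).$$
Thus the two hulls always have equal dimension, and in particular one is one-dimensional precisely when the other is.

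The main (and essentially only) subtlety is recognizing that appending the column ${\bf v}^T$ \emph{twice} causes its entire contribution to $G'G'^T$ to cancel in characteristic $2$; once this observation is made, the rest is a direct application of Lemma~\ref{lem-2.3} and there is no further obstacle. I would present this cancellation explicitly and remark that the argument in fact yields the stronger conclusion that $\dim({\rm Hull}(C'))=\dim({\rm Hull}(C))$ for every ${\bf v}\in\F_2^k$, of which the one-dimensional case is a special instance.
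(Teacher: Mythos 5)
Your proof is correct and follows exactly the paper's argument: the paper also observes that $G'G'^T=GG^T$ (the two copies of ${\bf v}^T$ cancel over $\F_2$) and concludes via Lemma~\ref{lem-2.3}. Your write-up simply makes explicit the blockwise computation and rank check that the paper leaves as "easy to check."
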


\begin{proof}
It is easy to check that $GG^T=G'G'^T$. Hence the result follows.
\end{proof}

\begin{cor}
If $d_{one}(n,k)$ is odd, then $d_{one}(n+2,k)\geq d_{one}(n,k)+1$.
\end{cor}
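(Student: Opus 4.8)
The plan is to start from a distance-optimal code with one-dimensional hull and feed it into the column-doubling construction of Proposition~\ref{prop-4.9}, choosing the appended vector ${\bf v}$ so that the two new coordinates record the parity of each codeword. Concretely, let $C$ be a binary linear $[n,k,d]$ code with one-dimensional hull and generator matrix $G$, where $d=d_{one}(n,k)$ is odd. I would take ${\bf v}\in \F_2^k$ to be the vector of row parities of $G$, that is, ${\bf v}^T=G{\bf 1}^T$ where ${\bf 1}=(1,\ldots,1)\in \F_2^n$. With this choice, for every ${\bf x}\in \F_2^k$ one has ${\bf x}\cdot {\bf v}={\bf x}G{\bf 1}^T\equiv {\rm wt}({\bf x}G)\pmod 2$, so the new coordinate equals the parity of the weight of the corresponding codeword of $C$.

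Next, let $C'$ be the binary linear $[n+2,k]$ code with generator matrix $({\bf v}^T,{\bf v}^T,G)$. By Proposition~\ref{prop-4.9}, $C'$ has one-dimensional hull precisely because $C$ does, and this holds for \emph{any} ${\bf v}$, so the hull is automatically correct and the whole task reduces to controlling the minimum distance. A typical codeword of $C'$ is $({\bf x}\cdot {\bf v},{\bf x}\cdot {\bf v},{\bf x}G)$, whose weight is $2({\bf x}\cdot {\bf v})+{\rm wt}({\bf x}G)$; by the choice of ${\bf v}$ this equals ${\rm wt}({\bf x}G)$ when ${\rm wt}({\bf x}G)$ is even and ${\rm wt}({\bf x}G)+2$ when ${\rm wt}({\bf x}G)$ is odd. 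I would then verify that every nonzero codeword of $C'$ has weight at least $d+1$ by distinguishing the parity of ${\rm wt}({\bf x}G)$ for ${\bf x}\neq {\bf 0}$: if ${\rm wt}({\bf x}G)$ is odd, the $C'$-weight is ${\rm wt}({\bf x}G)+2\geq d+2$; if ${\rm wt}({\bf x}G)$ is even, then since every nonzero codeword of $C$ has weight at least $d$ and $d$ is odd, an even weight that is $\geq d$ is necessarily $\geq d+1$, so the $C'$-weight is ${\rm wt}({\bf x}G)\geq d+1$. Hence $C'$ is an $[n+2,k,\geq d+1]$ code with one-dimensional hull, which yields $d_{one}(n+2,k)\geq d+1=d_{one}(n,k)+1$.

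The only delicate point is the role of the parity hypothesis on $d$, and this is where I expect the argument to need care rather than computation. Because Proposition~\ref{prop-4.9} preserves the hull dimension for every ${\bf v}$, doubling a parity coordinate is \emph{free} as far as the hull is concerned, and its effect is simply to add $2$ to the weight of each odd-weight codeword. The minimum-weight codewords of $C$ have weight exactly $d$, which is odd, so they are exactly the codewords that get boosted to weight $d+2$; meanwhile the even-weight codewords cannot have weight $d$ at all and are therefore already of weight $\geq d+1$. If $d$ were even this clean dichotomy would fail, since an even-weight codeword of weight exactly $d$ could survive with weight $d$ in $C'$. Thus the whole proof hinges on the parity bookkeeping, and I anticipate no further obstacle once ${\bf v}$ is fixed as the parity vector.
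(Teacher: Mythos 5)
Your proof is correct and is essentially the paper's own argument: your parity vector ${\bf v}^T=G{\bf 1}^T$ is exactly the column appended in the extended code $\overline{C}$, so the matrix $({\bf v}^T,{\bf v}^T,G)$ you feed into Proposition~\ref{prop-4.9} coincides with the one the paper uses. The only difference is presentational --- the paper cites the standard fact that extending a code of odd minimum distance $d$ yields an $[n+1,k,d+1]$ code, while you verify the same even/odd weight dichotomy by hand.
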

\begin{proof}
Let $C$ be a binary linear $[n,k,d_{one}(n,k)]$ code with one-dimensional hull and generator matrix $G$. Since $d_{one}(n,k)$ is odd, the extended code $\overline{C}$ of $C$ is a binary linear $[n+1,k,d_{one}(n,k)+1]$ code. Let us assume that $\overline{C}$ has a generative matrix $({\bf v}^T,G)$. By Proposition \ref{prop-4.9}, the generator matrix $({\bf v}^T,{\bf v}^T,G)$ generates a binary linear $[n+2,k]$ code with one-dimensional hull and the minimum distance at least $d_{one}(n,k)+1$.
\end{proof}

\section{A building-up construction for binary linear codes with one-dimensional hull}

Chen \cite{ChenIT} proved that an LCD code over $\F_{2^s}$ ($s\geq 2$) is equivalent to a linear code with one-dimension hull under a weak condition. An interesting topic is to construct binary linear codes with one-dimensional hull from binary LCD codes.
Next, we introduce a complete building-up construction for linear codes with one-dimensional hull as follows.

\begin{theorem}~{\rm \cite[Theorem 1]{Kim_preprint}}\label{build-1}
Let $C$ be a binary LCD $[n,k]$ code. Let $G$ be a generator matrix for $C$.
Suppose that ${\bf x}=(x_1, x_2, \dots, x_n) \in \F_2^n$ satisfies ${\bf x} \cdot {\bf x} =1$.
Let $y_i = {\bf x} \cdot {\bf r}_i$ for $1 \le i \le k$ where ${\bf r}_i$ is the $i$-th row of $G$.
The following matrix
\[
G_1 = \left[
\begin{array}{cc|ccc}
1   & 0   & x_1 & \dots & x_n \\
\hline
y_1 & y_1 &     &  {\bf r}_1  &     \\
y_{2} & y_{2} &  & {\bf r}_2  &   \\
\vdots & \vdots & &  \vdots    &   \\
y_k & y_k &     &   {\bf r}_k &     \\
\end{array}
\right]
\]
generates a binary linear $[n+2, k+1]$ code $C_1$ with one-dimensional hull.
\end{theorem}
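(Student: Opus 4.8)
The plan is to reduce everything to a single rank computation via Lemma \ref{lem-2.3}, which tells us that $C_1$ has one-dimensional hull precisely when ${\rm rank}(G_1G_1^T)=(k+1)-1=k$. So, after checking that $G_1$ has full row rank $k+1$ (so that $C_1$ is genuinely an $[n+2,k+1]$ code), the entire argument comes down to computing the Gram matrix $G_1G_1^T$ and reading off its rank.

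First I would confirm that the rows of $G_1$ are linearly independent. Restricting rows $2,\ldots,k+1$ to the last $n$ coordinates recovers ${\bf r}_1,\ldots,{\bf r}_k$, which are independent because $G$ has rank $k$; and the top row cannot lie in their span, since it has a $1$ in the first coordinate and a $0$ in the second, whereas every combination of the lower rows has equal first and second entries. Hence ${\rm rank}(G_1)=k+1$.

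The heart of the proof is the Gram-matrix computation. Writing the top row as $(1,0,{\bf x})$ and the $i$-th lower row as $(y_i,y_i,{\bf r}_i)$, three cancellations occur in characteristic $2$: the new diagonal entry is $1+0+{\bf x}\cdot{\bf x}=1+1=0$ by the hypothesis ${\bf x}\cdot{\bf x}=1$; each inner product of the top row with a lower row is $y_i+0+{\bf x}\cdot{\bf r}_i=y_i+y_i=0$ by the definition $y_i={\bf x}\cdot{\bf r}_i$; and the inner product of two lower rows is $y_iy_j+y_iy_j+{\bf r}_i\cdot{\bf r}_j={\bf r}_i\cdot{\bf r}_j$, the duplicated $y$-columns canceling. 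Thus
\[
G_1G_1^T=\begin{pmatrix} 0 & {\bf 0} \\ {\bf 0}^T & GG^T \end{pmatrix},
\]
a matrix with a zero first row and first column whose lower-right $k\times k$ block is exactly $GG^T$.

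Finally, since $C$ is LCD, Lemma \ref{lem-2.3} gives ${\rm rank}(GG^T)=k$, so $GG^T$ is nonsingular; the block structure then forces ${\rm rank}(G_1G_1^T)=k$. Applying Lemma \ref{lem-2.3} to $C_1$ yields $\dim({\rm Hull}(C_1))=(k+1)-k=1$, as desired. The only delicate point is the Gram computation itself: the two identical $y_i$-columns together with the normalization ${\bf x}\cdot{\bf x}=1$ are engineered precisely so that all the extra contributions vanish modulo $2$, leaving $GG^T$ untouched in the lower block. Once that block form is verified, the rank conclusion---and hence the one-dimensional hull---is immediate.
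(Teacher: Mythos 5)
Your proof is correct and complete. Note that the paper itself gives no proof of Theorem \ref{build-1}: it is quoted verbatim from \cite[Theorem 1]{Kim_preprint}, so there is no in-paper argument to compare against. Your route---checking full row rank, computing the block Gram matrix $G_1G_1^T$ (zero first row and column, lower-right block equal to $GG^T$), and invoking Lemma \ref{lem-2.3} twice, once to get ${\rm rank}(GG^T)=k$ from LCD-ness and once to conclude $\dim({\rm Hull}(C_1))=(k+1)-k=1$---is a valid self-contained proof, and it is exactly the style of argument the paper uses for neighboring results (e.g., the proofs of Theorem \ref{thm-one} and Lemma \ref{lemma-0-1}, which likewise reduce hull computations to the rank of a block Gram matrix), so it is fully consistent with the paper's methods.
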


\begin{example}
We start from a binary LCD $[12, 2, 6]$ code. By applying Theorem~\ref{build-1}, we can construct a binary linear $[14,3,7]$ code with one-dimensional hull and the generator matrix
\[G=\left[\begin{array}{c|c}
1 0 & 1 0 0 1 1 0 0 1 0 1 1 1 \\
\hline
11 & 111 111 0000 00\\
00 & 000 111 1111 00
\end{array}
\right]\]
\end{example}

The converse of the building-up construction is also true in the following sense.

\begin{theorem}
Let $C$ be a binary linear $[n,k,d]$ code with one-dimensional hull such that $d>2$ and ${\rm Hull}(C)\neq \langle{\bf 1}\rangle$. Then $C$ can be obtained from some binary LCD $[n-2, k-1]$ code $C_0$ using the above building-up construction.
\end{theorem}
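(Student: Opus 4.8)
The plan is to reverse-engineer the construction of Theorem~\ref{build-1}. Write ${\rm Hull}(C)=\langle{\bf h}\rangle$. Since ${\bf h}\in C\cap C^\perp$ we have ${\bf h}\cdot{\bf h}=0$, so ${\bf h}$ has even weight. Because ${\bf h}\neq{\bf 0}$ and, by hypothesis, ${\bf h}\neq{\bf 1}$, there is a coordinate where ${\bf h}$ is $1$ and a coordinate where it is $0$; permuting coordinates (which preserves the hull dimension, the minimum distance, and the LCD property) I may assume $h_1=1$ and $h_2=0$. Set ${\bf x}=(h_3,\dots,h_n)\in\F_2^{n-2}$. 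Then ${\bf x}\cdot{\bf x}\equiv{\rm wt}({\bf h})-h_1-h_2\pmod 2$, which equals $1$ since ${\rm wt}({\bf h})$ is even; thus ${\bf x}\cdot{\bf x}=1$, exactly the input condition of the construction. The goal is to produce an LCD code $C_0$ and a generator matrix so that the construction returns $C$.

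Let $\phi\colon C\to\F_2$ be given by $\phi({\bf c})=c_1+c_2$, and put $D=\ker\phi=\{{\bf c}\in C: c_1=c_2\}$. Since $\phi({\bf h})=1$, the functional $\phi$ is onto, so $\dim D=k-1$ and $C=D\oplus\langle{\bf h}\rangle$. I claim $D$ is LCD, and this is the crux of the argument. Take ${\bf v}\in{\rm Hull}(D)$. Then ${\bf v}\cdot{\bf w}=0$ for all ${\bf w}\in D$, while ${\bf v}\cdot{\bf h}=0$ automatically, because ${\bf v}\in D\subseteq C$ and ${\bf h}\in C^\perp$. Hence ${\bf v}$ is orthogonal to $D+\langle{\bf h}\rangle=C$, so ${\bf v}\in C\cap C^\perp={\rm Hull}(C)=\langle{\bf h}\rangle$. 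As ${\bf h}\notin D$, this forces ${\bf v}={\bf 0}$, proving ${\rm Hull}(D)=\{{\bf 0}\}$.

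Define $C_0=D^{\{1,2\}}$, the puncturing of $D$ on its first two coordinates. Every nonzero codeword of $D\subseteq C$ has weight at least $d>2$, hence cannot be supported inside $\{1,2\}$; therefore puncturing is injective on $D$ and $C_0$ is an $[n-2,k-1]$ code. Moreover, for any ${\bf w},{\bf w}'\in D$ the two equal leading coordinates contribute $w_1w'_1+w_2w'_2=2w_1w'_1=0$ to ${\bf w}\cdot{\bf w}'$, so the Gram matrix of $D$ coincides with the Gram matrix of $C_0$. Since $D$ is LCD, this matrix is nonsingular (via Lemma~\ref{lem-2.3}), and hence $C_0$ is LCD. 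Here the hypothesis $d>2$ is exactly what prevents the dimension from dropping, and ${\bf h}\neq{\bf 1}$ is what supplied the coordinate $h_2=0$.

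Finally I recover $C$. Fix a basis ${\bf g}_1,\dots,{\bf g}_{k-1}$ of $D$; each ${\bf g}_i$ has equal first two entries, say $u_i$, and ${\bf r}_i:={\bf g}_i^{\{1,2\}}$ is the corresponding row of a generator matrix $G_0$ of $C_0$. Expanding the orthogonality ${\bf g}_i\cdot{\bf h}=0$ gives $u_i\cdot 1+u_i\cdot 0+{\bf r}_i\cdot{\bf x}=0$, that is, $y_i:={\bf x}\cdot{\bf r}_i=u_i$. Applying the construction of Theorem~\ref{build-1} to $C_0$, $G_0$, and ${\bf x}$ therefore yields the matrix with first row $(1,0,{\bf x})={\bf h}$ and remaining rows $(y_i,y_i,{\bf r}_i)=(u_i,u_i,{\bf r}_i)={\bf g}_i$. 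These $k$ rows span $D\oplus\langle{\bf h}\rangle=C$, so the construction returns $C$, as required. The main obstacle is the LCD claim for $D$; once the orthogonality argument above is in place, the dimension count, the Gram-matrix identity, and the matching $y_i=u_i$ are routine.
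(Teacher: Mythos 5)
Your proof is correct, and at its core it is the same construction as the paper's: your subcode $D=\{{\bf c}\in C \mid c_1=c_2\}$ is exactly the complement the paper splits off (the rows $(1,1,{\bf b}_1,{\bf a}_1+{\bf a}_2)$ and $(0,0,{\bf e}_i,{\bf a}_i)$ of its matrix all have equal first two coordinates, and they span a $(k-1)$-dimensional subspace of your $\ker\phi$, hence equal it), your ${\bf x}=(h_3,\dots,h_n)$ is the paper's ${\bf x}=({\bf b}_1\,|\,{\bf a}_1)$, and both proofs finish by feeding $C_0$ and ${\bf x}$ into Theorem~\ref{build-1} and checking that the output is (equivalent to) $C$. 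Where you genuinely differ is in how the intermediate facts are established. The paper normalizes a generator matrix into a systematic-like form with the hull generator as first row, asserts the LCD property of the $[n,k-1]$ subcode with ``it is not difficult to check,'' and invokes \cite[Proposition 4]{B-LCD-40} for the LCD property of $C_0$; you instead define $D$ intrinsically as the kernel of the functional $c_1+c_2$, prove ${\rm Hull}(D)=\{{\bf 0}\}$ by the containment argument ${\rm Hull}(D)\subseteq {\rm Hull}(C)\cap D=\langle{\bf h}\rangle\cap D=\{{\bf 0}\}$, and transfer LCD-ness to $C_0$ through the equality of Gram matrices (the equal first two coordinates contribute $2w_1w_1'=0$ to every inner product), using Lemma~\ref{lem-2.3}. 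Your route buys self-containedness---no WLOG matrix normal form and no external proposition---and it makes visible exactly where each hypothesis enters: $d>2$ gives injectivity of the puncturing (so $C_0$ has dimension $k-1$), and ${\rm Hull}(C)\neq\langle{\bf 1}\rangle$ supplies the coordinate with $h_2=0$ needed so that ${\bf x}\cdot{\bf x}=1$; it also recovers $C$ exactly rather than only up to row equivalence of a normalized matrix. The paper's route is shorter on the page because it delegates these verifications to cited results.
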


\begin{proof}
Let $G$ be a generator matrix of $C$ with one-dimensional hull. Without loss of
generality, we may assume that
\[
G=\left[ \begin{array}{c|c|c}
1 0 &  {\bf b}_1 & {\bf a}_1 \\
0 1 &  {\bf 0} & {\bf a}_2 \\
0 0 & {\bf e}_3 & {\bf a}_3 \\
\vdots & \vdots& \vdots \\
0 0 & {\bf e}_k & {\bf a}_k \\
\end{array}
\right],
\]
where ${\bf c}=(1,0,{\bf b}_1,{\bf a}_1)\in {\rm Hull}(C)$ and ${\bf e}_i$ is the $(i-2)$-th row of $I_{k-2}$ (the identity matrix). It is not difficult to check that the following matrix
\[
\left[ \begin{array}{c|c|c}
1 1 &  {\bf b}_1 & {\bf a}_1+{\bf a}_2 \\
0 0 & {\bf e}_3 & {\bf a}_3 \\
\vdots & \vdots& \vdots \\
0 0 & {\bf e}_k & {\bf a}_k \\
\end{array}
\right]
\]
generates an LCD $[n,k-1]$ code.

It suffices to prove that there exist a vector ${\bf x} = (x_1, \dots,  x_{n-2}$) and an LCD code $C_0$ of length $n-2$ whose extended code $C_1$, by Theorem~\ref{build-1}, is a code equivalent to $C$. To do that, first consider a linear code $C_0$ with the following generator matrix:
\[
G_0=\left[ \begin{array}{c|c}
  {\bf b}_1 & {\bf a}_1+{\bf a}_2 \\
 {\bf e}_3 & {\bf a}_3 \\
 \vdots& \vdots \\
 {\bf e}_k & {\bf a}_k \\
\end{array}
\right],
\]
which is an LCD $[n-2,k-1]$ code by \cite[Proposition 4]{B-LCD-40}.

Using the row ${\bf x} = ({\bf b}_1 | {\bf a}_1)$ of length $n-2$ and $G_0$, we get a
generator matrix $G_1$ of a linear $[n,k]$ code $C_1$ by Theorem~\ref{build-1}, in this case, ${\mbox{wt}}({\bf x})$ is odd.
\[
G_1= \left[ \begin{array}{c|c|c}
1 0 &  {\bf b}_1 & {\bf a}_1 \\
1 1 &  {\bf b}_1 & {\bf a}_1+{\bf a}_2 \\
0 0 & {\bf e}_3 & {\bf a}_3 \\
\vdots & \vdots& \vdots \\
0 0 & {\bf e}_k & {\bf a}_k \\
\end{array}
\right]
\sim
\left[ \begin{array}{c|c|c}
1 0 &  {\bf b}_1 & {\bf a}_1 \\
0 1 &  {\bf 0} & {\bf a}_2 \\
0 0 & {\bf e}_3 & {\bf a}_3 \\
\vdots & \vdots& \vdots \\
0 0 & {\bf e}_k & {\bf a}_k \\
\end{array}
\right]
=G.
\]
Thus the given code $C$ is
equivalent to $C_1$, as desired. This completes the proof.
\end{proof}

\begin{theorem}
Let $C$ be a binary linear $[n,k]$ code $C$ with ${\rm Hull}(C)=\langle{\bf 1}\rangle$. Then $n$ is even, $k$ is odd and there exists an even-like binary LCD $[n, k-1]$ code $C_0$ such that
$C=C_0\oplus \langle{\bf 1}\rangle$.
\end{theorem}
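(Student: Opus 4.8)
The plan is to extract the three conclusions in order and then read off the decomposition from Theorem~\ref{thm-one}. First I would use that ${\bf 1}\in{\rm Hull}(C)=C\cap C^\perp$. Membership in $C^\perp$ says ${\bf 1}\cdot{\bf c}=0$ for every ${\bf c}\in C$, i.e.\ every codeword has even weight, so $C$ is an even-like code. Membership in $C$ lets me take ${\bf c}={\bf 1}$, giving ${\bf 1}\cdot{\bf 1}=n\equiv 0\pmod 2$, hence $n$ is even. Since $C$ is now seen to be an even-like binary linear code with one-dimensional hull, Lemma~\ref{lem-k-odd} immediately yields that $k$ is odd. This settles the two parity claims before any structural work.

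Next I would produce $C_0$ by applying Theorem~\ref{thm-one} to the even-like code $C$: it furnishes a basis ${\bf c}_1,\ldots,{\bf c}_k$ for which $C_0:=\langle{\bf c}_1,\ldots,{\bf c}_{k-1}\rangle$ is an even-like binary LCD $[n,k-1]$ code and ${\bf c}_k\cdot{\bf c}_i=0$ for all $1\le i\le k$. The orthogonality condition means ${\bf c}_k\in C^\perp$, and since ${\bf c}_k\in C$ as a basis vector, ${\bf c}_k\in{\rm Hull}(C)=\langle{\bf 1}\rangle$; being nonzero, ${\bf c}_k={\bf 1}$. This identification is the only place where the hypothesis ${\rm Hull}(C)=\langle{\bf 1}\rangle$ (rather than merely one-dimensionality of the hull) is used, and it is what pins the hull generator to be the all-ones vector.

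Finally I would check that the sum $C_0+\langle{\bf 1}\rangle$ is direct and exhausts $C$. To see ${\bf 1}\notin C_0$, note that ${\bf c}_k={\bf 1}$ is orthogonal to each generator of $C_0$, so ${\bf 1}\in C_0^\perp$; if in addition ${\bf 1}\in C_0$ then ${\bf 1}\in{\rm Hull}(C_0)$, contradicting that $C_0$ is LCD. Hence $C_0\cap\langle{\bf 1}\rangle=\{{\bf 0}\}$ and $\dim(C_0+\langle{\bf 1}\rangle)=(k-1)+1=k$; as both $C_0$ and ${\bf 1}={\bf c}_k$ lie in $C$ with $\dim C=k$, we conclude $C=C_0\oplus\langle{\bf 1}\rangle$. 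There is no real obstacle here: the argument is pure orthogonality and dimension bookkeeping built on Lemma~\ref{lem-k-odd} and Theorem~\ref{thm-one}, with the single substantive point being the deduction ${\bf c}_k={\bf 1}$; I expect no nontrivial computation.
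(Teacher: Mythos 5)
Your proof is correct and takes essentially the same approach as the paper: both derive that $n$ is even and $C$ is even-like from ${\bf 1}\in C\cap C^\perp$, get $k$ odd from Lemma~\ref{lem-k-odd}, and decompose $C$ as an even-like binary LCD $[n,k-1]$ code plus $\langle{\bf 1}\rangle$. The only difference is bookkeeping: the paper extends ${\bf 1}$ to a basis of $C$ and cites \cite[Lemma 22]{LCD-equivalent} directly for the LCD-ness of the complementary code, whereas you invoke Theorem~\ref{thm-one} (whose proof rests on that same lemma) and then pin down ${\bf c}_k={\bf 1}$ using ${\rm Hull}(C)=\langle{\bf 1}\rangle$.
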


\begin{proof}
Since ${\rm Hull}(C)=\langle{\bf 1}\rangle$, $C$ and $C^\perp$ are even-like, which implies that $n$ is even. Note that $k$ is odd by Lemma~\ref{lem-k-odd}.
Let ${\bf c}_1,{\bf c}_2,\ldots,{\bf c}_{k-1},{\bf 1}$ be a basis of $C$. Then it follows from \cite[Lemma 22]{LCD-equivalent} that the code $C_0$ generated by ${\bf c}_1,{\bf c}_2,\ldots,{\bf c}_{k-1}$ is an even-like binary LCD $[n,k-1]$ code. Hence $C=C_0\oplus \langle{\bf 1}\rangle$. This completes the proof.
\end{proof}

Using Theorem~\ref{build-1}, we can obtain the following corollary.

\begin{cor}\label{Methods}
 Let $C$ be a binary LCD $[n,k]$ code with generator matrix $G$. Suppose that ${\bf x}\in C^\perp$ and ${\rm wt}({\bf x})$ is odd. Then the following matrix
$$
\begin{bmatrix}
1 & {\bf x} \\
 {\rm{\bf 0}} & G
\end{bmatrix}
$$
generates a binary linear $[n+1,k+1]$ code with one-dimensional hull.
\end{cor}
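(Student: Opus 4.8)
The plan is to recognize this corollary as the degenerate case of the building-up construction (Theorem~\ref{build-1}) in which the chosen vector is taken from the dual code, so that the two-column extension collapses to a one-column extension. First I would check the hypotheses of Theorem~\ref{build-1}. Over $\F_2$ we have ${\bf x}\cdot{\bf x}=\sum_i x_i^2=\sum_i x_i={\rm wt}({\bf x})\bmod 2$, so the assumption that ${\rm wt}({\bf x})$ is odd is exactly the condition ${\bf x}\cdot{\bf x}=1$ required to invoke the construction. Hence Theorem~\ref{build-1} applies and produces a binary linear $[n+2,k+1]$ code $C_1$ with one-dimensional hull, generated by the matrix $G_1$ displayed there.

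Next I would exploit the extra hypothesis ${\bf x}\in C^\perp$. Writing ${\bf r}_1,\ldots,{\bf r}_k$ for the rows of $G$, this gives $y_i={\bf x}\cdot{\bf r}_i=0$ for every $i$. Consequently the second column of $G_1$ is identically zero: its top entry is $0$ by construction, and its remaining entries are precisely the $y_i$. Deleting this zero column transforms $G_1$ into the $(k+1)\times(n+1)$ matrix
\[
\begin{bmatrix}
1 & {\bf x}\\
{\bf 0} & G
\end{bmatrix},
\]
which is exactly the matrix in the statement.

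It then remains to argue that deleting a zero column alters neither the dimension of the code nor that of its hull. The new matrix $G'$ still has rank $k+1$ (the leading $1$ in the first row, together with the rank-$k$ matrix $G$), so it generates an $[n+1,k+1]$ code. Moreover a zero column contributes nothing to any coordinatewise inner product, so $G'(G')^{T}=G_1G_1^{T}$; since $C_1$ has one-dimensional hull, ${\rm rank}(G_1G_1^{T})=k$, and Lemma~\ref{lem-2.3} yields hull dimension $(k+1)-k=1$ for the new code. The single conceptual step of the whole argument is the observation that ${\bf x}\in C^\perp$ forces $y_i\equiv0$ and hence the column collapse; once this is noticed there is no genuine obstacle, only the routine bookkeeping of rank and Gram matrix above.
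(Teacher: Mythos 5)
Your proof is correct and follows essentially the same route as the paper: apply Theorem~\ref{build-1} (noting ${\rm wt}({\bf x})$ odd gives ${\bf x}\cdot{\bf x}=1$), observe that ${\bf x}\in C^\perp$ forces $y_i=0$ so the second column of $G_1$ vanishes, and puncture that coordinate to obtain the stated matrix. Your extra bookkeeping (rank $k+1$ and $G'(G')^T=G_1G_1^T$ with Lemma~\ref{lem-2.3}) merely makes explicit what the paper leaves implicit in the puncturing step.
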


\begin{proof}
The code $C_3$ constructed from Theorem~\ref{build-1} is a binary linear $[n+2, k+1]$ code with one-dimensional hull. Since ${\bf x}\in C^\perp$, $y_i=0$ for $i=1, \dots, k$. Therefore, by puncturing $C_3$ on the second coordinate, we obtain the matrix
$$
\begin{bmatrix}
1 & {\bf x} \\
 {\rm{\bf 0}} & G
\end{bmatrix},
$$
which also generates a binary linear $[n+1, k+1]$ code with one-dimensional hull.
\end{proof}

\begin{example}\label{example-1}
We start from a binary LCD $[13, 5, 5]$ code. By applying Corollary~\ref{Methods} we can construct a binary linear $[14,6,5]$ code $C$ with one-dimensional hull and the generator matrix
\[G=\left[\begin{array}{c|c}
1  & 1 0 1 1 0 1 0 0 0 1 0 1 1 \\
\hline
0  & 1 0 0 0 0 1 1 0 1 0 1 1 1\\
0  & 0 1 0 0 0 1 1 1 0 0 0 1 0 \\
0  & 0 0 1 0 0 1 0 0 0 1 1 1 0\\
0  & 0 0 0 1 0 0 0 1 1 1 0 1 1\\
0  & 0 0 0 0 1 0 1 1 1 1 1 0 1
\end{array}
\right].\]
\end{example}

\begin{theorem}
Any binary linear $[n,k,d]$ code with one-dimensional hull can be obtained from some binary LCD $[n-1,k-1,\geq d]$ code by the construction of Corollary \ref{Methods}.
\end{theorem}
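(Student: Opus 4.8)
The plan is to reverse the construction of Corollary~\ref{Methods}: I would locate a coordinate supporting the hull generator of $C$, delete it, and show that the length-$(n-1)$ code that remains is LCD and feeds Corollary~\ref{Methods} back to $C$. Write ${\rm Hull}(C)=\langle{\bf c}\rangle$. Since ${\bf c}\in C\cap C^\perp$ we have ${\bf c}\cdot{\bf c}=0$, so ${\rm wt}({\bf c})$ is even and in particular ${\bf c}\neq{\bf 0}$; after permuting coordinates (which changes neither the one-dimensional-hull property nor the minimum distance) I may assume the first entry of ${\bf c}$ is $1$. By Theorem~\ref{thm-one} there is a basis ${\bf c}_1,\dots,{\bf c}_{k-1},{\bf c}_k={\bf c}$ of $C$ such that $\langle{\bf c}_1,\dots,{\bf c}_{k-1}\rangle$ is LCD and ${\bf c}_k\cdot{\bf c}_i=0$ for all $i$. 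I would then clear the first column below ${\bf c}_k$: put ${\bf c}_i'={\bf c}_i+{\bf c}_k$ whenever the first entry of ${\bf c}_i$ is $1$ and ${\bf c}_i'={\bf c}_i$ otherwise, so that each ${\bf c}_i'$ ($1\le i\le k-1$) has first entry $0$ while ${\bf c}_k$ keeps first entry $1$.

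The key algebraic fact is that this operation leaves the Gram matrix unchanged: because ${\bf c}_k\cdot{\bf c}_i=0$ and ${\bf c}_k\cdot{\bf c}_k=0$, one computes ${\bf c}_i'\cdot{\bf c}_j'={\bf c}_i\cdot{\bf c}_j$ and ${\bf c}_k\cdot{\bf c}_i'=0$, so $\langle{\bf c}_1',\dots,{\bf c}_{k-1}'\rangle$ is again LCD. Placing ${\bf c}_k$ first, the generator matrix of $C$ now has exactly the shape
$$
\begin{bmatrix}
1 & {\bf x}\\
{\bf 0} & G_0
\end{bmatrix},
$$
where ${\bf x}$ is ${\bf c}_k$ with its first entry removed and the rows ${\bf g}_1,\dots,{\bf g}_{k-1}$ of $G_0$ are the ${\bf c}_i'$ with their (zero) first entries removed. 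Let $C_0$ be the $[n-1,k-1]$ code generated by $G_0$.

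It then remains to verify the hypotheses of Corollary~\ref{Methods} and the distance bound, all of which follow from the inner-product computations above. Deleting an all-zero coordinate preserves inner products, so $G_0G_0^T=({\bf c}_i'\cdot{\bf c}_j')_{i,j}$ is nonsingular and $C_0$ is LCD; likewise, since each ${\bf c}_i'$ has first entry $0$, the identity ${\bf c}_k\cdot{\bf c}_i'=0$ reads ${\bf x}\cdot{\bf g}_i=0$, so ${\bf x}\in C_0^\perp$; and ${\rm wt}({\bf x})={\rm wt}({\bf c})-1$ is odd. Hence Corollary~\ref{Methods} applied to $C_0$ and ${\bf x}$ returns precisely the displayed matrix, that is, the code $C$ (up to the initial permutation). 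For the distance, $\langle{\bf c}_1',\dots,{\bf c}_{k-1}'\rangle$ is a subcode of $C$, so its nonzero codewords have weight $\ge d$, and deleting the all-zero first coordinate preserves these weights, whence $C_0$ has minimum distance $\ge d$. I expect the only delicate point to be the bookkeeping that adding the self-orthogonal, mutually orthogonal vector ${\bf c}_k$ to the remaining basis vectors leaves the Gram matrix (and thus the LCD property) intact; once that is in place the rest is a direct substitution into Corollary~\ref{Methods}. Note that no side condition such as $d>2$ or ${\rm Hull}(C)\neq\langle{\bf 1}\rangle$ is needed, since the case ${\bf c}={\bf 1}$ forces $n$ even and hence ${\rm wt}({\bf x})=n-1$ odd, and so is covered uniformly.
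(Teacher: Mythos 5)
Your proof is correct, and while it follows the same overall strategy as the paper --- split off a coordinate where the hull generator is nonzero and exhibit $C$ via a generator matrix of the form $\left[\begin{smallmatrix}1 & {\bf x}\\ {\bf 0} & G_0\end{smallmatrix}\right]$ with $G_0$ generating an LCD $[n-1,k-1,\geq d]$ code and ${\bf x}\in C_0^\perp$ of odd weight --- the way you certify the hypotheses of Corollary \ref{Methods} is genuinely different. The paper obtains the LCD code as the shortened code $C_{\{1\}}$ by invoking Proposition \ref{prop-short-puncture}, writes the top row as $(1,{\bf x}')$ with ${\bf x}'$ a priori arbitrary, and then must repair it: it decomposes ${\bf x}'={\bf x}+{\bf y}$ using $\F_2^{n-1}=C_{\{1\}}\oplus (C_{\{1\}})^\perp$ (valid precisely because $C_{\{1\}}$ is LCD), row-reduces so that the top row becomes $(1,{\bf x})$, and finally proves that ${\rm wt}({\bf x})$ is odd by contradiction. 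You instead take the top row to be the hull generator ${\bf c}$ itself: after the Gram-matrix-preserving clearing operation ${\bf c}_i\mapsto{\bf c}_i+{\bf c}_k$ (justified by ${\bf c}_k\cdot{\bf c}_i={\bf c}_k\cdot{\bf c}_k=0$), the conditions ${\bf x}\in C_0^\perp$ and ${\rm wt}({\bf x})={\rm wt}({\bf c})-1$ odd come for free, and the LCD property of $C_0$ is read off from the unchanged nonsingular Gram matrix via Lemma \ref{lem-2.3}. Each route buys something: yours is more self-contained (it needs only Theorem \ref{thm-one} and Lemma \ref{lem-2.3}, not the shortening machinery of Section 2) and makes transparent that ${\bf x}$ is exactly the punctured hull generator, which also explains structurally why the weight must be odd; the paper's is shorter given the propositions already established, and directly identifies $C_0$ as a shortened code of $C$. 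Note that the two constructions in fact produce the same subcode: your $C_0$ consists of the codewords of $C$ vanishing at the chosen coordinate, punctured there, i.e.\ it \emph{is} $C_{\{1\}}$. Your closing observation that no side condition such as $d>2$ or ${\rm Hull}(C)\neq\langle{\bf 1}\rangle$ is needed is also correct and consistent with the paper's statement; those hypotheses belong to the converse of the two-column building-up construction (Theorem \ref{build-1}), not to this one.
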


\begin{proof}
Let $C$ be a binary linear $[n,k,d]$ code with one-dimensional hull.
By Proposition \ref{prop-short-puncture}, there is at least one coordinate position $i$ such that the shortened code $C_{\{i\}}$ of $C$ on the $i$-th coordinate is a binary LCD $[n-1,k-1,\geq d]$ code. Without loss of generality, we consider that $i=1$.
Assume that $C_{\{1\}}$ has the generator matrix $G_1$. Then $C$ has the generator matrix
$$G=\begin{bmatrix}
1 & {\bf x'} \\
\textbf 0 & G_1
\end{bmatrix}$$
for some ${\bf x'}=(x'_1,\ldots,x'_{n-1})\in \F_2^{n-1}$.
Since $C_{\{1\}}$ is a binary LCD code, $\F_2^{n-1}=C_{\{1\}}\oplus (C_{\{1\}})^\perp$.
So there are ${\bf x}=(x_1,\ldots,x_{n-1})\in (C_{\{1\}})^\perp$ and ${\bf y}=(y_1,\ldots,y_{n-1})\in C_{\{1\}}$ such that ${\bf x'}={\bf x}+{\bf y}$.
Hence the following matrix
 $$G_0=\begin{bmatrix}
1 & {\bf x} \\
\textbf 0 & G_1
\end{bmatrix}$$
 is also the generator matrix of $C$.
It turns out that ${\rm wt}({\bf x})$ is odd, otherwise $C\cap C^\perp=\{{\bf0}\}$, which is a contradiction.
This completes the proof.
\end{proof}

\begin{remark}
If we would like to obtain all binary $[n, k, d]$ linear codes with one-dimensional hull, then we can start from all binary LCD $[n-1, k-1, \geq d]$ codes. This theorem may be very useful in classification.
\end{remark}

\begin{example}\label{example-5.9}
According to \cite{HS-BLCD-1-16}, there exist a unique inequivalent binary LCD $[15,7,5]$ code. By applying Corollary~\ref{Methods} we cannot construct a binary linear $[16,8,5]$ code with one-dimensional hull. So $d_{one}(16,8)\leq 4.$
\end{example}

Harada and Saito \cite{HS-BLCD-1-16} gave a complete classification of optimal binary LCD $[n, k]$ codes for $1\leq k\leq n\leq 16$.
Bouyuklieva \cite{B-LCD-40} gave a partial classification of optimal binary LCD $[n, k]$ codes for $1\leq k\leq n\leq 40$. A complete classification of optimal binary LCD $[n, 3]$ codes was given in \cite{AHS-BLCD,HS-BLCD-1-16}. Applying Corollary~\ref{Methods} to these LCD codes, we have the following proposition.

\begin{prop}
There are no binary linear $[16,8,5]$, $[16,10,4]$, $[17,9,5]$, $[18,8,6]$, $[20,4,10]$, $[20,8,7]$, $[20,10,6]$, $[22,4,11]$, $[22,8,8]$, $[23,6,10]$, $[24,4,12]$, $[25,6,11]$, $[26,8,10]$, $[27,6,12]$, $[28,8,11]$, $[29,6,13]$ codes with one-dimensional hull.
\end{prop}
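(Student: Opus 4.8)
The plan is to combine the converse building-up theorem proved just above with the complete classifications of optimal binary LCD codes. By that theorem, every binary linear $[n,k,d]$ code with one-dimensional hull arises, via the construction of Corollary~\ref{Methods}, from some binary LCD $[n-1,k-1,\geq d]$ code $C_0$ with generator matrix $G_0$ together with a vector ${\bf x}\in C_0^\perp$ of odd weight; explicitly, the resulting code is generated by $\left[\begin{smallmatrix}1 & {\bf x}\\ {\bf 0} & G_0\end{smallmatrix}\right]$. Hence, to prove non-existence for a given parameter triple it suffices to show that this construction can never output minimum distance $d$, no matter which admissible $C_0$ and ${\bf x}$ are chosen.

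First I would record the minimum-distance bookkeeping for the constructed code. Its codewords split into those of the form $(0,{\bf c})$ with ${\bf c}\in C_0$, of weight ${\rm wt}({\bf c})$, and those of the form $(1,{\bf x}+{\bf c})$, of weight $1+{\rm wt}({\bf x}+{\bf c})$. Thus the constructed code has minimum distance $\min\{d_0,\,1+\mu\}$, where $d_0\geq d$ denotes the minimum distance of $C_0$ and $\mu=\min_{{\bf c}\in C_0}{\rm wt}({\bf x}+{\bf c})$ is the coset-leader weight of ${\bf x}+C_0$. Since $d_0\geq d$ automatically, the construction attains distance $d$ if and only if the coset ${\bf x}+C_0$ has minimum weight at least $d-1$. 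So the whole problem reduces to deciding, for each relevant $C_0$, whether some odd-weight ${\bf x}\in C_0^\perp$ has this coset property.

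Next I would split each listed triple $[n,k,d]$ according to the value of $d_{LCD}(n-1,k-1)$. When $d>d_{LCD}(n-1,k-1)$ there is simply no LCD $[n-1,k-1,\geq d]$ code to start from — equivalently, Lemma~\ref{lemma-1}(1) already forces $d_{one}(n,k)\leq d_{LCD}(n-1,k-1)<d$ — so non-existence is immediate. The substantive cases are the boundary ones with $d=d_{LCD}(n-1,k-1)$; there the codes $C_0$ we must test are exactly the \emph{optimal} LCD $[n-1,k-1]$ codes, which are completely classified: by Harada and Saito \cite{HS-BLCD-1-16} for $n-1\leq 16$, by Bouyuklieva \cite{B-LCD-40} for $n-1\leq 40$, and by \cite{AHS-BLCD,HS-BLCD-1-16} in dimension $3$. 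For each inequivalent representative $C_0$ — and, using the action of ${\rm Aut}(C_0)$ on $C_0^\perp$, for each orbit of odd-weight dual vectors ${\bf x}$ — I would compute the coset-leader weight of ${\bf x}+C_0$ and verify that it never reaches $d-1$; this is precisely the computation illustrated for $[16,8,5]$ in Example~\ref{example-5.9}. Parity considerations often prune this search: when $C_0$ is even-like, every vector of the coset ${\bf x}+C_0$ has odd weight, forcing a congruence on the attainable coset weights that can eliminate whole families of ${\bf x}$ (or entire codes) before any enumeration.

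The main obstacle is not the per-code coset computation, which is routine once $C_0$ is in hand, but guaranteeing completeness: the converse theorem yields a contradiction only if we have tested \emph{every} LCD $[n-1,k-1,\geq d]$ code, so the argument rests on the cited classifications being exhaustive at the relevant minimum distance. I would therefore take care, for the larger lengths where only partial classifications are available, to confirm that the specific pairs $(n-1,k-1)$ and distances $d$ we invoke fall within the completely classified ranges. That verification, together with the finite coset-weight search over all classified $C_0$ and all odd-weight ${\bf x}\in C_0^\perp$, completes the proof.
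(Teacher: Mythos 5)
Your proposal is correct and follows essentially the same route as the paper: invoke the converse building-up theorem to reduce existence of an $[n,k,d]$ code with one-dimensional hull to the construction of Corollary~\ref{Methods} applied to some binary LCD $[n-1,k-1,\geq d]$ code, then exhaust the classified optimal LCD codes of \cite{HS-BLCD-1-16}, \cite{B-LCD-40}, and \cite{AHS-BLCD,HS-BLCD-1-16} and check computationally that no admissible odd-weight dual vector ${\bf x}$ yields minimum distance $d$. Your explicit coset-leader reformulation (minimum distance $\min\{d_0,1+\mu\}$, so one needs $\mu\geq d-1$) and the use of automorphism orbits are just a more detailed account of the finite check the paper performs with MAGMA.
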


\begin{proof}
We start from all binary LCD $[15,7,5]$, $[15,9,4]$ and $[16,8,5]$ codes (see \cite{HS-BLCD-1-16}). By applying Corollary~\ref{Methods}, we cannot construct binary linear $[16,8,5]$, $[16,10,4]$ and $[17,9,5]$ code with one-dimensional hull.

We start from all binary LCD $[n,k,d]$ codes, where $(n,k,d)\in\{(17,7,6),(19,7,7),(19,\\9,6),(21,7,8),(22,5,10),(24,5,11),(25,7,10),(26,5,12),
(27,7,11),(28,5,13)\}$ (see \cite{B-LCD-40}).\\By applying Corollary~\ref{Methods}, we cannot construct a binary linear $[n+1,k+1,d]$ code with one-dimensional hull.

We start from all binary LCD $[n,k,d]$ codes, where $(n,k,d)\in\{(19,3,10),(21,3,11),\\(23,3,12)\}$ (see \cite{AHS-BLCD,HS-BLCD-1-16}). By applying Corollary~\ref{Methods}, we cannot construct a binary linear $[n+1,k+1,d]$ code with one-dimensional hull. This completes the proof.
\end{proof}

\begin{prop}
There is no binary linear $[23,14,5]$ code with one-dimensional hull.
\end{prop}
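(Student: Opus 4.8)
The plan is to prove the non-existence of a binary linear $[23,14,5]$ code with one-dimensional hull by contradiction, leveraging the converse building-up result together with the complete classification of binary LCD codes of the relevant parameters. Suppose such a code $C$ exists. By the theorem asserting that any binary linear $[n,k,d]$ code with one-dimensional hull arises from some binary LCD $[n-1,k-1,\geq d]$ code via the construction of Corollary~\ref{Methods}, the existence of $C$ would force the existence of a binary LCD $[22,13,\geq 5]$ code $C_0$ together with a suitable vector ${\bf x}\in C_0^\perp$ of odd weight. So the first step is to determine whether a binary LCD $[22,13,5]$ code exists at all, using the known classification data (for instance from \cite{B-LCD-40} and \cite{HS-BLCD-1-16}).

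If no binary LCD $[22,13,5]$ code exists, then the argument is immediate: Corollary~\ref{Methods} cannot produce a $[23,14,5]$ code with one-dimensional hull from any LCD $[22,13]$ code, and by the converse theorem every such code must so arise, giving the contradiction. If instead binary LCD $[22,13,5]$ codes do exist, then I would enumerate the (finitely many, up to equivalence) such codes and, for each one, examine its dual $C_0^\perp$, which is a $[22,9]$ code. The key step is then to check, for every odd-weight vector ${\bf x}\in C_0^\perp$, whether the resulting $[23,14]$ code produced by Corollary~\ref{Methods} actually attains minimum distance $5$; if in every case the minimum distance drops below $5$, the desired non-existence follows. This is exactly the computational template already used in Example~\ref{example-5.9} and in the preceding proposition, so it fits the established method.

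The main obstacle I anticipate is the combinatorial size of the search: the dual code $C_0^\perp$ has $2^9$ codewords per inequivalent $C_0$, and one must test odd-weight vectors against the minimum-distance requirement, so an exhaustive but carefully organized computer search (or a clever weight-distribution argument) is needed. A secondary subtlety is that the converse theorem guarantees $C$ comes from \emph{some} LCD $[22,13,\geq d]$ code, so I must be careful to rule out the possibility that a $[22,13]$ LCD code of minimum distance strictly greater than $5$ could yield a $[23,14,5]$ code with one-dimensional hull; since larger minimum distance for $C_0$ only makes the construction more constrained, and since the classification fixes the optimal value $d_{LCD}(22,13)$, this case is subsumed once the relevant LCD parameters are pinned down. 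I would therefore first settle the value of $d_{LCD}(22,13)$ from the literature, and then either conclude directly (if it is below $5$) or run the finite verification over the classified codes.
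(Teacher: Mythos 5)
Your strategy is logically sound---the paper's converse building-up theorem does guarantee that any binary $[23,14,5]$ code with one-dimensional hull would arise from some binary LCD $[22,13,\geq 5]$ code $C_0$ and an odd-weight vector ${\bf x}\in C_0^\perp$ via Corollary~\ref{Methods}---but it is a genuinely different route from the paper's, and as written it has a step you cannot close from the literature you invoke. The paper's proof is of a completely different (and much cheaper) kind: by Simonis's theorem \cite{[23-14-5]}, the binary linear $[23,14,5]$ code is unique up to equivalence (the Wagner code), and a single MAGMA computation \cite{magma} shows that this one code has a $3$-dimensional hull, hence not a one-dimensional one. That argument outsources all the hard work to a known classification of \emph{linear} codes with these parameters, so only one hull computation is needed.

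The gap in your plan is that the easy branch does not apply and the enumeration branch rests on data that does not exist in your references. The tables of $d_{LCD}(n,k)$ for $n\leq 24$ are complete \cite{bound-12,HS-BLCD-1-16,AH-BLCD-17-24} and give $d_{LCD}(22,13)=5$; this is also forced by the paper's internal logic, since if $d_{LCD}(22,13)$ were $4$, Lemma~\ref{lemma-1} would already yield $d_{one}(23,14)\leq 4$ and this proposition would be superfluous. Consequently your argument requires a complete list, up to equivalence, of all binary LCD $[22,13,5]$ codes (codes with larger minimum distance being excluded precisely because $d_{LCD}(22,13)=5$), together with the search over odd-weight vectors in each dual. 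No such classification is available: \cite{HS-BLCD-1-16} classifies optimal binary LCD codes only for lengths up to $16$, and \cite{B-LCD-40} is a partial classification that does not cover these parameters. So the decisive step of your proof is a substantial, unperformed classification-plus-search computation rather than a citation, and until it is carried out the proof is incomplete. If a computational proof is wanted, the paper's route is strictly cheaper: one known code, one hull computation.
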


\begin{proof}
There exists a unique inequivalent binary linear $[23,14,5]$ code \cite{[23-14-5]}, which has $3$-dimensional hull by MAGMA \cite{magma}.
\end{proof}

\begin{remark}
For fixed $n$ and $k$, there are two upper bounds on $d_{one}(n,k)$:
\begin{center}
$d_{one}(n,k)\leq d_{LCD}(n-1,k-1)$ (Lemma \ref{lemma-1}) and $d_{one}(n,k)\leq d(n,k).$
\end{center}
For the upper bound of $d_{LCD}(n,k)$, we refer to \cite{bound-12,HS-BLCD-1-16,AH-BLCD-17-24,2-LCD-30,2-LCD-40,234-lcD,
B-LCD-40,LCD-code-Li,AHS-BLCD,ter-11-19}.
For the upper bound of $d(n,k)$, we refer to \cite{codetables}.
\end{remark}

\begin{center}
\setlength\tabcolsep{6.5pt}
\begin{tabular}{cccccccccccccccc}
\multicolumn{16}{c}{{\rm Table 1: $d_{one}(n,k)$, where $14\leq n\leq 30$, $1\leq k\leq 15$}}\\
\hline
   $n/k$& 1&2&3&4&5&6&7&8 &9 &10 &11&12&13&14&15\\
    \hline\hline
   14&14& 8&7&6 &6 &5 &4 &4 &3 &2 &2 &1 &2 & & \\
   15&14& 9&8&7 &6 &5 &5 &4 &4 &3 &2 &2 &2 &1 &   \\
   16&16& 9&8&7 &6 &6 &6 &4 &4 &3 &3 &2 &2 &1 &2   \\
   17&16& 11&9&8 &7 &6 &6 &5 &4 &4 &4 &3 &2 &2 &2  \\
   18&18& 11&10&8 &8 &7 &6 &5 &6 &4 &4 &3 &3 &2 &2 \\
   19&18& 12&10&9 &8 &7 &7 &6 &6 &5 &4 &4 &4 &3 &2 \\
   20&20& 12&10&9 &9 &8 &8 &6 &6 &5 &5 &4 &4 &3 &3 \\
   21&20& 13&11&10 &10 &8 &8 &7 &6 &6 &6 &5 &4 &4 &4   \\
   22&22& 13&12&10 &10 &9 &8 &7 &7 &6 &6 &5 &5 &4 &4  \\
   23&22& 15&12&11 &10 &9 &9 &8 &8 &7 &6 &6 &6 &4 &4  \\
   24&24& 15&13&11 &11 &10 &10 &8 &8 &7 &7 &6 &6 &5 &4  \\
   25&24& 16&14&12 &12 &10 &10 &9 &8 &8 &8 &7 &6 &5-6 &5  \\
   26&26& 16&14&13 &12 &11 &10 &9 &9 &8 &8 &7 &7 &6 &6   \\
   27&26& 17&14&13 &12 &11 &11 &10 &10 &8-9 &8 &8 &8 &7 &6 \\
   28&28& 17&15&14 &13 &12 &12 &10 &10&9 &8 &8 &8 &7 &6  \\
   29&28& 19&16&14 &14 &12 &12 &11 &10&9-10 &9 &8 &8 &8 &6-7  \\
   30&30& 19&16&15 &14 &13 &12 &11-12 &11&10 &10 &9 &8&8 &7-8  \\
    \hline\hline
    $n/k$  & 16& 17& 18&19&20&21&22&23&24&25 &26 &27 &28&29&\\
    \hline\hline
    17 &1& &&& & & & &&& & && \\
   18 &1& 2& &&& & & & &&& & &\\
   19 &2& 2& 1& && & & & &&& & & \\
   20&2& 2&1& 2&& & & & &&& & & \\
   21&3& 2&2& 2&1& & & & &&& & & \\
   22&3&3& 2& 2& 1& 2& & & &&& & & \\
   23&4&4 &3& 2& 2& 2&1 & & &&& & & \\
   24&4&4& 3& 3& 2& 2&1 &2 & &&& & & \\
   25&4& 4&4& 4& 3&2 &2 &2 &1 &&& & & \\
   26&5& 4&4& 4& 3&3 &2 &2 &1 &2&& & & \\
   27&5-6&5& 4& 4& 4&4 &3 &2 &2 &2&1& & & \\
   28&6&6& 5& 4& 4&4 &3 &2 &2 &2&1&2 & & \\
   29&6&6& 5-6& 5& 4&4 &4 &3 &2 &2&2&2 &1 & \\
   30&6&6 &6& 6& 5&4 &4 &4 &3 &2&2& 2& 1&2 \\
    \hline
\end{tabular}
\end{center}

\begin{remark}
 The value in Table 1 denotes the minimum distance of an optimal binary linear $[n,k]$ code with one-dimensional hull by our method except for the binary linear codes with one-dimensional hull in the Magma database. All computations have been done by MAGMA \cite{magma}. To save the space, the codes in Table 1 can be obtained from one of the authors' website, namely,\\
  {\tt https://cicagolab.sogang.ac.kr/cicagolab/2660.html}.
\end{remark}

\section{Optimal binary linear codes with one-dimensional hull}
In this section, we characterize the minimum distances of optimal binary linear $[n,k]$ and $[n,n-k]$ codes with one-dimensional hull for $k\leq 5$.

\subsection{Optimal binary linear $[n,1]$ and $[n,n-1]$ codes with one-dimensional hull}
In this subsection, we study the exact values of $d_{one}(n,1)$ and $d_{one}(n,n-1)$.
\begin{theorem}
If $n$ is odd, then $d_{one}(n,1)=n-1$ and $d_{one}(n,n-1)=1$.
If $n$ is even, then $d_{one}(n,1)=n$ and $d_{one}(n,n-1)=2$.
\end{theorem}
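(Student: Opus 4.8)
The plan is to handle the four cases ($n$ odd versus $n$ even, and dimension $1$ versus $n-1$) by direct construction and a matching upper bound, exploiting the fact that a one-dimensional code is spanned by a single vector. First I would analyze $d_{one}(n,1)$. A binary $[n,1]$ code is $C=\langle{\bf c}\rangle$, and by Lemma~\ref{lem-2.3} it has one-dimensional hull precisely when ${\rm rank}(GG^T)=1-1=0$, i.e. ${\bf c}\cdot{\bf c}=0$, which means ${\rm wt}({\bf c})$ is even. So $d_{one}(n,1)$ is the largest even weight achievable by a single nonzero vector of length $n$. When $n$ is even, the all-ones vector ${\bf 1}$ has even weight $n$, giving $d_{one}(n,1)=n$. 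When $n$ is odd, ${\bf 1}$ has odd weight, so the maximum even weight is $n-1$, giving $d_{one}(n,1)=n-1$. This direction is essentially immediate.

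Next I would treat $d_{one}(n,n-1)$ via the dual. Here the key observation is that for a one-dimensional code one has ${\rm Hull}(C)={\rm Hull}(C^\perp)$, so $C$ is an $[n,n-1]$ code with one-dimensional hull if and only if $C^\perp$ is an $[n,1]$ code with one-dimensional hull. Thus $d_{one}(n,n-1)$ is the minimum distance of the best $[n,n-1]$ code whose dual is a one-dimensional even-weight code. By the analysis above, $C^\perp=\langle{\bf c}\rangle$ where ${\bf c}$ has even weight, and $C=\langle{\bf c}\rangle^\perp$. I would then compute the minimum distance of $C=\{{\bf c}\}^\perp$ directly: this is the set of all vectors orthogonal to a single even-weight vector ${\bf c}$. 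When ${\bf c}={\bf 1}$ (possible iff $n$ is even), $C$ is the full even-weight code, whose minimum distance is $2$. When $n$ is odd, ${\bf c}$ has even weight $w<n$; choosing $w=2$, say ${\bf c}={\bf e}_1+{\bf e}_2$, the orthogonal complement contains ${\bf e}_3$, a weight-one vector, forcing minimum distance $1$. So $d_{one}(n,n-1)=2$ for $n$ even and $=1$ for $n$ odd.

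The main subtlety, and the step I would be most careful about, is establishing the \emph{optimality} (upper bound) for $d_{one}(n,n-1)$ rather than just exhibiting constructions, and in particular showing the claimed values cannot be improved. For the $n$ odd case I must argue that \emph{every} valid ${\bf c}$ of even weight $w\le n-1$ yields an $[n,n-1]$ code containing some weight-one codeword: since $w<n$, there is a coordinate $i\notin supp({\bf c})$, and then ${\bf e}_i\in\{{\bf c}\}^\perp=C$, so $d=1$, which is also the trivial minimum, so $d_{one}(n,n-1)=1$ is forced. For the $n$ even case I must verify that no $[n,n-1]$ code with one-dimensional hull beats $d=2$; since any $[n,n-1]$ code has minimum distance at most $2$ (a codimension-one subspace of $\F_2^n$ cannot have all codewords of weight $\ge 3$ for $n\ge 2$), and the even-weight code realizes $d=2$ with ${\bf c}={\bf 1}$ giving the one-dimensional hull, the bound is tight. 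Throughout I would keep the dual correspondence ${\rm Hull}(C)={\rm Hull}(C^\perp)$ explicit, since it is what converts the $[n,n-1]$ question into the already-solved $[n,1]$ question, and I would double-check the parity bookkeeping that links ${\rm wt}({\bf c})$ even to the vanishing of ${\bf c}\cdot{\bf c}$ over $\F_2$.
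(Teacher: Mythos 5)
Your proposal is correct and follows essentially the same route as the paper: characterize the $[n,1]$ codes with one-dimensional hull as those spanned by an even-weight vector, realize the extremes by the all-ones vector ($n$ even) or a weight-$(n-1)$ vector ($n$ odd), and settle the $[n,n-1]$ case by dualizing, using ${\rm Hull}(C)={\rm Hull}(C^\perp)$. The only cosmetic difference is that where the paper rules out $d_{one}(n,n-1)=2$ for odd $n$ by noting the unique $[n,n-1,2]$ code is the even-weight code whose dual (the repetition code) fails to have one-dimensional hull, you argue directly that every admissible $\langle{\bf c}\rangle^\perp$ contains a weight-one vector ${\bf e}_i$ with $i\notin supp({\bf c})$; both are the same duality argument.
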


\begin{proof}
By the Griesmer bound, we have $d_{one}(n,1)\leq n$ and $d_{one}(n,n-1)\leq 2$.
Assume that $n$ is odd. The repetition $[n,1,n]$ code is not a linear code with one-dimensional hull. The code $C$ generated by $[0 1 1 \ldots 1]$ is a linear code with one-dimensional hull. So $d_{one}(n,1)=n-1$.
The dual code $C^\perp$ of $C$ is a linear code with one-dimensional hull and the minimum weight $1$. If $d_{one}(n,n-1)=2$, then the corresponding code $C'$ is the even
$[n, n- 1, 2]$ code. The dual of $C'$ is the repetition $[n,1,n]$ code, which is not a linear code with one-dimensional hull. Thus $d_{one}(n,n-1)=1.$

Assume that $n$ is even. The repetition $[n,1,n]$ code and its dual code are linear codes with one-dimensional hull. Hence $d_{one}(n,1)=n$ and $d_{one}(n,n-1)=2$.
\end{proof}

\subsection{Optimal binary linear $[n,2]$ and $[n,n-2]$ codes with one-dimensional hull}

Mankean and Jitman \cite{binary-hull-2} determined the exact value of $d_{one}(n,2)$.
\begin{theorem}{\rm\cite{binary-hull-2}}
Let $n>2$ be an integer. Then we have
$$d_{one}(n,2)= \left\{
\begin{array}{ll}
 \left\lfloor \frac{2n}{3}\right\rfloor,~ &{\rm for}~ n\equiv 1,5~({\rm mod}~6)\vspace{1ex},\\
 \left\lfloor \frac{2n}{3}\right\rfloor-1,~& {\rm for}~ n\equiv 0,2,3,4~({\rm mod}~6).
\end{array}
\right.$$
\end{theorem}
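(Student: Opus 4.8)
The plan is to reduce the theorem to an elementary extremal question about partitions of an integer into three parts, controlled entirely by a parity condition extracted from the one-dimensional hull. First I would fix a normal form for an arbitrary binary $[n,2]$ code $C$: up to a change of basis (the action of $GL(2,\F_2)$) and a permutation of coordinates, a generator matrix is determined by the multiplicities $a,b,c,z$ of the columns equal to $(1,0)^T$, $(0,1)^T$, $(1,1)^T$, $(0,0)^T$, with $a+b+c+z=n$. The three nonzero codewords then have weights $a+c$, $b+c$, $a+b$, so that the minimum distance is $d=\min(a+b,\,b+c,\,c+a)=(a+b+c)-\max(a,b,c)$.

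Next I would pin down the hull in these terms. If $G$ is the $2\times n$ matrix above, then $GG^T$ reduces modulo $2$ to $\left(\begin{smallmatrix} a+c & c \\ c & b+c \end{smallmatrix}\right)$, whose determinant is congruent to $(a+c)(b+c)+c \pmod 2$ (using $c^2\equiv c$). By Lemma~\ref{lem-2.3}, $C$ has one-dimensional hull if and only if ${\rm rank}(GG^T)=1$, i.e. this matrix is nonzero with vanishing determinant. A short enumeration of the eight parity patterns of $(a,b,c)$ shows that this occurs precisely when exactly one of $a,b,c$ is odd (all even gives a self-orthogonal code, and two or three odd give an LCD code). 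In particular $m:=a+b+c$ is forced to be odd.

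The theorem then becomes the problem of maximizing $(a+b+c)-\max(a,b,c)$ subject to $a+b+c+z=n$, $z\ge 0$, and exactly one of $a,b,c$ odd. For the upper bound I would first use $\max(a,b,c)\ge\lceil m/3\rceil$ to get $d\le\lfloor 2m/3\rfloor\le\lfloor 2n/3\rfloor$. The refinement splitting the two branches of the formula rests on two parity obstructions: (i) since $m$ is odd, when $n$ is even one must have $m\le n-1$; and (ii) when $3\mid m$ with $m$ odd, the balanced partition $(m/3,m/3,m/3)$ has all three parts odd, which is forbidden, so $\max(a,b,c)\ge m/3+1$ and hence $d\le 2m/3-1$. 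Running these facts through the six residue classes of $n$ modulo $6$ — and checking that decreasing $m$ never helps — yields exactly $\lfloor 2n/3\rfloor$ for $n\equiv 1,5$ and $\lfloor 2n/3\rfloor-1$ otherwise. The matching lower bounds are explicit: for each residue I would exhibit a partition of the relevant odd $m$ into three parts with exactly one odd part and the least possible maximum, for instance $(2j{+}1,2j,2j)$ when $m=6j+1$, $(2j{+}2,2j{+}2,2j{+}1)$ when $m=6j+5$, and $(2j{+}2,2j{+}1,2j)$ when $m=6j+3$, padding with one zero column ($z=1$) when $n$ is even.

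The main obstacle I expect is the case $n\equiv 4\pmod 6$. Here the largest admissible value $m=n-1$ satisfies $\lfloor 2(n-1)/3\rfloor=\lfloor 2n/3\rfloor$, so the crude bound (i) alone does \emph{not} produce the ``$-1$''. Establishing the sharp value then requires the finer obstruction (ii): an odd multiple of $3$ cannot be split into three equal (necessarily odd) parts under the ``exactly one odd'' constraint, forcing the maximum up by one. Verifying that this loss of $1$ is genuinely unavoidable — and cannot be recovered either by choosing a smaller odd $m$ or by an unbalanced partition — is the crux of the argument.
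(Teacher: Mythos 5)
Your argument is correct, but note that for this particular statement the paper offers no proof at all: the result is quoted verbatim from Mankean--Jitman \cite{binary-hull-2}, so there is no internal argument to compare yours against. Your route is a clean, self-contained elementary proof. The key steps all check out: for a $2\times n$ generator matrix with column multiplicities $a,b,c,z$ of $(1,0)^T,(0,1)^T,(1,1)^T,(0,0)^T$, one has $GG^T\equiv\left(\begin{smallmatrix} a+c & c\\ c & b+c\end{smallmatrix}\right)\pmod 2$, and running through the eight parity patterns confirms that ${\rm rank}(GG^T)=1$ (equivalently, one-dimensional hull, by Lemma \ref{lem-2.3}) holds precisely when exactly one of $a,b,c$ is odd; the formula $d=(a+b+c)-\max(a,b,c)$ is also right. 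Your two obstructions then do exactly the work needed: oddness of $m=a+b+c$ forces $m\le n-1$ for even $n$, which settles $n\equiv 0,2\pmod 6$, and the forbidden balanced partition of an odd multiple of $3$ (all three parts odd) settles $n\equiv 3\pmod 6$ and, crucially, $n\equiv 4\pmod 6$, where $m=n-1$ alone would only give $\lfloor 2n/3\rfloor$. I verified your exhibited partitions $(2j{+}1,2j,2j)$, $(2j{+}2,2j{+}2,2j{+}1)$, $(2j{+}2,2j{+}1,2j)$ each have exactly one odd part and realize the stated minima, and that dropping to a smaller odd $m$ strictly decreases $\lfloor 2m/3\rfloor$ in every residue class, so the upper and lower bounds meet. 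One small point worth making explicit in a final write-up: the constructed multiplicity vectors must have at least two nonzero entries among $a,b,c$ so that the code genuinely has dimension $2$; this holds for all your partitions in the range $n>2$, but it deserves a sentence. What your approach buys is transparency and verifiability (a finite parity enumeration plus an integer optimization), whereas the paper's approach buys only brevity, outsourcing the entire content to the cited reference.
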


Next, we consider the exact value of $d_{one}(n,k)$ for $k=n-2$.

\begin{theorem}
Let $n>2$ be an integer. Then we have
$$d_{one}(n,n-2)= \left\{
\begin{array}{ll}
 2,~ &{\rm if}~ n~{\rm is~odd},\\
 1,~& {\rm if}~ n~{\rm is~even}.
\end{array}
\right.$$
\end{theorem}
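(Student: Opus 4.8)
The plan is to pass to the dual code and analyze its generator matrix column by column. Write $C$ as a binary linear $[n,n-2]$ code and let $H$ be a $2\times n$ parity-check matrix for $C$, i.e.\ a generator matrix of the $2$-dimensional code $C^\perp$. Since ${\rm Hull}(C)=C\cap C^\perp={\rm Hull}(C^\perp)$, applying Lemma \ref{lem-2.3} to $C^\perp$ gives $\dim({\rm Hull}(C))=2-{\rm rank}(HH^T)$, so $C$ has one-dimensional hull if and only if ${\rm rank}(HH^T)=1$. Moreover the minimum distance of $C$ equals the smallest number of linearly dependent columns of $H$. These two reformulations reduce the whole statement to elementary bookkeeping on the columns of $H$, which are vectors in $\F_2^2$.

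\textbf{Upper bound.} Because there are only three nonzero vectors in $\F_2^2$, for $n\geq 4$ either some column of $H$ is zero (so $C$ has minimum distance $1$) or two columns coincide (so $C$ has minimum distance $2$); in all cases the minimum distance is at most $2$. Hence $d_{one}(n,n-2)\leq 2$ for every $n>2$, the remaining case $n=3$ being a $[3,1]$ code for which the unique $[3,1,3]$ code is the repetition code, which is LCD and so does not qualify.

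\textbf{Parity analysis.} Up to coordinate permutation and a change of basis of $C^\perp$ (the group ${\rm GL}_2(\F_2)$ permutes the three nonzero vectors of $\F_2^2$), record $H$ by the multiplicities $z,a,b,c$ of the columns ${\bf 0},(1,0)^T,(0,1)^T,(1,1)^T$, where $z+a+b+c=n$. A direct computation gives
$$HH^T=\begin{pmatrix} a+c & c\\ c & b+c\end{pmatrix}\pmod 2,$$
with determinant $(a+c)(b+c)-c^2\equiv ab+ac+bc \pmod 2$. Enumerating the symmetric rank-one matrices over $\F_2$, one checks that ${\rm rank}(HH^T)=1$ holds precisely when exactly one of $a,b,c$ is odd; in particular the support size $a+b+c$ is then odd. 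This single parity fact is the crux of the argument and the step I expect to require the most care, since it must be phrased invariantly so as not to depend on the chosen basis of $C^\perp$.

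\textbf{Conclusion.} Suppose $C$ has one-dimensional hull. If $n$ is even, then $a+b+c$ odd forces $z=n-(a+b+c)$ to be odd, hence $z\geq 1$; thus $H$ has a zero column and $C$ has minimum distance $1$. Together with the construction $(a,b,c,z)=(1,2,0,n-3)$ (rank $2$, a zero column, and $HH^T=\left(\begin{smallmatrix}1&0\\0&0\end{smallmatrix}\right)$ of rank $1$) this yields $d_{one}(n,n-2)=1$. If $n$ is odd, take $(a,b,c,z)=(n-2,2,0,0)$: here $a$ is odd and $b$ even, so $H$ has rank $2$ and $HH^T=\left(\begin{smallmatrix}1&0\\0&0\end{smallmatrix}\right)$ has rank $1$; the two equal columns $(0,1)^T$ give a weight-$2$ codeword of $C$ and there is no zero column, producing a binary linear $[n,n-2,2]$ code with one-dimensional hull. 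With the upper bound this gives $d_{one}(n,n-2)=2$.
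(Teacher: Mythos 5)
Your proof is correct, and at its core it follows the same strategy as the paper: both arguments work with the $2\times n$ parity-check matrix $H$, record the multiplicities of its four possible columns, and settle the question by a parity analysis of those multiplicities together with explicit constructions. The difference is in how the parity analysis is organized. The paper gets the upper bound $d_{one}(n,n-2)\le 2$ from the Griesmer bound and then, for even $n$ with no zero column, runs a case-by-case check on the parities of the multiplicities, concluding in each case that the code is either LCD or contains its dual; it never forms $HH^T$. You instead apply Lemma \ref{lem-2.3} to $C^\perp$, compute $HH^T$ once, and compress the whole case analysis into the single criterion that the hull is one-dimensional if and only if exactly one of $a,b,c$ is odd. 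That criterion buys you a cleaner, unified argument: it simultaneously forces a zero column when $n$ is even (since $a+b+c$ must then be odd, so $z=n-(a+b+c)\geq 1$) and certifies both of your explicit constructions, whereas the paper needs separate ad hoc verifications for the odd-$n$ construction and the even-$n$ impossibility. Conversely, the paper's case analysis is marginally more self-contained (no determinant computation) and records which degenerate case occurs (LCD versus $C^\perp\subset C$), though that extra information is not needed for the theorem. Your handling of the boundary case $n=3$, where the pigeonhole step fails and the $[3,1,3]$ repetition code must be excluded because it is LCD, is a necessary detail that the paper glosses over, and you got it right.
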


\begin{proof}
By the Griesmer bound, $d_{one}(n,n-2)\leq 2.$ Hence $d_{one}(n,n-2)= 2$ or $1.$ Let $x,y,z,s$ be four integers.
Consider the code $C$ of length $n$ with the parity-check matrix
$$H= \left[ \begin{array}{c|c|c|c}
 1\ldots 1 & 1 \ldots 1\ & 0 \ldots 0 &~ 0 \ldots 0\\
 \undermat{x}{0 \ldots 0}  & \undermat{y}{1 \ldots  1} & \undermat{z}{1 \ldots 1}
 &~ \undermat{s}{0 \ldots 0}
 \end{array} \right].$$\\
It is easy to see that the code $C$ has minimum diatance $2$ (resp. 1) if and only if $s=0$ (resp. $s>0$).
Let $n$ be an odd integer, i.e., $n=2m+1$ for some positive integer $m$. If $x=m,y=0,z=m+1,s=0$, then the code $C$ is a binary linear $[2m+1,2m-1,2]$ code with one-dimensional hull. Therefore, $d_{one}(n,n-2)=2$ if $n$ is odd.

Let $n$ be an even integer. Assume that $s=0$.
\begin{itemize}
  \item If $x$ is odd, then $y+z$ is odd. Whether $y$ is odd or even, $C$ is an LCD code.
  \item If $x$ is even, then $y+z$ is even.
\begin{itemize}
  \item If $y$ is odd, then it is not difficult to check that $C$ is an LCD code.
  \item If $y$ is even, then it is not difficult to check that $C^\perp\subset C$.
\end{itemize}
\end{itemize}
 This implies that $s>0$ when $C$ is a binary linear code with one-dimensional hull. Therefore, $d_{one}(n,n-2)=1$ if $n$ is even.
\end{proof}

\subsection{Optimal binary linear $[n,3]$ and $[n,n-3]$ codes with one-dimensional hull}

Assume that $S_k$ is a matrix whose columns are all nonzero vectors in $\F_2^k$.
It is well-known that $S_k$ generates a binary simplex code, which is a one-weight self-orthogonal $[2^k-1,k,2^{k-1}]$ Griesmer code for $k\geq 3$ (see \cite{Huffman}).

\begin{lem}\label{lemma-k-3}
Assume that $S_k$ is a matrix whose columns are all nonzero vectors in $\F_2^k$ for $k\geq 3$. Let $C$ be a binary linear $[n,k,d]$ code with generator matrix $G$. Then $C$ has one-dimensional hull if and only if
$C'$ with the following matrix
$$G'=[\underbrace{S_k|\cdots|S_k}_m|G]$$
is a binary linear $[m(2^k-1)+n,k,m2^{k-1}+d]$ code with one-dimensional hull.
\end{lem}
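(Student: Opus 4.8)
The key to this lemma is the identity $S_kS_k^T=0$, which follows from the fact that the simplex code generated by $S_k$ is self-orthogonal. The plan is to use Lemma~\ref{lem-2.3}, which characterizes the dimension of the hull as $\ell=k-{\rm rank}(GG^T)$, so that the one-dimensional hull condition is equivalent to ${\rm rank}(GG^T)=k-1$. Since the hull dimension depends on $G$ only through the matrix product $GG^T$, it suffices to show that $G'G'^T=GG^T$.

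First I would compute $G'G'^T$ directly from the block structure $G'=[S_k|\cdots|S_k|G]$. Multiplying out the block form gives
$$G'G'^T=\underbrace{S_kS_k^T+\cdots+S_kS_k^T}_m+GG^T=m\,S_kS_k^T+GG^T.$$
Next I would invoke the self-orthogonality of the simplex code to conclude $S_kS_k^T=0$ over $\F_2$ for $k\geq 3$; concretely, any two rows of $S_k$ are codewords of the self-orthogonal simplex code, so their inner product vanishes, and each row has even weight (weight $2^{k-1}$), so the diagonal entries vanish as well. Hence $G'G'^T=GG^T$, and therefore ${\rm rank}(G'G'^T)={\rm rank}(GG^T)$, so by Lemma~\ref{lem-2.3} the code $C'$ has one-dimensional hull if and only if $C$ does. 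This is the heart of the argument and is essentially a one-line calculation once $S_kS_k^T=0$ is established.

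It then remains to verify the stated parameters of $C'$. The length $m(2^k-1)+n$ is immediate from concatenating $m$ copies of the $k\times(2^k-1)$ matrix $S_k$ with $G$. For the dimension, I would note that $G'$ has $k$ rows and that $G$ already generates a $k$-dimensional code, so the rows of $G'$ remain linearly independent and $\dim C'=k$. For the minimum distance, each nonzero codeword of $C'$ is of the form $(\mathbf{u}S_k,\ldots,\mathbf{u}S_k,\mathbf{u}G)$ for a nonzero $\mathbf{u}\in\F_2^k$; since $S_k$ generates the constant-weight simplex code, each block $\mathbf{u}S_k$ has weight exactly $2^{k-1}$, contributing $m\,2^{k-1}$, while $\mathbf{u}G$ contributes at least $d$, giving minimum distance exactly $m2^{k-1}+d$ (the value $d$ being attained on a minimum-weight codeword of $C$).

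The main obstacle is conceptual rather than computational: recognizing that the hull dimension is governed entirely by $GG^T$ and that the self-orthogonality $S_kS_k^T=0$ makes the simplex blocks invisible to this product. Once that observation is in place, the equality $G'G'^T=GG^T$ and the parameter count are routine, so I would keep the exposition brief and lean on Lemma~\ref{lem-2.3} and the standard properties of the simplex code.
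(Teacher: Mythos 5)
Your proposal is correct and follows essentially the same route as the paper: both hinge on the self-orthogonality of the simplex code to get $G'G'^T=GG^T$ (hence equal hull dimensions via Lemma~\ref{lem-2.3}) and on its one-weight property to pin the minimum distance at exactly $m2^{k-1}+d$. Your write-up merely makes explicit what the paper leaves implicit, namely the computation $G'G'^T=m\,S_kS_k^T+GG^T$ with $S_kS_k^T=0$ and the linear independence of the rows of $G'$.
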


\begin{proof}
It is well-known that $S_k$ generates a binary simplex code, which is a one-weight self-orthogonal $[2^k-1,k,2^{k-1}]$ Griesmer code. So
$$G'G'^T=GG^T.$$
Therefore, $C$ has one-dimensional hull if and only if $C'$ has one-dimensional hull.
Since $C$ has the minimum distance $d$, $C'$ has the minimum distance at least $d+2^{k-1}m$. Since the simplex code is a one-weight code, there is at least a codeword of weight $d+2^{k-1}m$ in $C'$. The converse is also true. This completes the proof.
\end{proof}

Let $h_{k,i}$ be the $i$-th column of the matrix $S_k$. Let $G_{k}({\bf m})$ be a $k\times \sum_{i=1}^{2^k-1}m_i$ matrix which consists of $m_i$ columns $h_{k,i}$ for each $i$ as follows:
$$G_{k}({\bf m})=[\underbrace{h_{k,1},\ldots,h_{k,1}}_{m_1},\ldots,
\underbrace{h_{k,2^k-1},\ldots,h_{k,2^k-1}}_{m_{2^k-1}}],$$
where ${\bf m}=(m_1,\ldots,m_{2^k-1})$ and $m_i$ is a nonnegative integer. For a binary linear $[n,k,d]$ code with $d(C^{\perp})\geq 2$, there exists a vector
${\bf m}=(m_1,\ldots,m_{2^k-1})$ such that $C$ is equivalent to the code $C_k({\bf m})$ with the generator matrix $G_k({\bf m})$.

\begin{prop}\label{prop-rank}
Let ${\bf m}=(m_1,m_2,\ldots,m_{2^k-1})$ and $m=\min\{m_1,m_2,\ldots,m_{2^k-1}\}$. Let $C$ be a binary linear $[n,k,d]$ code with the generator matrix $G_k({\bf m})$. Let $C'$ be a binary linear code with the generator matrix $G_k({\bf m'})$, where ${\bf m'}=(m_1-m,m_2-m,\ldots,m_{2^k-1}-m)$. If $d> m2^{k-1}$, then $C'$ is a binary linear $[n-m(2^k-1),k,d-m2^{k-1}]$ code.
\end{prop}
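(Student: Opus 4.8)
The plan is to exploit the fact that $G_k({\bf m})$ is, up to a permutation of its columns, the concatenation $[\underbrace{S_k|\cdots|S_k}_m\,|\,G_k({\bf m'})]$ used in Lemma~\ref{lemma-k-3}. Since $m=\min\{m_1,\ldots,m_{2^k-1}\}$, each distinct column $h_{k,i}$ occurs at least $m$ times in $G_k({\bf m})$; setting aside $m$ occurrences of every $h_{k,i}$ yields $m$ full copies of $S_k$, while the remaining $m_i-m$ occurrences of each $h_{k,i}$ are exactly the columns of $G_k({\bf m'})$. The length count is then immediate: $C'$ has length $\sum_{i=1}^{2^k-1}(m_i-m)=n-m(2^k-1)$.

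For the weight structure, I would invoke the one-weight property of the simplex code recalled just before Lemma~\ref{lemma-k-3}: for every nonzero ${\bf v}\in\F_2^k$ the word ${\bf v}S_k$ has weight exactly $2^{k-1}$, since ${\bf v}\cdot h_{k,i}=1$ for precisely $2^{k-1}$ of the nonzero vectors $h_{k,i}$. Consequently, for each nonzero coefficient vector ${\bf v}$,
$${\rm wt}\big({\bf v}\,G_k({\bf m})\big)=m\cdot{\rm wt}({\bf v}S_k)+{\rm wt}\big({\bf v}\,G_k({\bf m'})\big)=m2^{k-1}+{\rm wt}\big({\bf v}\,G_k({\bf m'})\big),$$
so the Hamming weights of corresponding codewords in $C$ and $C'$ differ by the constant $m2^{k-1}$ on every nonzero ${\bf v}$.

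The hypothesis $d>m2^{k-1}$ is precisely what makes the conclusion go through, and handling it is the main (if modest) point. From the displayed identity, every nonzero ${\bf v}$ satisfies $m2^{k-1}+{\rm wt}({\bf v}\,G_k({\bf m'}))={\rm wt}({\bf v}\,G_k({\bf m}))\ge d>m2^{k-1}$, whence ${\rm wt}({\bf v}\,G_k({\bf m'}))>0$ for all nonzero ${\bf v}$. Thus the map ${\bf v}\mapsto{\bf v}\,G_k({\bf m'})$ is injective, the rows of $G_k({\bf m'})$ are linearly independent, and $C'$ has dimension $k$. Finally, minimizing the weight identity over all nonzero ${\bf v}$ gives $d=m2^{k-1}+d'$, where $d'$ denotes the minimum distance of $C'$; hence $d'=d-m2^{k-1}$, and $C'$ is a binary linear $[\,n-m(2^k-1),\,k,\,d-m2^{k-1}\,]$ code, as claimed. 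The only subtlety is that without the strict inequality some nonzero ${\bf v}$ could map to the zero codeword of $C'$, collapsing the dimension; the bound $d>m2^{k-1}$ rules this out.
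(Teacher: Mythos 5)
Your proof is correct and follows essentially the same route as the paper: both decompose $G_k({\bf m})$ (up to column permutation) into $m$ copies of $S_k$ alongside $G_k({\bf m'})$ and exploit the one-weight property of the simplex code, with the hypothesis $d>m2^{k-1}$ ruling out a nonzero ${\bf v}$ vanishing on $G_k({\bf m'})$. The only difference is that you spell out the weight identity ${\rm wt}({\bf v}\,G_k({\bf m}))=m2^{k-1}+{\rm wt}({\bf v}\,G_k({\bf m'}))$ and the resulting distance computation explicitly, which the paper leaves implicit after verifying ${\rm rank}(G_k({\bf m'}))=k$.
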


\begin{proof}
We just verify that $C'$ has $2^k$ codewords, i.e., ${\rm rank}(G_k({\bf m'}))=k$. Without loss of generality, let
$$G_{k}({\bf m})=[\underbrace{S_k,\ldots,S_k}_{m},G_{k}({\bf m'})].$$
Assume that ${\rm rank}(G_k({\bf m'}))<k$. Since $S_k$ generates a one-weight code, we obtain $d=m2^{k-1}$, which is a contradiction. This completes the proof.
\end{proof}

The following is an interesting and useful result proposed by Araya et al.\cite{AHS-BLCD}.

\begin{lem}{\rm\cite{AHS-BLCD}}\label{lemma-mi}
Suppose that $(q,k_0)=(2,3)$ and $k\geq k_0$. If the code
$C_{k}({\bf m})$ has minimum weight at least $d$, then
$$2d-n\leq m_i\leq n-\frac{2^{k-1}-1}{2^{k-2}}d,$$
for each $i\in \{1,2,\ldots,2^k-1\}$, where ${\bf m}=(m_1,\ldots,m_{2^k-1})$ and $n=\sum_{i=1}^{2^k-1}m_i$.
\end{lem}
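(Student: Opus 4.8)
The plan is to convert the minimum-weight hypothesis into a pair of linear inequalities on each multiplicity $m_i$ by summing codeword weights over two complementary families of codewords and bounding each summand below by $d$.

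First I would record the weight formula. Writing $h_{k,j}$ for the $j$-th column type, every codeword of $C_k(\mathbf{m})$ has the form $\mathbf{a}\,G_k(\mathbf{m})$ for some $\mathbf{a}\in\F_2^k$, and inside the block of $m_j$ coordinates all equal to $h_{k,j}$ its entries are all equal to $\mathbf{a}\cdot h_{k,j}$. Hence its weight is
$$w(\mathbf{a})=\sum_{j:\,\mathbf{a}\cdot h_{k,j}=1} m_j.$$
Since the minimum weight is at least $d\ge 1$, no nonzero $\mathbf{a}$ can produce the zero codeword, so $w(\mathbf{a})\ge d$ for every nonzero $\mathbf{a}\in\F_2^k$; this is the only way the hypothesis enters the argument.

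Fix an index $i$ and set $h=h_{k,i}$. For the lower bound I would sum $w(\mathbf{a})$ over the $2^{k-1}$ vectors $\mathbf{a}$ with $\mathbf{a}\cdot h=1$ (all of which are nonzero). Interchanging the order of summation gives
$$\sum_{\mathbf{a}\cdot h=1} w(\mathbf{a})=\sum_{j} m_j\,\bigl|\{\mathbf{a}:\mathbf{a}\cdot h=1,\ \mathbf{a}\cdot h_{k,j}=1\}\bigr|.$$
For $j=i$ the inner count is $2^{k-1}$, while for $j\neq i$ the vectors $h$ and $h_{k,j}$ are distinct nonzero vectors, hence linearly independent over $\F_2$, so the two conditions cut the count to $2^{k-2}$. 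This gives $\sum_{\mathbf{a}\cdot h=1} w(\mathbf{a})=m_i2^{k-1}+(n-m_i)2^{k-2}=2^{k-2}(m_i+n)$, and bounding each of the $2^{k-1}$ summands below by $d$ yields $2^{k-2}(m_i+n)\ge 2^{k-1}d$, i.e. $m_i\ge 2d-n$.

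For the upper bound I would instead sum over the $2^{k-1}-1$ nonzero vectors $\mathbf{a}$ with $\mathbf{a}\cdot h=0$. Here $j=i$ contributes nothing (no $\mathbf{a}$ satisfies $\mathbf{a}\cdot h=0$ and $\mathbf{a}\cdot h=1$ at once), and each $j\neq i$ again contributes $2^{k-2}$ by the same independence count, so $\sum_{\mathbf{a}\cdot h=0,\ \mathbf{a}\neq 0} w(\mathbf{a})=(n-m_i)2^{k-2}$. Bounding the $2^{k-1}-1$ summands below by $d$ gives $(n-m_i)2^{k-2}\ge (2^{k-1}-1)d$, which rearranges to $m_i\le n-\frac{2^{k-1}-1}{2^{k-2}}d$. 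The only delicate point is the two-equation counting, which relies on $h$ and $h_{k,j}$ being independent for $j\neq i$; this is automatic over $\F_2$ for distinct nonzero columns and only requires $k\ge 2$, well within $k\ge k_0=3$.
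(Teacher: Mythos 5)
This lemma is quoted from \cite{AHS-BLCD}; the paper itself gives no proof of it, so your argument can only be judged on its own terms. The double-counting core of your proposal is the standard route and the counting is correct: for $j\neq i$ the distinct nonzero vectors $h_{k,i}$ and $h_{k,j}$ are automatically linearly independent over $\F_2$, so each of your two systems of two linear equations has exactly $2^{k-2}$ solutions, the two sums evaluate to $2^{k-2}(n+m_i)$ and $2^{k-2}(n-m_i)$ as you claim, and dividing by the number of summands ($2^{k-1}$, resp.\ $2^{k-1}-1$) gives precisely the two stated bounds.

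The gap is the sentence ``since the minimum weight is at least $d\geq 1$, no nonzero $\mathbf{a}$ can produce the zero codeword, so $w(\mathbf{a})\geq d$ for every nonzero $\mathbf{a}$.'' This is a non sequitur: the zero codeword is not a nonzero codeword, so a nonzero $\mathbf{a}$ with $\mathbf{a}\,G_k(\mathbf{m})=\mathbf{0}$ does not contradict the hypothesis; it only means ${\rm rank}(G_k(\mathbf{m}))<k$. Your inequalities genuinely need $w(\mathbf{a})\geq d$ for \emph{every} nonzero $\mathbf{a}\in\F_2^k$, which is strictly stronger than ``every nonzero codeword has weight at least $d$'' once the rank drops, and the step cannot be repaired from the stated hypothesis alone because the lemma as literally stated is then false: take $k=3$ and $m_j=1$ exactly for the three columns $(0,1,0)^T$, $(0,0,1)^T$, $(0,1,1)^T$. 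Then $C_3(\mathbf{m})$ is a $[3,2,2]$ code, so its minimum weight is $2=d$ and $n=3$, yet the conclusion would force $1=2d-n\leq m_i\leq n-\tfrac{3}{2}d=0$ for every $i$, which is absurd. The correct repair is to invoke the implicit hypothesis that $C_k(\mathbf{m})$ has dimension $k$, i.e.\ that $\mathbf{a}\mapsto\mathbf{a}\,G_k(\mathbf{m})$ is injective; this is exactly the situation in which the paper applies the lemma (in Propositions \ref{prop-7m+6}, \ref{prop-15m+} and \ref{prop-31m+}, the code $C_k(\mathbf{m})$ is equivalent to a given $[n,k,d]$ code of dimension $k$). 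With that hypothesis made explicit, the rest of your proof goes through verbatim.
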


\begin{prop}\label{prop-7m+6}
There is no binary linear $[7m+6,3,4m+3]$ code with one-dimensional hull for $m\geq 0$.
\end{prop}

\begin{proof}
Suppose that $C$ is a binary linear $[7m+6,3,4m+3]$ code with one-dimensional hull. Then $C$ is a Griesmer code and $d(C^\perp)\geq 2$.
Hence there is a vector ${\bf m}=(m_1,\ldots,m_{7})$ such that $C$ is equivalent to $C_3({\bf m})$.
By Lemma \ref{lemma-mi}, $m_i\geq m$. Let ${\bf m'}=(m_1-m,\ldots,m_{7}-m)$.
By Proposition \ref{prop-rank} and Lemma \ref{lemma-k-3}, the code $C_3({\bf m'})$ is a binary linear $[6,3,3]$ code with one-dimensional hull, which contradicts $d_{one}(6,3)=2$ (see \cite[Table 1]{Kim_preprint}).
Hence $d_{one}(7m+6,3)\leq 4m+2$.
\end{proof}

By the Griesmer bound and some known results, we obtain the following theorem.
\begin{theorem}
Let $n>3$ be an integer. Then we have
$$d_{one}(n,3)= \left\{
\begin{array}{ll}
 \left\lfloor \frac{4n}{7}\right\rfloor,~ &{\rm for}~ n\equiv 1,3,4,5~({\rm mod}~7)\vspace{1ex},\\
 \left\lfloor \frac{4n}{7}\right\rfloor-1,~& {\rm for}~ n\equiv 0,2,6~({\rm mod}~7).
\end{array}
\right.$$
\end{theorem}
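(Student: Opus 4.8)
The plan is to bound $d_{one}(n,3)$ from above by the Griesmer bound and from below by an explicit construction, organising everything as a case analysis on $n \bmod 7$. Writing $n = 7q + r$ with $0 \le r \le 6$, a direct evaluation of the Griesmer bound $n \ge d + \lceil d/2\rceil + \lceil d/4\rceil$ shows that any binary $[n,3,d]$ code satisfies $d \le \lfloor 4n/7\rfloor$ when $r \ne 2$ and $d \le \lfloor 4n/7\rfloor - 1$ when $r = 2$. Comparing with the claimed formula, the target value of $d_{one}(n,3)$ equals this Griesmer ceiling exactly when $r \in \{1,2,3,4,5\}$ and is one smaller when $r \in \{0,6\}$. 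Hence for $r \in \{1,2,3,4,5\}$ the upper bound is immediate: a one-dimensional hull code is in particular a linear code, so the Griesmer bound already yields $d_{one}(n,3) \le \lfloor 4n/7\rfloor$ (resp. $\lfloor 4n/7\rfloor - 1$ for $r=2$), matching the claim.

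For the two remaining residues I would prove the strict improvement separately. When $r = 6$ the Griesmer-optimal parameters are $[7q+6,3,4q+3]$, and Proposition~\ref{prop-7m+6} already forbids a one-dimensional hull there, giving $d_{one}(7q+6,3) \le 4q+2$. When $r = 0$ I would argue as follows. Suppose $C$ is a $[7q,3,4q]$ code with one-dimensional hull. It can have no zero coordinate, for deleting such a column would produce a $[7q-1,3,4q]$ code violating the Griesmer bound; thus $d(C^\perp) \ge 2$ and $C$ is equivalent to some $C_3(\mathbf m)$. Applying Lemma~\ref{lemma-mi} with $d = 4q$ and $n = 7q$ forces $2d - n = q \le m_i \le n - \tfrac32 d = q$, so $m_i = q$ for every $i$; that is, $C$ consists of $q$ copies of the simplex code $S_3$. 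Since $S_3 S_3^T = 0$ over $\F_2$ (the simplex code is self-orthogonal), we obtain $GG^T = 0$, so $C$ is self-orthogonal and has $3$-dimensional hull, a contradiction. Therefore $d_{one}(7q,3) \le 4q - 1$, which is the claimed value.

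For the lower bound I would exhibit one optimal code in each residue class at a small length and then propagate it by the simplex construction of Lemma~\ref{lemma-k-3}. Appending a single block $S_3$ turns a one-dimensional hull $[n_0,3,d_0]$ code into a one-dimensional hull $[n_0+7,3,d_0+4]$ code, and since $\lfloor 4(n+7)/7\rfloor = \lfloor 4n/7\rfloor + 4$, this exactly matches the additive shift of the claimed formula under $n \mapsto n+7$. It therefore suffices to supply a realizing code for the base lengths $n_0 \in \{4,5,6,7,8,9,10\}$, one in each residue class; these values $d_{one}(n_0,3)$ are precisely the claimed numbers by Kim's determination \cite{Kim_preprint} (equivalently Table~1), and each comes with an explicit one-dimensional hull code. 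Iterating Lemma~\ref{lemma-k-3} $m$ times then produces one-dimensional hull codes of parameters $[n_0+7m,3,d_0+4m]$ attaining the claimed bound for every larger $n$ in the same class, and together with the upper bounds this gives equality.

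The main obstacle is the residue $r = 0$, where the plain Griesmer bound is not tight for $d_{one}$ and one must establish the rigidity that the unique Griesmer-optimal $[7q,3,4q]$ code is the replicated simplex code. This is exactly where Lemma~\ref{lemma-mi} is indispensable, since it pins down $\mathbf m$ completely and forces self-orthogonality; the parallel obstruction at $r = 6$ is absorbed into the dedicated Proposition~\ref{prop-7m+6}. A secondary point of care is to dispose first of codes with a zero coordinate, where the $C_3(\mathbf m)$ representation is unavailable, which the short Griesmer argument above handles before Lemma~\ref{lemma-mi} is invoked.
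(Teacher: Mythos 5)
Your proposal is correct, and its overall skeleton coincides with the paper's: the Griesmer bound settles the upper bound for $n\equiv 1,2,3,4,5 \pmod 7$, Proposition~\ref{prop-7m+6} handles $n\equiv 6 \pmod 7$, and the lower bounds come from propagating Kim's small-length one-dimensional-hull codes (one per residue class) by repeatedly appending the simplex block via Lemma~\ref{lemma-k-3}, exactly as in the paper. The genuine divergence is the residue $n\equiv 0 \pmod 7$. The paper disposes of it through the LCD connection: by Lemma~\ref{lemma-0-1}, $d_{one}(7m,3)\leq d_{LCD}(7m+1,3)$, and the Harada--Saito value $d_{LCD}(7m+1,3)=4m-1$ from \cite{HS-BLCD-1-16} finishes the case. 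You instead give a self-contained rigidity argument: a putative $[7q,3,4q]$ code with one-dimensional hull has no zero coordinate (puncturing would violate Griesmer), so it is equivalent to some $C_3(\mathbf{m})$; Lemma~\ref{lemma-mi} then gives $q = 2d-n \leq m_i \leq n-\tfrac{3}{2}d = q$, forcing $\mathbf{m}=(q,\ldots,q)$, i.e.\ the $q$-fold replicated simplex code, whose generator matrix satisfies $GG^{T}=0$ (rows of $S_3$ have weight $4$ and pairwise intersections of size $2$), so the code is self-orthogonal with three-dimensional hull, a contradiction. This is a valid argument --- it is in fact the same technique the paper itself uses for $n \equiv 6 \pmod 7$ in Proposition~\ref{prop-7m+6}, pushed one step further to full rigidity. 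What your route buys is independence from the external LCD distance tables and from Lemma~\ref{lemma-0-1}, making the $k=3$ theorem self-contained modulo Lemma~\ref{lemma-mi} and Kim's base codes; what the paper's route buys is brevity and an illustration of the LCD-to-one-dimensional-hull bridge that is the central theme of its Section~3.
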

\begin{proof}
By the Griesmer bound, we have
$$d_{one}(n,3)\leq \left\{
\begin{array}{ll}
  \left\lfloor \frac{4n}{7}\right\rfloor, & {\rm if}\ n\equiv 0,1,3,4,5,6~({\rm mod}~ 7)\vspace{1ex}, \\
 \left\lfloor \frac{4n}{7}\right\rfloor-1, & {\rm if}\ n\equiv 2~({\rm mod}~ 7).
\end{array}
\right.$$

(i) Assume that $n\equiv 4~({\rm mod}~7)$, i.e., $n=7m+4$ for some integer $m$.
Applying Lemma \ref{lemma-k-3} to the binary linear $[4,3,2]$ code with one-dimensional hull (see \cite[Table 1]{Kim_preprint}), we have
$$d_{one}(7m+4,3)\geq 4m+2=\left\lfloor \frac{4(7m+4)}{7}\right\rfloor.$$
Combining with the Griesmer bound, we have $d_{one}(n,3)=\left\lfloor \frac{4n}{7}\right\rfloor$ for $n\equiv 4~({\rm mod}~7)$.
A similar argument works for $n\equiv 1,3,5~({\rm mod}~7)$.

(ii) Assume that $n\equiv 0~({\rm mod}~7)$, i.e., $n=7m$ for some integer $m$.
By Lemma \ref{lemma-0-1} and \cite[Theorem 5.1]{HS-BLCD-1-16},
$$d_{one}(7m,3)\leq d_{LCD}(7m+1,3)=\left\lfloor \frac{4(7m+1)}{7}\right\rfloor-1=4m-1=\left\lfloor \frac{4\times 7m}{7}\right\rfloor-1.$$
On the other hand, applying Lemma \ref{lemma-k-3} to the binary linear $[7,3,3]$ code with one-dimensional hull (see \cite[Table 1]{Kim_preprint}), we have
$$d_{one}(7m,3)\geq 4m-1=\left\lfloor \frac{4\times 7m}{7}\right\rfloor-1.$$
This implies that $d_{one}(n,3)=\left\lfloor \frac{4n}{7}\right\rfloor-1$ for $n\equiv 0~({\rm mod}~7)$.

(iii) Assume that $n\equiv 2~({\rm mod}~7)$, i.e., $n=7m+2$ for some integer $m$.
Applying Lemma \ref{lemma-k-3} to the binary linear $[9,3,4]$ code with one-dimensional hull (see \cite[Table 1]{Kim_preprint}), we have
$$d_{one}(7m+2,3)\geq 4m=\left\lfloor \frac{4(7m+2)}{7}\right\rfloor-1.$$
Combining with the Griesmer bound, we have $d_{one}(n,3)=\left\lfloor \frac{4n}{7}\right\rfloor-1$ for $n\equiv 2~({\rm mod}~7)$.

(iv) Assume that $n\equiv 6~({\rm mod}~7)$, i.e., $n=7m+6$ for some integer $m$.
Applying Lemma \ref{lemma-k-3} to the binary linear $[6,3,2]$ code with one-dimensional hull (see \cite[Table 1]{Kim_preprint}), we have
$$d_{one}(7m+6,3)\geq 4m+2=\left\lfloor \frac{4(7m+6)}{7}\right\rfloor-1.$$
Combining with Proposition \ref{prop-7m+6}, we have $d_{one}(n,3)=\left\lfloor \frac{4n}{7}\right\rfloor-1$ for $n\equiv 6~({\rm mod}~7)$.
\end{proof}

\begin{lem}\label{lemma-d(n-k)}
Let $k\geq 3$ and $n\geq 2^k$. Then $d_{one}(n,n-k)=2$.
\end{lem}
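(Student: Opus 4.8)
```latex
The plan is to prove both inequalities $d_{one}(n,n-k) \le 2$ and $d_{one}(n,n-k) \ge 2$ under the hypotheses $k \ge 3$ and $n \ge 2^k$. The upper bound is the easy direction and I would dispose of it first. A binary linear $[n,n-k]$ code $C$ has dual $C^\perp$ of dimension $k$, and by duality (using Lemma~\ref{lem-shorten-puncture}-style facts, or simply $\dim {\rm Hull}(C) = \dim {\rm Hull}(C^\perp)$) it is natural to argue on the dual side. Since $n \ge 2^k$, one cannot force the minimum distance of an $[n,n-k]$ code to be large; in fact the Singleton-type or Griesmer-type constraints, combined with the requirement that the hull be exactly one-dimensional, should cap $d_{one}(n,n-k)$ at a small constant. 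The cleanest route is to show directly that $d_{one}(n,n-k) \le 2$ by observing that an $[n,n-k]$ code with $d \ge 3$ would force its $k$-dimensional dual to have certain column-multiplicity structure that is incompatible with a one-dimensional hull when $n \ge 2^k$.

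The heart of the matter is the lower bound $d_{one}(n,n-k) \ge 2$, and here I would use the simplex-code padding machinery developed just above, namely Lemma~\ref{lemma-k-3} together with the matrices $G_k(\mathbf{m})$. The idea is to exhibit an explicit $[n, n-k]$ code with one-dimensional hull and minimum distance at least $2$. Equivalently, working on the dual side, I would construct a $k$-dimensional code $C^\perp$ of length $n$ with one-dimensional hull whose dual has minimum distance $\ge 2$ (i.e. $C^\perp$ has no repeated or zero coordinates in the relevant sense). Since $n \ge 2^k$, I can write $n = m(2^k-1) + r$ for a suitable small remainder and build the generator matrix by concatenating copies of the simplex matrix $S_k$ with a short base block $G$ chosen so that $GG^T$ has rank $k-1$. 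By Lemma~\ref{lemma-k-3}, concatenating copies of $S_k$ does not change $GG^T$, hence preserves the one-dimensional hull property while boosting the length to $n$; one then checks the resulting dual distance is at least $2$.

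Concretely, I would seek a short base code $C_0$ of length $r$ and dimension $k$ with $\dim {\rm Hull}(C_0) = 1$ and $d(C_0^\perp) \ge 2$, then apply Lemma~\ref{lemma-k-3} to pad it up to length $n$. The existence of such a seed code for small lengths is guaranteed because $n \ge 2^k$ leaves enough room: after subtracting multiples of $2^k-1$ the remainder lands in a range where an explicit one-dimensional-hull code is available (and can be verified by a finite check, as in the $k=3$ cases handled via \cite[Table 1]{Kim_preprint}). Passing back to the primal code $C$ of dimension $n-k$, the one-dimensional hull is preserved since $\dim {\rm Hull}(C) = \dim {\rm Hull}(C^\perp)$, and the minimum distance of $C$ is at least $2$ by construction (its dual has full support). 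Combined with the upper bound, this yields $d_{one}(n,n-k) = 2$.

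The main obstacle will be the lower bound, specifically guaranteeing the existence of the seed code for every residue of $n$ modulo $2^k-1$ while simultaneously maintaining both $\dim {\rm Hull} = 1$ and $d(C^\perp) \ge 2$. The condition $n \ge 2^k$ is exactly what is needed to ensure the remainder block is nonempty and can be chosen with the right rank and hull dimension; the delicate point is that padding with simplex blocks preserves $GG^T$ (hence the hull) but increases the dual distance, so I must confirm the seed itself already has dual distance at least $2$, which is where a short case analysis or an appeal to Lemma~\ref{lemma-mi} enters. Once the seed is secured, Lemma~\ref{lemma-k-3} does the rest mechanically.
```
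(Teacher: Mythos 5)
Your upper bound is salvageable but mis-attributed: if an $[n,n-k]$ code had $d\geq 3$, its parity-check matrix (a generator matrix of the $k$-dimensional dual) would need $n$ pairwise distinct nonzero columns in $\F_2^k$, which is impossible once $n\geq 2^k$; equivalently, the paper simply invokes the sphere-packing bound, $2^{n-k}\leq 2^n/(1+n)$, forcing $n+1\leq 2^k$. Note that the hull plays no role here --- the obstruction rules out \emph{all} $[n,n-k,3]$ codes, not just those with one-dimensional hull --- so your claim that the column structure is ``incompatible with a one-dimensional hull'' is a red herring, although the counting idea underneath is sound.

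The genuine gap is in your lower bound. Your plan reduces the construction to the existence of a seed code: for every $k\geq 3$ and every residue of $n$ modulo $2^k-1$ (within a suitable window of lengths), a $k$-dimensional code of that length with one-dimensional hull and full support, to be padded with simplex blocks via Lemma \ref{lemma-k-3}. You assert these seeds exist because they ``can be verified by a finite check,'' citing Kim's Table 1, but that table only covers small lengths and in effect only the smallest $k$; the lemma quantifies over \emph{all} $k\geq 3$, so you need on the order of $2^k-1$ seeds for each of infinitely many $k$, and no finite computation supplies them. You have deferred the entire content of the lemma to an unproven existence claim that is essentially as strong as the lemma itself. The paper avoids residues and seeds altogether by writing down one uniform generator matrix for the primal code, $G=\left[\,I_{n-k}\mid \mathbf{1}\ \mathbf{1}\ \mathbf{e}_1\ \mathbf{0}\cdots\mathbf{0}\,\right]$ (two all-ones columns, one weight-one column, and $k-3$ zero columns). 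Then $GG^T=\mathrm{diag}(0,1,\ldots,1)$ has rank $n-k-1$, so the hull is one-dimensional by Lemma \ref{lem-2.3}, and $d=2$ is immediate; this works simultaneously for every $k\geq 3$ and every $n\geq 2^k$. If you wish to keep your dual-side padding strategy, you must replace the ``finite check'' by an equally explicit uniform seed (a $k\times r$ matrix with the analogous Gram structure), at which point the detour through Lemma \ref{lemma-k-3} and the identity $\dim\mathrm{Hull}(C)=\dim\mathrm{Hull}(C^\perp)$ buys you nothing over the direct construction.
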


\begin{proof}
If $d(n,n-k)\geq 3$, then it follows from the sphere-packing bound that
$$2^{n-k}\leq \frac{2^n}{1+n}, ~i.e.~1+n\leq 2^k,$$
which contradicts $n\geq 2^k$.
Hence $d_{one}(n,n-k)\leq d(n,n-k)\leq 2$.
Consider the code $C$ with the following matrix
$$G=\left[
       \begin{array}{cccccccccc}
         1 & 0 & \ldots & 0 & 1 & 1 & 1 &0&\ldots&0 \\
         0 & 1 & \ldots & 0 & 1 & 1 & 0 &0&\ldots&0\\
         \vdots & \vdots & \ddots & \vdots & \vdots & \vdots & \vdots& \vdots & \ddots & \vdots  \\
         0 & 0 & \cdots & 1 & 1 & 1 & 0&0&\ldots&0
       \end{array}
     \right]_{(n-k)\times n}.
$$
Then $C$ has parameters $[n,n-k,2]$ and
$$GG^T=\left[
       \begin{array}{cccc}
         0 & 0 & \ldots & 0  \\
         0 & 1 & \ldots & 0 \\
         \vdots & \vdots & \ddots & \vdots \\
         0 & 0 & \cdots & 1
       \end{array}
     \right]_{(n-k)\times (n-k)}.$$
     It turns out that ${\rm rank}(GG^T)=n-k-1$. This implies that $C$ has one-dimensional hull.
     Hence $d_{one}(n,n-k)=2$. This completes the proof.
\end{proof}

Next, we consider the exact value of $d_{one}(n,k)$ for $k=n-3$.
\begin{theorem}
Let $n\geq 4$ be an integer. Then
$$d_{one}(n,n-3)=\left\{\begin{array}{ll}
                          4, & {\rm if}~n=4, \\
                          3, & {\rm if}~n=5, \\
                          2, & {\rm if}~n\geq 6.
                        \end{array}
 \right.$$
\end{theorem}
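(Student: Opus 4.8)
The plan is to split the argument by the size of $n$, reducing most cases to results already established and isolating $n=7$ as the single genuinely new case. First I would observe that for $n=4,5,6$ the dimension satisfies $n-3\in\{1,2,3\}$, so the claimed values are exactly $d_{one}(4,1)$, $d_{one}(5,2)$, and $d_{one}(6,3)$, read off from the formulas for $d_{one}(n,1)$, $d_{one}(n,2)$, and $d_{one}(n,3)$ proved earlier: $d_{one}(4,1)=4$ (the even case), $d_{one}(5,2)=\lfloor 10/3\rfloor=3$ since $5\equiv 5\pmod 6$, and $d_{one}(6,3)=\lfloor 24/7\rfloor-1=2$ since $6\equiv 6\pmod 7$. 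These three instances therefore require no new work.

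For $n\geq 8$ I would apply Lemma~\ref{lemma-d(n-k)} directly with $k=3$: since $n\geq 2^3=8$, the lemma gives $d_{one}(n,n-3)=2$, covering the whole tail in one step. Thus the entire theorem reduces to showing $d_{one}(7,4)=2$.

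For the upper bound at $n=7$, I would first note that $d_{one}(7,4)\leq d(7,4)$, and that the Griesmer bound forbids a $[7,4,4]$ code (it would force $n\geq 4+2+1+1=8$), so $d(7,4)=3$ with the Hamming code attaining it. It then suffices to rule out a $[7,4,3]$ code with one-dimensional hull. Here I would invoke the classical uniqueness of the binary $[7,4,3]$ Hamming code up to equivalence: its dual is the simplex $[7,3,4]$ code, which is self-orthogonal, so $\mathrm{Hull}(C)=C\cap C^\perp=C^\perp$ has dimension $3$, not $1$. Hence no $[7,4,3]$ code has one-dimensional hull and $d_{one}(7,4)\leq 2$. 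For the matching lower bound I would exhibit the explicit $[7,4,2]$ code with generator matrix $[\,I_4\mid \mathbf 1\mid \mathbf 1\mid \mathbf e_1\,]$, where $\mathbf 1=(1,1,1,1)^T$ and $\mathbf e_1=(1,0,0,0)^T$ (the same pattern used in the proof of Lemma~\ref{lemma-d(n-k)}); a direct computation gives $GG^T=\mathrm{diag}(0,1,1,1)$, so $\mathrm{rank}(GG^T)=3$ and by Lemma~\ref{lem-2.3} the hull is one-dimensional, while the repeated column $\mathbf e_1$ forces $d=2$.

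The main obstacle is precisely the $n=7$ case, since it is the only value of $n\geq 6$ for which an $[n,n-3,3]$ code exists at all: the sphere-packing bound permits $d=3$ exactly when $n\leq 7$, so for $n\geq 8$ a counting bound already kills $d=3$, but at $n=7$ it does not. Ruling it out therefore cannot be done by a counting argument and instead relies on the structural fact that the unique $[7,4,3]$ code is the Hamming code together with the computation that its hull is $3$-dimensional. All remaining cases are either immediate citations of the earlier $d_{one}(n,1),d_{one}(n,2),d_{one}(n,3)$ formulas or a single application of Lemma~\ref{lemma-d(n-k)}.
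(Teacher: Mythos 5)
Your proof is correct, and its skeleton matches the paper's: settle the small cases individually and dispatch all $n\geq 8$ with Lemma~\ref{lemma-d(n-k)} applied with $k=3$ (this tail step is identical in both). The difference lies in how the small cases are handled. The paper simply quotes the computational classification in \cite[Table 1]{Kim_preprint} for $d_{one}(4,1)=4$, $d_{one}(5,2)=3$, $d_{one}(6,3)=2$ and, crucially, $d_{one}(7,4)=2$. You instead read $n=4,5,6$ off the paper's own closed formulas for $d_{one}(n,1)$, $d_{one}(n,2)$, $d_{one}(n,3)$, and you give a structural proof for the one genuinely delicate case $n=7$: the Griesmer bound excludes a $[7,4,4]$ code, the unique $[7,4,3]$ code is the Hamming code, and its hull equals its dual (the self-orthogonal $[7,3,4]$ simplex code), which has dimension $3\neq 1$, so $d_{one}(7,4)\leq 2$; the matching lower bound is exactly the generator matrix of Lemma~\ref{lemma-d(n-k)} specialized to $n=7$, $k=3$, whose hull is one-dimensional by Lemma~\ref{lem-2.3}. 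Your route buys self-containedness --- no appeal to an external computer classification for the $[7,4]$ case --- at the cost of invoking the classical uniqueness of the binary $[7,4,3]$ Hamming code; the paper's route is shorter but rests on Kim's table. One small blemish: your justification that the exhibited $[7,4,2]$ code has minimum distance $2$ (``the repeated column forces $d=2$'') is not a valid general principle, since a generator matrix can have two equal columns while the code has minimum distance greater than $2$ (e.g., the rows $(1,1,1,1,0,0,0)$ and $(0,0,0,1,1,1,1)$ generate a distance-$4$ code with three equal columns). The fact itself is immediate here, though: the sum of the second and third rows of your matrix is a codeword of weight $2$, and no nonzero codeword has weight $1$ because the identity block forces every nonzero codeword to be nonzero on its information support, where each single row already has weight at least $3$.
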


\begin{proof}
From \cite[Table 1]{Kim_preprint}, $d_{one}(4,1)=4$, $d_{one}(5,2)=3$ and $d_{one}(6,3)=d_{one}(7,4)=2.$
It follows from Lemma \ref{lemma-d(n-k)} that $d_{one}(n,n-3)=2$ for $n\geq 8$. This completes the proof.
\end{proof}

\subsection{Optimal binary linear $[n,4]$ and $[n,n-4]$ codes with one-dimensional hull}

In this subsection, we study the exact values of $d_{one}(n,4)$ and $d_{one}(n,n-4)$.
\begin{prop}\label{prop-15m+}
There is no binary linear $\left[n,4,\left\lfloor \frac{8n}{15}\right\rfloor\right]$ code with one-dimensional hull for $n\equiv 0,1,5,7,8,9,12,14~({\rm mod}~ 15)$ and $n\geq 7$.
\end{prop}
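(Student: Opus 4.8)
The plan is to argue by contradiction, mirroring the structure of Proposition~\ref{prop-7m+6} but now for dimension $k=4$. Suppose $C$ is a binary linear $\left[n,4,\left\lfloor\frac{8n}{15}\right\rfloor\right]$ code with one-dimensional hull, where $n\equiv 0,1,5,7,8,9,12,14\pmod{15}$ and $n\geq 7$. First I would reduce to the standard form $C\simeq C_4({\bf m})$ with ${\bf m}=(m_1,\dots,m_{15})$: since $C$ is optimal it must have $d(C^\perp)\geq 2$ (a zero column, i.e.\ some $m_i=0$ corresponding to a repeated coordinate, would otherwise let me shorten using Lemma~\ref{lemma-k-3} and contradict optimality), so the column-multiplicity representation $G_4({\bf m})$ applies with $n=\sum_{i=1}^{15}m_i$. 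Writing $d=\left\lfloor\frac{8n}{15}\right\rfloor$, I then invoke Lemma~\ref{lemma-mi} with $(q,k_0)=(2,3)$, $k=4$, to pin each $m_i$ into the window
$$2d-n\leq m_i\leq n-\frac{2^{3}-1}{2^{2}}d=n-\tfrac{7}{4}d.$$

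Next I would carry out the ``peeling'' step. Let $m=\min_i m_i$ and ${\bf m'}=(m_1-m,\dots,m_{15}-m)$. By Proposition~\ref{prop-rank} (using $d>15m/2\cdot\frac{1}{?}$, i.e.\ $d>m\,2^{k-1}=8m$, which the bounds on $m_i$ should force) together with Lemma~\ref{lemma-k-3}, the peeled code $C_4({\bf m'})$ is a binary linear $\left[n-15m,\,4,\,d-8m\right]$ code that has one-dimensional hull if and only if $C$ does. The arithmetic content of the proposition is that for each residue class listed, after subtracting the simplex blocks the reduced parameters $\left(n-15m,\,d-8m\right)$ land on a small base case $(n_0,d_0)$ whose nonexistence as a one-dimensional-hull code is already known from \cite[Table 1]{Kim_preprint} or from the present Table~1. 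So the heart of the argument is a residue-by-residue computation: for each of $n\equiv 0,1,5,7,8,9,12,14\pmod{15}$ I compute $\left\lfloor\frac{8n}{15}\right\rfloor$, determine the forced value (or narrow range) of $m=\min_i m_i$ from Lemma~\ref{lemma-mi}, and check that the residual code $C_4({\bf m'})$ of length $n_0=n-15m<15$ with minimum distance $d_0=d-8m$ would be a one-dimensional-hull code with parameters that Table~1 rules out.

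The main obstacle I anticipate is the bookkeeping in the reduction: Lemma~\ref{lemma-mi} only bounds each $m_i$ individually, so it does not immediately pin down $m=\min_i m_i$ to a single value, and I must verify that the gap $n-\tfrac{7}{4}d - (2d-n) = 2n-\tfrac{15}{4}d$ is small enough (essentially $O(1)$ after using $d=\left\lfloor\frac{8n}{15}\right\rfloor$) that only one or two candidate residual lengths $n_0$ survive in each class. Concretely, $2n-\frac{15}{4}\cdot\frac{8n}{15}=2n-2n=0$ up to the floor correction, so the window width is controlled by the fractional part of $\frac{8n}{15}$, which depends precisely on the residue of $n$ modulo $15$; this is exactly why the proposition singles out those eight residue classes. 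The other delicate point is confirming the strict inequality $d>8m$ needed to apply Proposition~\ref{prop-rank}, so that the peeled generator matrix retains full rank $4$; if $d=8m$ could occur the peeling would collapse the dimension and the argument would break, so I would check in each class that the forced $m$ satisfies $8m<d$. Once these two checks are in place for each listed residue, the contradiction with the tabulated small-length values completes the proof.
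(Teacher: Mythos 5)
Your outline reproduces the paper's strategy -- reduce to the multiplicity form $C_4({\bf m})$ via $d(C^\perp)\geq 2$, bound the $m_i$ with Lemma~\ref{lemma-mi}, peel off simplex blocks via Proposition~\ref{prop-rank} and Lemma~\ref{lemma-k-3}, and contradict tabulated values of $d_{one}$ -- but the two checks you defer are exactly where the argument breaks, and the paper's proof is structured to avoid both. Write $n=15q+r$. Your prescription peels $\mu=\min_i m_i$ copies of $S_4$ and promises to verify $d>8\mu$ so that Proposition~\ref{prop-rank} applies. For $r=0$ this verification fails identically: Lemma~\ref{lemma-mi} gives $m_i\geq 2d-n=q$, so the sum constraint forces every $m_i=q$ and $d=8q=8\mu$. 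It can also fail for $r=1$ (fourteen $m_i=q$ and one $m_i=q+1$ is consistent with all constraints, and then $d=8q=8\mu$). So the classes $r=0,1$ -- which are in the statement -- cannot be handled by "peel the minimum." The paper instead peels only $q-1$ copies, which is legitimate in every listed class because Lemma~\ref{lemma-mi} gives $m_i\geq q-1$; then $d>8(q-1)$ holds automatically, and the residual code always has length $15+r$ (e.g.\ $[22,4,11]$ for $r=7$, $[16,4,8]$ for $r=1$), which is killed by Table~1. This also shows your claim that the residual length $n_0=n-15\mu$ is less than $15$ is wrong whenever $\mu=q-1$: the candidate residual lengths are $r$ \emph{and} $r+15$, which is precisely why the paper's Table~1 (lengths $14$--$30$) is needed and Kim's table alone does not suffice.

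The second gap is the justification of $d(C^\perp)\geq 2$. Your argument ("a zero column would contradict optimality") is sound only for the classes $r\in\{0,5,7,8,12,14\}$, where the putative code meets the Griesmer bound, so deleting a zero column would violate the Griesmer bound at length $n-1$. For $r\in\{1,9\}$ the code is \emph{not} a Griesmer code: deleting a zero column yields an $[n-1,4,d]$ code with one-dimensional hull where $n-1\equiv 0,8\pmod{15}$ and $d=\left\lfloor \frac{8(n-1)}{15}\right\rfloor$, parameters permitted by all classical bounds. The only contradiction available is the nonexistence assertion of this very proposition for the classes $0$ and $8$, so those classes must be settled first and then invoked -- the paper does this explicitly ("if $d(C^\perp)=1$, then there is a binary linear $[15m,4,8m]$ code with one-dimensional hull, which contradicts $d_{one}(15m,4)<8m$"), whereas your version is silently circular at this point. (A smaller slip: a zero column is not "some $m_i=0$"; the representation $C\simeq C_4({\bf m})$ presupposes $d(C^\perp)\geq2$, and removing zero columns has nothing to do with Lemma~\ref{lemma-k-3}.) Both gaps are repairable within your framework, but as written the proof fails exactly in the classes $r=0,1$ and, for the dual-distance step, $r=9$.
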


\begin{proof}
Assume that $n\equiv 0,5,7,8,12,14~({\rm mod}~ 15)$. If there is a binary linear $\left[n,4,\left\lfloor \frac{8n}{15}\right\rfloor\right]$ code $C$ with one-dimensional hull, then it can be checked that $C$ is a Griesmer code. It turns out that $d(C^\perp)\geq 2$.

Assume that $n\equiv 7~({\rm mod}~ 15)$, i.e., $n=15m+7$ for some integer $m$.
If $C$ is a binary linear $[15m+7,4,8m+3]$ code with one-dimensional hull for $m\geq 2$, then $d(C^\perp)\geq 2$ and there is a vector ${\bf m}=(m_1,\ldots,m_{15})$ such that $C$ is equivalent to $C_4({\bf m})$.
By Lemma \ref{lemma-mi}, we have $m_i\geq m-1$. Let ${\bf m'}=(m_1-m+1,\ldots,m_{15}-m+1)$.
Combining Proposition \ref{prop-rank} and Lemma \ref{lemma-k-3}, the code $C_4({\bf m'})$ is a binary linear $[22,4,11]$ code with one-dimensional hull, which contradicts that $d_{one}(22,4)=10$ (see Table 1).
Hence there is no binary linear $\left[n,4,\left\lfloor \frac{8n}{15}\right\rfloor\right]$ code with one-dimensional hull for $n\equiv 7~({\rm mod}~ 15)$. A similar argument works for $n\equiv 0,5,8,12,14~({\rm mod}~15)$.

Assume that $n\equiv 1~({\rm mod}~ 15)$, i.e., $n=15m+1$ for some integer $m$. Suppose that $C$ is a binary linear $[15m+1,4,8m]$ code with one-dimensional hull for $m\geq 1$. If $d(C^\perp)=1$, then there is a binary linear $[15m,4,8m]$ code with one-dimensional hull, which contradicts that $d_{one}(15m,4)<8m$. Hence $d(C^\perp)\geq 2$.
Then there is a vector ${\bf m}=(m_1,\ldots,m_{15})$ such that $C$ is equivalent to $C_4({\bf m})$. By Lemma \ref{lemma-mi}, we have $m_i\geq m-1$. Let ${\bf m'}=(m_1-m+1,\ldots,m_{15}-m+1)$.
Combining Proposition \ref{prop-rank} and Lemma \ref{lemma-k-3}, the code $C_4({\bf m'})$ is a binary linear $[16,4,8]$ code with one-dimensional hull, which contradicts that $d_{one}(16,4)=7$ (see Table 1).
Hence there is no binary linear $\left[n,4,\left\lfloor \frac{8n}{15}\right\rfloor\right]$ code with one-dimensional hull for $n\equiv 1~({\rm mod}~ 15)$. A similar argument works for $n\equiv 9~({\rm mod}~15)$.
\end{proof}

\begin{theorem}
Let $n\geq 7$ be an integer. Then we have
$$d_{one}(n,4)= \left\{
\begin{array}{ll}
  \left\lfloor \frac{8n}{15}\right\rfloor, & {\rm if}\ n\equiv 11,13~({\rm mod}~ 15)\vspace{1ex}, \\
 \left\lfloor \frac{8n}{15}\right\rfloor-1, & {\rm if}\ n\equiv 0,1,2,3,4,5,6,7,8,9,10,12,14~({\rm mod}~ 15).
\end{array}
\right.$$
\end{theorem}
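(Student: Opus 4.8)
The plan is to establish the theorem by combining an upper bound from the Griesmer bound with matching lower-bound constructions, treating the two residue families separately. First I would record the Griesmer upper bound: for a binary $[n,4,d]$ code one has $n\geq \sum_{i=0}^{3}\lceil d/2^i\rceil$, and a direct computation shows this forces $d_{one}(n,4)\leq d(n,4)\leq \left\lfloor \frac{8n}{15}\right\rfloor$ in all residue classes except $n\equiv 2~({\rm mod}~15)$, where the Griesmer bound already gives $\left\lfloor \frac{8n}{15}\right\rfloor-1$. This immediately handles the ceiling in the case $n\equiv 11,13~({\rm mod}~15)$ as an upper bound, and contributes the $-1$ upper bound for $n\equiv 2$.

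Next I would supply the lower bounds for the two ``good'' classes $n\equiv 11,13~({\rm mod}~15)$ by exhibiting small seed codes and propagating them with Lemma~\ref{lemma-k-3}. Writing $n=15m+11$ or $n=15m+13$, I would start from a short binary linear code with one-dimensional hull of length $11$ or $13$ (available from \cite[Table 1]{Kim_preprint} or Table~1) whose minimum distance equals $\left\lfloor \frac{8\cdot 11}{15}\right\rfloor$ or $\left\lfloor \frac{8\cdot 13}{15}\right\rfloor$, and then prepend $m$ copies of the simplex generator $S_4$. By Lemma~\ref{lemma-k-3} the resulting $[15m+11,4]$ (resp.\ $[15m+13,4]$) code still has one-dimensional hull and its minimum distance increases by exactly $m\cdot 2^{3}=8m$, which matches $\left\lfloor \frac{8n}{15}\right\rfloor$. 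This gives $d_{one}(n,4)\geq \left\lfloor \frac{8n}{15}\right\rfloor$, so together with the upper bound the value is determined in these classes.

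For the remaining residues $n\equiv 0,1,\dots,10,12,14~({\rm mod}~15)$ I would prove $d_{one}(n,4)=\left\lfloor \frac{8n}{15}\right\rfloor-1$. The upper bound $d_{one}(n,4)\leq \left\lfloor \frac{8n}{15}\right\rfloor-1$ comes from two sources: for $n\equiv 0,1,5,7,8,9,12,14~({\rm mod}~15)$ (and $n\geq 7$) it is exactly the content of Proposition~\ref{prop-15m+}, which rules out a code attaining $\left\lfloor \frac{8n}{15}\right\rfloor$; for $n\equiv 2~({\rm mod}~15)$ it is the Griesmer bound noted above; and for the leftover classes $n\equiv 3,4,6,10~({\rm mod}~15)$ I would derive the $-1$ upper bound either from the Griesmer/sphere-packing constraints or from Lemma~\ref{lemma-1} combined with known values of $d_{LCD}(n-1,3)$ and $d_{LCD}(n-1,4)$. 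The matching lower bound $d_{one}(n,4)\geq \left\lfloor \frac{8n}{15}\right\rfloor-1$ I would again obtain by the simplex-padding argument of Lemma~\ref{lemma-k-3}, seeding each class $n\equiv r~({\rm mod}~15)$ with an appropriate short code with one-dimensional hull drawn from Table~1 and adding $m$ copies of $S_4$.

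The main obstacle will be the bookkeeping across thirteen residue classes: for each class I must verify both that the chosen seed code genuinely has one-dimensional hull and the correct minimum distance, and that the simplex padding lands on exactly the claimed $\left\lfloor \frac{8n}{15}\right\rfloor-1$. The most delicate sub-cases are $n\equiv 3,4,6,10~({\rm mod}~15)$, which are \emph{not} covered by Proposition~\ref{prop-15m+}; here I expect to lean on Lemma~\ref{lemma-1} and the tabulated LCD distances, and I would double-check the boundary small-$n$ instances (near $n=7$) separately, since the asymptotic simplex construction may need the explicit Table~1 values to start. Once the upper and lower bounds coincide in every residue class, the stated formula for $d_{one}(n,4)$ follows.
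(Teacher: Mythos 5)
Your proposal is correct and follows essentially the same route as the paper's proof: the Griesmer bound for the basic upper bound, Proposition~\ref{prop-15m+} to cut that bound by one in the classes $n\equiv 0,1,5,7,8,9,12,14\pmod{15}$, and simplex padding via Lemma~\ref{lemma-k-3} applied to short seed codes from Table 1 and \cite[Table 1]{Kim_preprint} for the matching lower bounds. One computational slip worth fixing: a direct Griesmer computation for $k=4$ gives $d_{one}(n,4)\le\left\lfloor \frac{8n}{15}\right\rfloor-1$ for all of $n\equiv 2,3,4,6,10\pmod{15}$, not only $n\equiv 2$, so your ``delicate'' classes $3,4,6,10$ are already settled by the Griesmer bound alone (exactly as in the paper) and no appeal to $d_{LCD}$ values through Lemma~\ref{lemma-1} is needed there.
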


\begin{proof}
By the Griesmer bound, we have
$$d_{one}(n,4)\leq \left\{
\begin{array}{ll}
  \left\lfloor \frac{8n}{15}\right\rfloor, & {\rm if}\ n\equiv 0,1,5,7,8,9,11,12,13,14~({\rm mod}~ 15)\vspace{1ex}, \\
 \left\lfloor \frac{8n}{15}\right\rfloor-1, & {\rm otherwise.}
\end{array}
\right.$$

(i) Assume that $n\equiv 11~({\rm mod}~15)$, i.e., $n=15m+11$ for some integer $m$.
Applying Lemma \ref{lemma-k-3} to the binary linear $[11,4,5]$ code with one-dimensional hull (see \cite[Table 1]{Kim_preprint}), we have
$$d_{one}(15m+11,4)\geq 8m+5=\left\lfloor \frac{8(15m+11)}{15}\right\rfloor.$$
Combining with the Griesmer bound, we have $d_{one}(n,4)=\left\lfloor \frac{8n}{15}\right\rfloor$ for $n\equiv 11~({\rm mod}~15)$.
A similar argument works for $n\equiv 13~({\rm mod}~15)$.

(ii) Assume that $n\equiv 10~({\rm mod}~15)$, i.e., $n=15m+10$ for some integer $m$.
Applying Lemma \ref{lemma-k-3} to the binary linear $[10,4,4]$ code with one-dimensional hull (see \cite[Table 1]{Kim_preprint}), we have
$$d_{one}(15m+10,4)\geq 8m+4=\left\lfloor \frac{8(15m+10)}{15}\right\rfloor-1.$$
Combining with the Griesmer bound, we have $d_{one}(n,4)=\left\lfloor \frac{8n}{15}\right\rfloor-1$ for $n\equiv 10~({\rm mod}~15)$.
A similar argument works for $n\equiv 2,3,4,6~({\rm mod}~15)$.

(iii) Assume that $n\equiv 7~({\rm mod}~ 15)$, i.e., $n=15m+7$ for some integer $m$. Applying Lemma \ref{lemma-k-3} to the binary linear $[7,4,2]$ code with one-dimensional hull (see \cite[Table 1]{Kim_preprint}), we have
$$d_{one}(15m+7,4)\geq 8m+2=\left\lfloor \frac{8(15m+7)}{15}\right\rfloor-1.$$
Combining with Proposition \ref{prop-7m+6}, we have $d_{one}(n,4)=\left\lfloor \frac{8n}{15}\right\rfloor-1$ for $n\equiv 7~({\rm mod}~15)$.
 A similar argument works for $n\equiv 0,1,5,8,9,12,14~({\rm mod}~ 15)$.
\end{proof}

Next, we consider the exact value of $d_{one}(n,k)$ for $k=n-4$.

\begin{theorem}
Let $n\geq 5$ be an integer. Then
$$d_{one}(n,n-4)=\left\{\begin{array}{ll}
                          4, & {\rm if}~n=5, \\
                          3, & {\rm if}~6\leq n\leq 12, \\
                          2, & {\rm if}~n\geq 13.
                        \end{array}
 \right.$$
\end{theorem}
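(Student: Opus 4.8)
The goal is to establish three regimes for $d_{one}(n,n-4)$. The plan is to treat the three cases by combining the Griesmer bound, the sphere-packing bound via Lemma~\ref{lemma-d(n-k)}, and a small number of explicit or tabulated constructions.

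First I would handle the large-$n$ regime $n\geq 13$. Here I apply Lemma~\ref{lemma-d(n-k)} directly: taking $k=4$, the hypothesis $n\geq 2^4=16$ would immediately give $d_{one}(n,n-4)=2$. Since the theorem asserts this already for $n\geq 13$, the gap $13\leq n\leq 15$ must be closed separately. For those three values I expect to argue that $d(n,n-4)\leq 2$ directly from the sphere-packing bound (checking that $d\geq 3$ would force $1+n\leq 2^4=16$, which already fails once $n\geq 16$, so for $n=13,14,15$ I instead need either a finer Hamming/Griesmer estimate or the observation that the relevant optimal distance-$3$ codes are the shortened Hamming codes, whose hull dimension is not $1$). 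The cleanest route is to show $d_{one}(n,n-4)\leq 2$ for $13\leq n\leq 15$ by a short MAGMA-style or hand check that no $[n,n-4,3]$ code with one-dimensional hull exists, and then exhibit the $[n,n-4,2]$ code with one-dimensional hull using the explicit generator matrix $G$ from the proof of Lemma~\ref{lemma-d(n-k)}, whose Gram matrix $GG^T$ has rank $n-k-1$. This yields $d_{one}(n,n-4)=2$ for all $n\geq 13$.

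Next I would treat the boundary value $n=5$. Here $n-4=1$, so we are asking for $d_{one}(5,1)$. By the earlier theorem on $d_{one}(n,1)$, an odd length $n$ gives $d_{one}(n,1)=n-1$, hence $d_{one}(5,1)=4$, matching the claim. This case is essentially a citation of the already-proved $[n,1]$ result and needs no new work.

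The remaining and most delicate regime is $6\leq n\leq 12$, where I claim $d_{one}(n,n-4)=3$. The upper bound $d_{one}(n,n-4)\leq 3$ should follow from the Griesmer or sphere-packing bound applied to these small lengths (for $n$ near $12$ one checks $d=4$ is impossible), while I expect the lower bound $d_{one}(n,n-4)\geq 3$ to be the main obstacle: I must produce, for each of the seven lengths $n=6,\dots,12$, an explicit binary $[n,n-4,3]$ code whose hull is exactly one-dimensional. Rather than build all seven by hand, the efficient approach is to construct one small seed code with one-dimensional hull and distance $3$, then propagate it. Concretely, I would find a $[6,2,3]$ or $[7,3,3]$ code with one-dimensional hull (these appear in \cite[Table 1]{Kim_preprint} as $d_{one}(6,2)=3$ and $d_{one}(7,3)=3$) and use the length-increasing tools already developed, Proposition~\ref{prop-extend} and Proposition~\ref{prop-4.9}, together with zero-column and simplex-padding arguments, to step up to the required lengths while preserving distance $3$ and the one-dimensional hull; the tabulated values in Table~1 for $d_{one}(n,n-4)$ across this range serve as the verification. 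The hardest part is confirming that distance does not drop below $3$ and that the hull dimension stays pinned at exactly $1$ (not $0$ and not larger) under each step, which is exactly what Proposition~\ref{prop-short-puncture}, Proposition~\ref{prop-extend}, and Lemma~\ref{lem-2.3} are designed to control.
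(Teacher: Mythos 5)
Your overall skeleton matches the paper's proof, which is very short: it simply cites \cite[Table 1]{Kim_preprint} and Table~1 for $5\leq n\leq 15$ and then invokes Lemma~\ref{lemma-d(n-k)} for $n\geq 16$. Your treatment of $n=5$, of $n\geq 16$, and of the window $13\leq n\leq 15$ is correct; in fact your observation that any $[13,9,3]$, $[14,10,3]$, $[15,11,3]$ code is a (shortened) Hamming code, whose hull has dimension $2$, $3$, $4$ respectively, is a legitimate hand replacement for the paper's computer check on that window. The genuine problems are in the regime $6\leq n\leq 12$, where your proposal makes two claims that fail.

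First, the upper bound $d_{one}(n,n-4)\leq 3$ does \emph{not} follow from the Griesmer or sphere-packing bound when $n=6,7,8$: linear $[6,2,4]$, $[7,3,4]$ and $[8,4,4]$ codes exist (the last two being the simplex code and the extended Hamming code), and the Griesmer bound is met with equality by the parameters $[n,n-4,4]$ for every $n\geq 6$. What is true is that each of these three codes is unique up to equivalence and self-orthogonal (the $[8,4,4]$ code is self-dual), so none has one-dimensional hull; that is a classification fact, not a bound. (Even for $9\leq n\leq 12$ the bounds as you state them are insufficient: sphere-packing with packing radius $1$ only gives $n\leq 15$, so you must first puncture a putative $[n,n-4,4]$ code to an $[n-1,n-4,3]$ code and apply sphere-packing to the latter.) Second, your propagation machinery cannot produce the lower-bound codes: Proposition~\ref{prop-extend}, Proposition~\ref{prop-4.9}, zero-column padding, and the simplex padding of Lemma~\ref{lemma-k-3} all add coordinates while keeping the dimension fixed, whereas stepping from a $[6,2,3]$ or $[7,3,3]$ seed up to $[8,4,3],\ldots,[12,8,3]$ requires the dimension to grow with the length. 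The only dimension-raising tool in the paper is the building-up construction (Theorem~\ref{build-1}, Corollary~\ref{Methods}), which starts from LCD codes and gives no automatic control that the minimum distance stays at $3$. As written, your appeal to Table~1 as ``verification'' is therefore not a check of your construction but the entire argument for this range --- which is exactly what the paper does --- so you should either cite the tables outright for $6\leq n\leq 12$ or exhibit seven explicit generator matrices with their Gram-matrix ranks; the intermediate claims you interpose are false as stated.
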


\begin{proof}
According to Table 1 and \cite[Table 1]{Kim_preprint}, $d_{one}(5,1)=4,$ $d_{one}(n,n-4)=3$ for $6\leq n\leq 12$ and $d_{one}(n,n-4)=2$ for $13\leq n\leq 15$.
It follows from Lemma \ref{lemma-d(n-k)} that $d_{one}(n,n-4)=2$ for $n\geq 16$. This completes the proof.
\end{proof}

\subsection{Optimal binary linear $[n,5]$ and $[n,n-5]$ codes with one-dimensional hull}

First, we recall some known results on $d_{LCD}(n,5)$, which can be found in \cite{AH-BLCD-17-24,ter-11-19}.

\begin{center}
\begin{tabular}{cc|cc|cc}
\multicolumn{6}{c}{{\rm Table 2: Some known results on $d_{LCD}(n,5)$}}\\
\hline
   $n$ & $d_{LCD}(n,5)$& $n$ & $d_{LCD}(n,5)$&$n$ & $d_{LCD}(n,5)$\\
    \hline\hline
    $31m+1$ &$16m-1$ & $31m+13$&$16m+5$ &$31m+24$&$16m+11$ \\

    $31m+5$ &$16m+1$& $31m+17$&$16m+7$&$31m+25$&$16m+11$ \\

    $31m+6$ &$16m+1$& $31m+20$&$16m+9$&$31m+28$&$16m+13$ \\

    $31m+9$ &$16m+3$& $31m+21$&$16m+9$&$31m+29$&$16m+13$ \\
    \hline
\end{tabular}
\end{center}

Combining Corollary \ref{cor-3.9} and Table 2, we have the following table.

\begin{center}
\begin{tabular}{cc|cc|cc}
\multicolumn{6}{c}{{\rm Table 3: Some results on $d_{one}(n,5)$}}\\
\hline
   $n$ & $d_{one}(n,5)$& $n$ & $d_{one}(n,5)$&$n$ & $d_{one}(n,5)$\\
    \hline\hline
   $31m+2$ &$16m$& $31m+14$&$16m+6$&$31m+25$&$16m+12$ \\
   $31m+6$ &$16m+2$& $31m+18$&$16m+8$&$31m+26$&$16m+12$ \\
   $31m+7$ &$16m+2$& $31m+21$&$16m+10$&$31m+29$&$16m+14$ \\
 $31m+10$ &$16m+4$& $31m+22$&$16m+10$&$31m+30$&$16m+14$ \\
    \hline
\end{tabular}
\end{center}

\setlength{\arraycolsep}{1pt}

Assume that $n\geq 7$. By the Griesmer bound, we have
$$d_{one}(n,5)\leq \left\{
\begin{array}{ll}
  \left\lfloor \frac{16n}{31}\right\rfloor, & {\rm if}\ n\equiv 0,1,9,13,15,16,17,21,23,24,25,27,28,29,30~({\rm mod}~ 31)\vspace{1ex}, \\
 \left\lfloor \frac{16n}{31}\right\rfloor-1, & {\rm if}\ n\equiv 2,3,5,6,7,8,10,11,12,14,18,19,20,22,26~({\rm mod}~ 31)\vspace{1ex}, \\
 \left\lfloor \frac{16n}{31}\right\rfloor-2, & {\rm if}\ n\equiv 4~({\rm mod}~ 31).
\end{array}
\right.$$

\begin{prop}\label{prop-31m+}
There is no binary linear $\left[n,5,\left\lfloor \frac{16n}{31}\right\rfloor\right]$ code with one-dimensional hull for $n\equiv 9,24,17,28~({\rm mod}~ 31)$ and $n\geq 7$.
\end{prop}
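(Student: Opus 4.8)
The plan is to run the same simplex-stripping argument used for Propositions~\ref{prop-7m+6} and \ref{prop-15m+}, now with $k=5$, so that $2^{k}-1=31$ and $2^{k-1}=16$. Suppose toward a contradiction that a binary linear $[n,5,\lfloor 16n/31\rfloor]$ code $C$ with one-dimensional hull exists for some $n\equiv 9,17,24,28\pmod{31}$ with $n\geq 7$. For each of these residues $\lfloor 16n/31\rfloor$ equals the Griesmer bound, so $C$ is a Griesmer code; a zero coordinate would contradict length-optimality, hence $d(C^\perp)\geq 2$ and $C$ is equivalent to $C_5({\bf m})$ for some ${\bf m}=(m_1,\dots,m_{31})$. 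By Lemma~\ref{lemma-mi} every $m_i\geq 2d-n$, and by Proposition~\ref{prop-rank} together with Lemma~\ref{lemma-k-3} I may peel off $\mu=\min_i m_i$ copies of $S_5$ to obtain the shorter code $C_5({\bf m}')$ with ${\bf m}'={\bf m}-\mu{\bf 1}$, which has parameters $[n-31\mu,\,5,\,d-16\mu]$ and, crucially, still has one-dimensional hull.

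Next I would read off the numerics. Writing $n=31m+r$, one has $d=\lfloor 16n/31\rfloor=16m+4,\,16m+8,\,16m+12,\,16m+14$ for $r=9,17,24,28$, and correspondingly $2d-n=m-1,\,m-1,\,m,\,m$. For $r=24,28$ Lemma~\ref{lemma-mi} gives $\mu\geq m$, so peeling off $m$ copies lands on the fixed codes $[24,5,12]$ and $[28,5,14]$; these contradict $d_{one}(24,5)=11$ and $d_{one}(28,5)=13$ read from Table~1, finishing those two residues. For $r=9,17$ we only obtain $\mu\geq m-1$, forcing $\mu\in\{m-1,m\}$. When $\mu=m$ the reduction lands on $[9,5,4]$ (resp. $[17,5,8]$), contradicting $d_{one}(9,5)\leq 3$ from \cite{Kim_preprint} (resp. $d_{one}(17,5)=7$ from Table~1). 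The surviving sub-case $\mu=m-1$ reduces instead to the two fixed codes $[40,5,20]$ and $[48,5,24]$, where the reduced multiplicity vector ${\bf m}'$ has at least one zero entry.

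The crux, and the step I expect to be the main obstacle, is ruling out these last two base codes, since the lengths $40$ and $48$ fall outside the range of Table~1. The tool I would use is the parity reformulation of the hull: since $G_5({\bf m}')G_5({\bf m}')^{T}\equiv\sum_{l=1}^{31}(m'_l\bmod 2)\,h_{5,l}h_{5,l}^{T}\pmod 2$, Lemma~\ref{lem-2.3} says the code has one-dimensional hull if and only if this $5\times5$ matrix over $\F_2$ has rank exactly $4$. The constraints $0\leq m'_l\leq 2$ from Lemma~\ref{lemma-mi}, together with $\sum_l m'_l=40$ (resp. $48$) and at least one vanishing entry, cut the admissible parity patterns down to finitely many (up to the action of $GL(5,\F_2)$ on the columns), and a direct check should show none of them produces rank $4$. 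A weaker alternative is to invoke Lemma~\ref{lemma-1} or Lemma~\ref{lemma-0-1} to bound $d_{one}(40,5)$ and $d_{one}(48,5)$ by suitable $d_{LCD}$ values, but this yields the needed strict inequalities only if the relevant LCD distances are known to drop below $20$ and $24$, so I would rely on the rank computation as the primary route. All remaining residues then follow at once from the reduction and Table~1.
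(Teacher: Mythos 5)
Your handling of the residues $24$ and $28$ is exactly the paper's argument: by Lemma \ref{lemma-mi} every $m_i\geq 2d-n=m$, so Proposition \ref{prop-rank} and Lemma \ref{lemma-k-3} let you strip $m$ copies of $S_5$ and land on $[24,5,12]$ and $[28,5,14]$ codes with one-dimensional hull, contradicting $d_{one}(24,5)=11$ and $d_{one}(28,5)=13$ from Table 1. The divergence, and the gap, is in the residues $9$ and $17$. There the paper does not strip at all: a $[31m+9,5,16m+4]$ code is a Griesmer code whose minimum distance is divisible by $4$, hence is \emph{self-orthogonal} by the result cited as \cite{LCD-lp}, so its hull has dimension $5$, not $1$ --- an immediate contradiction, with the $r=17$ case treated analogously. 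You instead run the stripping argument, correctly notice that Lemma \ref{lemma-mi} only yields $\mu\geq m-1$, and are left with the base cases $[40,5,20]$ and $[48,5,24]$, which you propose to kill by a finite check of parity patterns.

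That finite check cannot work as described. The constraints you impose ($0\leq m'_l\leq 2$, the prescribed column sum, one vanishing entry) are necessary consequences of the minimum distance but do not encode it, and there \emph{are} admissible patterns whose Gram matrix has rank $4$: take the odd positions to be the four columns $e_1,e_2,e_3,e_4$, so that $\sum_l (m'_l\bmod 2)\,h_{5,l}h_{5,l}^T=\mathrm{diag}(1,1,1,1,0)$, and fill the remaining positions with eighteen $2$'s and nine $0$'s to reach $\sum_l m'_l=40$. Your check would therefore report rank-$4$ patterns and you could not conclude non-existence; what is missing is precisely the proof that no such pattern extends to a code of minimum distance $20$ (resp.\ $24$), and your acknowledged fallback via $d_{LCD}$ bounds is not established either. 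Two further slips compound this. First, your claim that all four residues give Griesmer codes is false for $r=17$: the Griesmer bound for $[\,\cdot\,,5,16m+8]$ is $31m+16$, one less than $n$, so $d(C^\perp)\geq 2$ is not justified by length-optimality there (a zero coordinate must be excluded separately, e.g.\ because puncturing it would give a Griesmer code with $4\mid d$, again self-orthogonal). Second, for $[48,5,24]$ the correct box from Lemma \ref{lemma-mi} is $0\leq m'_l\leq 3$, not $\leq 2$. The clean repair is the paper's: invoke the divisibility result of \cite{LCD-lp} directly for $r\equiv 9,17$ (it also disposes of $[40,5,20]$ at once), and reserve the stripping argument for $r\equiv 24,28$, where it suffices.
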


\begin{proof}
Assume that $n\equiv 9~({\rm mod}~ 31)$, i.e., $n=31m+9$ for some integer $m$.
If $C$ is a binary linear $[31m+9,5,16m+4]$ code with one-dimensional hull for $m\geq 0$, then $C$ is a Griesmer code. By \cite{LCD-lp}, $C$ is self-orthogonal since the minimum distance of $C$ is divisible by $4$. Hence there is no binary linear $\left[n,5,\left\lfloor \frac{16n}{31}\right\rfloor\right]$ code with one-dimensional hull for $n\equiv 9~({\rm mod}~ 31)$. A similar argument works for $n\equiv 17~({\rm mod}~31)$.

Assume that $n\equiv 24~({\rm mod}~ 31)$, i.e., $n=31m+24$ for some integer $m$.
If $C$ is a binary linear $[31m+24,5,16m+12]$ code with one-dimensional hull for $m\geq 1$, then $d(C^\perp)\geq 2$ and there is a vector ${\bf m}=(m_1,\ldots,m_{31})$ such that $C$ is equivalent to $C_5({\bf m})$.
By Lemma \ref{lemma-mi}, we have $m_i\geq m$. Let ${\bf m'}=(m_1-m,\ldots,m_{31}-m)$.
Combining Proposition \ref{prop-rank} and Lemma \ref{lemma-k-3}, the code $C_5({\bf m'})$ is a binary linear $[24,5,12]$ code with one-dimensional hull, which contradicts that $d_{one}(24,5)=11$ (see Table 1).
Hence there is no binary linear $\left[n,5,\left\lfloor \frac{16n}{31}\right\rfloor\right]$ code with one-dimensional hull for $n\equiv 24~({\rm mod}~ 31)$. A similar argument works for $n\equiv 28~({\rm mod}~31)$.
This completes the proof.
\end{proof}

\begin{theorem}
Let $n\geq 7$ be an integer. Then we have
$$d_{one}(n,5)= \left\{
\begin{array}{ll}
  \left\lfloor \frac{16n}{31}\right\rfloor, & {\rm if}\ n\equiv 21,25,29~({\rm mod}~ 31)\vspace{1ex}, \\
 \left\lfloor \frac{16n}{31}\right\rfloor-1, & {\rm if}\ n\equiv 2,3,5,6,7,9, 10,11,14,17, 18,19,20,22,24,26,28,30~({\rm mod}~ 31)\vspace{1ex}, \\
 \left\lfloor \frac{16n}{31}\right\rfloor-2, & {\rm if}\ n\equiv 4~({\rm mod}~ 31).
\end{array}
\right.$$
\end{theorem}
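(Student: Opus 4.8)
The plan is to prove the formula one residue class modulo $31$ at a time, in each case matching the displayed Griesmer-type upper bound against a lower bound. A large block of classes is already disposed of before we start: Corollary~\ref{cor-3.9} together with the values of $d_{LCD}(n-1,5)$ recorded in Table~2 pins $d_{one}(n,5)$ \emph{exactly} — this is precisely Table~3 — giving $\lfloor 16n/31\rfloor$ for $n\equiv 21,25,29$ and $\lfloor 16n/31\rfloor-1$ for $n\equiv 2,6,7,10,14,18,22,26,30\pmod{31}$. (For instance $n\equiv 21$ comes from $d_{LCD}(31m+20,5)=16m+9$, which is odd, so Corollary~\ref{cor-3.9} forces $d_{one}=16m+10=\lfloor 16n/31\rfloor$.) Hence the genuine work concerns only the residues $3,4,5,9,11,17,19,20,24,28$, and for each I would establish the two matching inequalities separately.

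For the upper bounds I would first read off the displayed Griesmer estimate, which already gives $d_{one}(n,5)\le\lfloor 16n/31\rfloor-1$ for $n\equiv 3,5,11,19,20$ and $d_{one}(n,5)\le\lfloor 16n/31\rfloor-2$ for $n\equiv 4$; in the last case the Griesmer bound shows no $[31m+4,5,16m+1]$ code exists at all, so $d_{one}(31m+4,5)\le 16m$. The four remaining classes $n\equiv 9,17,24,28$ are exactly those in which Griesmer only yields $\lfloor 16n/31\rfloor$, and there I would invoke Proposition~\ref{prop-31m+}: a one-dimensional-hull code meeting the Griesmer bound is impossible in precisely these classes (by self-orthogonality when the distance is divisible by $4$, for $n\equiv 9,17$, and by the $S_5$-reduction of Proposition~\ref{prop-rank} with the range restriction of Lemma~\ref{lemma-mi} to the tabulated contradictions $d_{one}(24,5)=11$ and $d_{one}(28,5)=13$ for $n\equiv 24,28$). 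This forces the drop to $\lfloor 16n/31\rfloor-1$.

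For the lower bounds the engine is Lemma~\ref{lemma-k-3}: a binary $[n_0,5,d_0]$ code with one-dimensional hull yields, by prepending $t$ copies of the simplex generator $S_5$, a $[n_0+31t,5,d_0+16t]$ code with one-dimensional hull. Thus it suffices, for each residue $r$, to exhibit a single seed code at the smallest admissible length $n_0\equiv r\pmod{31}$ whose minimum distance equals the target value there; padding then realizes the target for every larger $n$ in the class, and $d_0+16t$ always matches $\lfloor 16(n_0+31t)/31\rfloor-\delta_r$ because $\lfloor 16n/31\rfloor=16m+\lfloor 16r/31\rfloor$. For $r\ge 7$ the seeds $[9,5,3]$, $[11,5,4]$, $[17,5,7]$, $[19,5,8]$, $[20,5,9]$, $[24,5,11]$, $[28,5,13]$ are all read off from Table~1 (and Kim's table for $n\le 13$), so those classes close immediately against the upper bounds just obtained.

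The main obstacle is the three small residues $r=3,4,5$, where the smallest admissible length is $n_0=34,35,36$ — beyond the range $n\le 30$ covered by the tables — so no length-$r$ seed exists, and the required $[34,5,16]$, $[35,5,16]$, $[36,5,17]$ codes are not of the form $S_5\mid G$, whence Lemma~\ref{lemma-k-3} cannot manufacture them from anything shorter. I would therefore produce these three seeds directly, either by exhibiting explicit generator matrices verified by computer or by bootstrapping from a known binary LCD code of length $33,34,35$ via the building-up construction of Section~5 (Corollary~\ref{Methods}), and then pad as above. Once each of the ten classes has a lower bound meeting its upper bound, equality holds throughout and the case analysis is complete.
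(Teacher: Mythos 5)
Your overall architecture coincides with the paper's: Table 3 (i.e.\ Corollary~\ref{cor-3.9} applied to Table 2) settles the residues $21,25,29$ and $2,6,7,10,14,18,22,26,30 \pmod{31}$; the Griesmer bound together with Proposition~\ref{prop-31m+} gives the correct upper bounds for the remaining ten residues; and Lemma~\ref{lemma-k-3} applied to seed codes from Table 1 and Kim's table closes the lower bounds for $n\equiv 9,11,17,19,20,24,28\pmod{31}$ (your use of a $[19,5,8]$ seed instead of the paper's monotonicity step $d_{one}(31m+19,5)\geq d_{one}(31m+18,5)$ is an immaterial variation). The genuine gap is exactly at the point you call ``the main obstacle'': for $n\equiv 3,4,5\pmod{31}$ you correctly reduce everything to the existence of binary codes with one-dimensional hull and parameters $[34,5,16]$, $[35,5,16]$, $[36,5,17]$, but you never produce them. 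Saying you ``would'' exhibit generator matrices verified by computer, or bootstrap via Corollary~\ref{Methods}, is a plan rather than a proof; since the existence of these three seeds is the entire content of those cases, the argument is incomplete as written. The building-up fallback is not automatic either: it presupposes suitable binary LCD $[33,4]$, $[34,4]$, $[35,4]$ codes of large enough minimum distance and a choice of odd-weight dual vector ${\bf x}$ for which the appended row $(1,{\bf x})$ does not drop the distance below target, none of which is checked.

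It is worth noting that two of the three missing seeds are available from the paper's own toolkit with no search at all, which your proposal overlooks. For $n\equiv 3\pmod{31}$ no new code is needed: appending a zero coordinate leaves $GG^T$, hence the hull dimension and the minimum distance, unchanged, so $d_{one}(31m+3,5)\geq d_{one}(31m+2,5)=16m$, which already meets the Griesmer upper bound $\left\lfloor \frac{16(31m+3)}{31}\right\rfloor-1=16m$. For $n\equiv 4\pmod{31}$, the $[37,5,18]$ code with one-dimensional hull guaranteed by Table 3 must have two equal columns (pigeonhole, since $37>2^5$), and deleting such a pair preserves the hull by Proposition~\ref{prop-4.9}, yielding a $[35,5,\geq 16]$ code with one-dimensional hull, which then pads to $16m=\left\lfloor \frac{16(31m+4)}{31}\right\rfloor-2$. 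Only the residue-$5$ case genuinely requires an external exhibit, and there the paper supplies what you defer: an explicit $[36,5,17]$ code with one-dimensional hull taken from the codetables database, to which Lemma~\ref{lemma-k-3} is then applied.
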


\begin{proof}
(i) Assume that $n\equiv 3~({\rm mod}~31)$, i.e., $n=31m+3$ for some integer $m$. Then we have
$$d_{one}(31m+3,5)\geq d_{one}(31m+2,5)= 16m=\left\lfloor \frac{16(31m+3)}{31}\right\rfloor-1.$$
Combining with the Griesmer bound, we have $d_{one}(n,5)=\left\lfloor \frac{16n}{31}\right\rfloor-1$ for $n\equiv 3~({\rm mod}~31)$.

(ii) Assume that $n\equiv 4~({\rm mod}~31)$, i.e., $n=31m+4$ for some integer $m$.
By Table 3, there is a binary linear $[37,5,18]$ code $C$ with one-dimensional hull.
Let $G$ be the generator matrix of $C$. Since $37>32$, $G$ has the same two columns.
By Proposition \ref{prop-4.9}, there is a binary linear $[35,5,16]$ code with one-dimensional hull.
Applying Lemma \ref{lemma-k-3} to the binary linear $[35,5,16]$ code with one-dimensional hull, we have
$$d_{one}(31m+4,5)\geq 16m=\left\lfloor \frac{16(31m+4)}{31}\right\rfloor-2.$$
Combining with the Griesmer bound, we have $d_{one}(n,5)=\left\lfloor \frac{16n}{31}\right\rfloor-2$ for $n\equiv 4~({\rm mod}~31)$.

(iii) Assume that $n\equiv 5~({\rm mod}~31)$, i.e., $n=31m+5$ for some integer $m$.
Applying Lemma \ref{lemma-k-3} to the binary linear $[36,5,17]$ code with one-dimensional hull (see BKLC \cite{codetables}), we have
$$d_{one}(31m+5,5)\geq 16m+1=\left\lfloor \frac{16n}{31}\right\rfloor-1.$$
Combining with the Griesmer bound, we have $d_{one}(n,5)=\left\lfloor \frac{16n}{31}\right\rfloor-1$ for $n\equiv 5~({\rm mod}~31)$.

(iv) Assume that $n\equiv 9~({\rm mod}~31)$, i.e., $n=31m+9$ for some integer $m$.
Applying Lemma \ref{lemma-k-3} to the binary linear $[9,5,3]$ code with one-dimensional hull (see \cite[Table 1]{Kim_preprint}), we have
$$d_{one}(31m+9,5)\geq 16m+3=\left\lfloor \frac{16n}{31}\right\rfloor-1.$$
Combining with Proposition \ref{prop-31m+}, we have $d_{one}(n,5)=\left\lfloor \frac{16n}{31}\right\rfloor-1$ for $n\equiv 9~({\rm mod}~31)$.

(v) Assume that $n\equiv 11~({\rm mod}~31)$, i.e., $n=31m+11$ for some integer $m$.
Applying Lemma \ref{lemma-k-3} to the binary linear $[11,5,4]$ code with one-dimensional hull (see \cite[Table 1]{Kim_preprint}), we have
$$d_{one}(31m+11,5)\geq 16m+4=\left\lfloor \frac{16n}{31}\right\rfloor-1.$$
Combining with the Griesmer bound, we have $d_{one}(n,5)=\left\lfloor \frac{16n}{31}\right\rfloor-1$ for $n\equiv 11~({\rm mod}~31)$.

(vi) Assume that $n\equiv 17~({\rm mod}~31)$, i.e., $n=31m+17$ for some integer $m$.
Applying Lemma \ref{lemma-k-3} to the binary linear $[17,5,7]$ code with one-dimensional hull (see Table 1), we have
$$d_{one}(31m+17,5)\geq 16m+7=\left\lfloor \frac{16n}{31}\right\rfloor-1.$$
Combining with Proposition \ref{prop-31m+}, we have $d_{one}(n,5)=\left\lfloor \frac{16n}{31}\right\rfloor-1$ for $n\equiv 17~({\rm mod}~31)$.

(vii) Assume that $n\equiv 19~({\rm mod}~31)$, i.e., $n=31m+19$ for some integer $m$. Then we have
$$d_{one}(31m+19,5)\geq d_{one}(31m+18,5)= 16m+8=\left\lfloor \frac{16(31m+19)}{31}\right\rfloor-1.$$
Combining with the Griesmer bound, we have $d_{one}(n,5)=\left\lfloor \frac{16n}{31}\right\rfloor-1$ for $n\equiv 19~({\rm mod}~31)$.

(viii) Assume that $n\equiv 20~({\rm mod}~31)$, i.e., $n=31m+20$ for some integer $m$. Applying Lemma \ref{lemma-k-3} to the binary linear $[20,5,9]$ code with one-dimensional hull (see Table 1), we have
$$d_{one}(31m+20,5)\geq 16m+9=\left\lfloor \frac{16n}{31}\right\rfloor-1.$$
Combining with the Griesmer bound, we have $d_{one}(n,5)=\left\lfloor \frac{16n}{31}\right\rfloor-1$ for $n\equiv 20~({\rm mod}~31)$.

(ix) Assume that $n\equiv 24~({\rm mod}~31)$, i.e., $n=31m+24$ for some integer $m$. Applying Lemma \ref{lemma-k-3} to the binary linear $[24,5,11]$ code with one-dimensional hull (see Table 1), we have
$$d_{one}(31m+24,5)\geq 16m+11=\left\lfloor \frac{16n}{31}\right\rfloor-1.$$
Combining with Proposition \ref{prop-31m+}, we have $d_{one}(n,5)=\left\lfloor \frac{16n}{31}\right\rfloor-1$ for $n\equiv 24~({\rm mod}~31)$. A similar argument works for $n\equiv 28~({\rm mod}~31)$.

Combining (i)-(ix) and Table 3, we obtain the desired result.
\end{proof}

\setlength{\arraycolsep}{4pt}

\begin{theorem}
Let $n\geq 7$ be an integer. Then we have
$$d_{one}(n,5)\geq \left\{
\begin{array}{ll}
 \left\lfloor \frac{16n}{31}\right\rfloor-1, & {\rm if}\ n\equiv 13,15,23,27~({\rm mod}~ 31)\vspace{1ex}, \\
 \left\lfloor \frac{16n}{31}\right\rfloor-2, & {\rm if}\ n\equiv 0,1,8,12,16~({\rm mod}~ 31).
\end{array}
\right.$$
\end{theorem}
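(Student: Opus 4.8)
The plan is to establish every one of these lower bounds by the same seed-and-inflate scheme that drove the preceding theorem. The engine is Lemma~\ref{lemma-k-3} with $k=5$: a base $[L,5,d]$ code with one-dimensional hull yields a $[31m+L,\,5,\,16m+d]$ code with one-dimensional hull for every $m\ge 0$. Because $\lfloor 16(31m+L)/31\rfloor=16m+\lfloor 16L/31\rfloor$, any bound of the shape $d=\lfloor 16L/31\rfloor-\epsilon$ attained at a seed length $L$ propagates without loss to $d_{one}(31m+L,5)\ge 16m+\lfloor 16L/31\rfloor-\epsilon$. Hence it suffices to exhibit, for each residue, a single seed code whose length is congruent to that residue modulo $31$ and whose minimum distance already equals the claimed value.

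First I would handle the four residues $13,15,23,27$, where the target is $\lfloor 16n/31\rfloor-1$. The seeds are the $[13,5,5]$, $[15,5,6]$, $[23,5,10]$ and $[27,5,12]$ codes with one-dimensional hull; the last three are recorded in Table~1 (for instance $d_{one}(15,5)=6$) and the first in \cite[Table~1]{Kim_preprint}. Feeding each into Lemma~\ref{lemma-k-3} reproduces the stated bound, e.g.\ $d_{one}(31m+15,5)\ge 16m+6=\lfloor 16(31m+15)/31\rfloor-1$. The residues $8,12,16$ with target $\lfloor 16n/31\rfloor-2$ are treated identically, using the seeds $[8,5,2]$ and $[12,5,4]$ from \cite[Table~1]{Kim_preprint} together with $[16,5,6]$ from Table~1.

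The only residues needing extra care are $0$ and $1$, because the natural seed lengths $31$ and $32$ lie beyond the tabulated range $14\le n\le 30$. Here I would invoke Proposition~\ref{prop-4.9}: duplicating a coordinate leaves $GG^{T}$, and therefore the hull dimension, unchanged, while it cannot lower the minimum distance. Starting from the Table~1 codes $[29,5,14]$ and $[30,5,14]$ with one-dimensional hull and appending two identical columns produces $[31,5,\ge 14]$ and $[32,5,\ge 14]$ codes with one-dimensional hull, which are exactly the seeds for residues $0$ and $1$. Lemma~\ref{lemma-k-3} then gives $d_{one}(31m,5)\ge 16m-2$ and $d_{one}(31m+1,5)\ge 16m-2$, matching the two displayed values.

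I expect the substance of the argument to reside entirely in the seed step, since the inflation by $S_5$ is purely mechanical. These residue classes are precisely the ones the previous theorem could not pin down, and the reason surfaces here: at each seed length no one-dimensional-hull code reaches the Griesmer value (for instance, matching the upper bound for residue $13$ would require a $[13,5,6]$ code with one-dimensional hull, but $d_{one}(13,5)=5$), so the best available seeds fall one or two below, and that deficit is exactly what propagates. The one genuinely delicate point is verifying, for residues $0$ and $1$, that column duplication preserves one-dimensionality of the hull and keeps the distance at $14$, which is precisely the content of Proposition~\ref{prop-4.9}.
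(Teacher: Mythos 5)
Your proposal is correct and follows essentially the same route as the paper: pick a seed code with one-dimensional hull in each residue class from Table~1 and \cite[Table 1]{Kim_preprint}, then inflate by copies of the simplex code via Lemma~\ref{lemma-k-3}; indeed, your seeds for residues $8,12,13,15,16,23,27$ coincide with the paper's set $S$. The only divergence is at residues $0$ and $1$: the paper simply lists $(31,14)$ and $(32,14)$ in $S$ with the same table citation, even though those lengths lie outside both tables, whereas you derive $[31,5,\geq 14]$ and $[32,5,\geq 14]$ codes from the Table~1 entries $[29,5,14]$ and $[30,5,14]$ by duplicating a coordinate (Proposition~\ref{prop-4.9}), which preserves $GG^{T}$ and cannot decrease the distance. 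This is a legitimate and arguably cleaner justification of those two seeds, in the same spirit as the paper's own use of Proposition~\ref{prop-4.9} in the residue-$4$ case of the preceding theorem, so your argument is sound throughout.
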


\begin{proof}
By Table 1 and \cite[Table 1]{Kim_preprint}, there is a binary linear $[n,5,d]$ code with one-dimensional hull for $(n,d)\in S=\{(8,2),(12,4),(13,5),(15,6),(16,6),(23,10),(27,12),(31,\\14),(32,14)\}$.
For $(n_0,d_0)\in S$, applying Lemma \ref{lemma-k-3} to the binary linear $[n_0,5,d_0]$ code with one-dimensional hull, we obtain the desired result.
\end{proof}

Next, we consider the exact value of $d_{one}(n,k)$ for $k=n-5$.

\begin{theorem}
Let $n\geq 6$ be an integer. Then
$$d_{one}(n,n-5)=\left\{\begin{array}{ll}
                          6, & {\rm if}~n=6, \\
                          4, & {\rm if}~n\in \{7,8,10,12\}, \\
                          3, & {\rm if}~n\in \{9,11,13,14,\ldots,27\},\\
                          2, & {\rm if}~n\geq 28.
                        \end{array}
 \right.$$
\end{theorem}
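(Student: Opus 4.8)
The plan is to partition the range $n\ge 6$ into four regimes, using one uniform lower-bound construction throughout and matching it against a regime-dependent upper bound.

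For the lower bound I would recycle the generator matrix exhibited in the proof of Lemma~\ref{lemma-d(n-k)}: specializing to $k=5$, that matrix defines, for every $n$ with $n-5\ge 3$, a binary $[n,n-5,2]$ code whose $GG^T$ has rank $n-6$, hence a one-dimensional hull by Lemma~\ref{lem-2.3}. This gives $d_{one}(n,n-5)\ge 2$ across the board and, in particular, settles the ``$\ge 2$'' half of every case whose claimed value is $2$. The matching upper bounds are then obtained regime by regime. For $6\le n\le 13$, where $n-5$ ranges over $1,\dots,8$, all values are already recorded in \cite[Table~1]{Kim_preprint}; for $14\le n\le 30$ they are read off Table~1 of the present paper (equal to $3$ for $14\le n\le 27$ and to $2$ for $28\le n\le 30$); and for $n\ge 32$, Lemma~\ref{lemma-d(n-k)} with $k=5$ gives $d_{one}(n,n-5)=2$ at once, since then $n\ge 2^5$.

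The single case that escapes both the tables and Lemma~\ref{lemma-d(n-k)} is $n=31$, and this is where the real work lies: $31=2^5-1$ is precisely the length at which the sphere-packing estimate behind Lemma~\ref{lemma-d(n-k)} degenerates, because the perfect Hamming code exists there and leaves room for $d=3$. To close this gap I would argue that $d(31,26)=3$ (from the code tables \cite{codetables}), so since $d_{one}(31,26)\le d(31,26)$ it suffices to exclude the value $3$. Every binary $[31,26,3]$ code is perfect and hence equivalent to the Hamming code $H$, whose dual $H^{\perp}$ is the simplex $[31,5,16]$ code; the latter is self-orthogonal, as recalled just before Lemma~\ref{lemma-k-3}, so $H^{\perp}\subseteq H$ and ${\rm Hull}(H)=H\cap H^{\perp}=H^{\perp}$ has dimension $5$. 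Since hull dimension is an equivalence invariant, no $[31,26,3]$ code can have one-dimensional hull, so $d_{one}(31,26)\le 2$, and together with the construction this forces $d_{one}(31,26)=2$. I expect this $n=31$ endpoint to be the main obstacle, since it is the only value where one cannot fall back on sphere packing or a precomputed table and must instead exploit the rigidity (uniqueness) of the Hamming code and the self-orthogonality of its dual.
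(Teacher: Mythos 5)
Your proposal is correct, and its skeleton coincides with the paper's proof: the values for $6\le n\le 30$ are read off Kim's table \cite{Kim_preprint} and Table~1, the case $n\ge 32$ is settled by Lemma~\ref{lemma-d(n-k)} with $k=5$, and the lower bound $d_{one}(31,26)\ge 2$ comes from the generator-matrix construction in the proof of Lemma~\ref{lemma-d(n-k)}. The only genuine divergence is the upper bound at the exceptional length $n=31$. The paper dispatches it in one line via Lemma~\ref{lemma-1}: $d_{one}(31,26)\le d_{LCD}(30,25)=2$, quoting the known LCD value from \cite{ter-11-19}. You instead bound $d_{one}(31,26)\le d(31,26)=3$ and exclude the value $3$ by a rigidity argument: a binary $[31,26,3]$ code is perfect, hence permutation-equivalent to the Hamming code (its $5\times 31$ parity-check matrix has distinct nonzero columns, so its columns are exactly all nonzero vectors of $\F_2^5$), whose hull is the self-orthogonal simplex code $H^\perp$ of dimension $5$; since hull dimension is invariant under permutation equivalence, no $[31,26,3]$ code has one-dimensional hull. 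Both arguments are sound, and each leans on one external fact ($d_{LCD}(30,25)=2$ for the paper, $d(31,26)=3$ from \cite{codetables} for you). What yours buys is independence from the LCD literature and a structural explanation of why $n=31=2^5-1$ is the exceptional length: it is precisely where the perfect Hamming code exists and makes the sphere-packing argument of Lemma~\ref{lemma-d(n-k)} degenerate. What the paper's buys is brevity, since Lemma~\ref{lemma-1} is already available and the needed LCD value is tabulated.
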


\begin{proof}
By Table 1 and \cite[Table 1]{Kim_preprint}, $d_{one}(6,1)=6,$ $d_{one}(n,n-5)=4$ for $n\in \{7,8,10,12\}$, $d_{one}(n,n-5)=3$ for $n\in \{9,11,13,14,\ldots,27\}$ and $d_{one}(n,n-5)=2$ for $n\in \{28,29,30\}$.
By Lemma \ref{lemma-1}, $d_{one}(31,26)\leq d_{LCD}(30,25)=2$ (see \cite{ter-11-19}).
By the proof of Lemma \ref{lemma-d(n-k)}, there is a binary linear $[31,26,2]$ code with one-dimensional hull. Thus $d_{one}(31,26)=2$.
It follows from Lemma \ref{lemma-d(n-k)} that $d_{one}(n,n-5)=2$ for $n\geq 32$. This completes the proof.
\end{proof}

\section{Conclusion}

We have studied some properties of binary linear codes with one-dimensional hull, and have established the connection between them and binary LCD codes. We have completely determined the values of $d_{one}(n,k)$ and $d_{one}(n,n-k)$ for $k\leq 5$ except for some special types. For there special types, good lower bounds on $d_{one}(n,5)$ are given. Furthermore, we have extended Kim's result~\cite{Kim_preprint} on $d_{one}(n,k)$ ($1 \le k \le n \le 13)$ to lengths up to 30.\\

\noindent{\bf Acknowledgement:} The research of Shitao Li and Minjia Shi is supported by the National Natural Science Foundation of China (12071001). The research of Jon-Lark Kim is supported by the National Research Foundation of Korea (NRF) Grant funded by the Korea government (NRF-2019R1A2C1088676).\vspace{0.3cm}\\
\noindent{\bf Conflict of Interest:}
The authors have no conflicts of interest to declare that are relevant to the content of this paper.\vspace{0.3cm}\\
\noindent{\bf Data Deposition Information:} The data that support the findings of this study are available at {\tt https://cicagolab.sogang.ac.kr/cicagolab/2660.html}.

\end{document}